\newif\ifreport\reporttrue
\newcommand{\age}{\Delta}
\newcommand{\EE}{\mathbb{E}}
\def\blue{\color{black}}
\def\red{\color{red}}
\newcommand{\ignore}[1]{}
\newtheorem{lemma}{Lemma}
\newtheorem{theorem}{Theorem}
\newtheorem{corollary}{Corollary}
\theoremstyle{definition}
\begin{document}
\title{Optimal Sampling of Brownian Motion for Real-time Monitoring} 
\title{Beyond the Age-of-information: Causal Sampling and  Estimation of Brownian Motion} 

\title{From Age-of-information to MMSE: Real-time Sampling of Brownian Motion} 

\title{Age-of-Information and Tracking of Wiener Process over Channel with Random Delay}

\title{Remote Estimation of the Wiener Process over a Channel with Random Delay}

\title{ Sampling of the Wiener Process for Remote Estimation over a Channel with Random Delay}







\IEEEoverridecommandlockouts
\author{Yin Sun, \emph{Member, IEEE}, Yury Polyanskiy, \emph{Senior Member, IEEE}, and Elif Uysal, \emph{Senior Member, IEEE}

\thanks{Yin Sun was supported in part by NSF grant CCF-18-13050 and ONR grant N00014-17-1-2417. 
Yury Polyanskiy was supported in part by National Science Foundation under Grant No CCF-17-17842, and by the NSF Center for Science of Information (CSoI), under grant agreement CCF-09-39370.
Elif Uysal was supported by TUBITAK through BIDEB fellowship 849044  
and grant number 117E215. This paper was presented in part at IEEE ISIT 2017 \cite{SunISIT2017}.}
\thanks{Y. Sun is with the Department of Electrical and Computer Engineering, Auburn University, Auburn, AL 36849 USA (e-mail: yzs0078@auburn.edu).}
\thanks{Y. Polyanskiy is with the Department of Electrical Engineering and Computer Sciences, Massachusetts Institute of Technology, Cambridge,
MA 02139 USA (e-mail: yp@mit.edu).}
\thanks{E. Uysal is with the Department of Electrical and Electronics Engineering, Middle East Technical
University, Ankara, 06800 Turkey (e-mail: uelif@metu.edu.tr).}
}
\maketitle
\begin{abstract}

In this paper, we consider a problem of sampling a
Wiener process, with samples forwarded to a remote estimator
over  a channel that is modeled as a queue.
The estimator reconstructs an estimate of the  \emph{real-time} signal value from causally received samples.
We study the optimal \emph{online} sampling strategy that
minimizes the mean square estimation error subject to a  sampling rate constraint. 
We prove that the optimal sampling strategy
is a threshold policy, and find the optimal threshold.
This threshold is determined by how much the Wiener process varies during the random service time and the maximum allowed sampling rate. 
Further,
if the sampling times are independent of the observed Wiener process, 
the above sampling problem for minimizing the estimation error is equivalent to a sampling problem for minimizing the age of information. This reveals an interesting connection between the age of information and remote estimation error.
Our comparisons show that the estimation error achieved by the optimal sampling policy can be much smaller than those of age-optimal sampling, zero-wait sampling, and periodic sampling. 

\end{abstract}

\begin{IEEEkeywords}
Sampling, remote estimation, age of information, Wiener process, queueing system.
\end{IEEEkeywords}
\section{Introduction}\label{sec_intro}

In many real-time control  and cyber-physical systems (e.g., airplane/vehicular control, sensor networks, smart grid, stock trading, robotics, etc.), 
 timely updates about the system  status are critical for state estimation and decision making. For example, real-time knowledge about the location, orientation, speed, and acceleration of motor vehicles is imperative for autonomous driving, and  fresh information about stock price, financial news, and interest-rate movements is of paramount importance for stock trading.
In  \cite{Song1990,KaulYatesGruteser-Infocom2012}, the {age of information} was introduced to measure the timeliness of status samples about a remote source. Suppose that the $i$-th status sample is generated at the source at time $S_i$  ($0\leq S_1 \leq S_2\leq\ldots$) and is delivered to the destination at time $D_i$. At time $t$, the freshest sample available at the destination was generated at time $U(t) = \max\{S_i: D_i\leq t\}$. The \emph{age of information}, or simply the \emph{age}, is a function of time $t$ that is defined as
\begin{align}\label{eq_age_definition}
\age(t)=t- U(t) = t -\max\{S_i: D_i\leq t\},
\end{align} 
which is the {time difference} between the generation time $U(t)$ of the freshest received sample and the current time $t$. 
Hence, a small age $\age(t)$ implies that there exists a fresh status sample at the destination. As plotted in Fig. \ref{fig:age1}, the age increases linearly over time and is reset to a smaller value once a new sample is received. Hence, the age  $\age(t)$ exhibits a sawtooth pattern. Recently, the age of information concept has received significant attention, because of the rapid growth of real-time
applications. A number of status update policies have been developed to keep the age $\age(t)$ small, subject to constraints on limited network resources, e.g., \cite{2012ISIT-YatesKaul,Yates2016,KaulYatesGruteser-Infocom2012,2015ISITYates,Kam-status-ISIT2013,Bacinoglu2015,SunInfocom2016,report_AgeOfInfo2016,Sun2019,Costa2016,Kam2016,Bedewy2016,Bedewy2017,Kadota2016,Bacinoglu2017}.




\begin{figure}
\centering
\begin{tikzpicture}[scale=0.21]
\draw [<-|] (0,11)  -- (0,0) -- (14.5,0);
\draw [|->] (15,0) -- (30.5,0) node [below] {\small$t$};
\draw (-2,12) node [right] {\small$\age(t)$};
\draw
(0,0) node [below] {\small$S_0$}
(8,0) node [below] {\small$S_1$}
(17,0) node [below] {\small$S_{j-1}$}
(24,0) node [below] {\small$S_{j}$};
\fill
(8,0)  circle[radius=4pt]
(17,0)  circle[radius=4pt]
(24,0)  circle[radius=4pt]
(0,0)  circle[radius=4pt]
(4,0)  circle[radius=4pt]
(11,0)  circle[radius=4pt]
(20,0)  circle[radius=4pt]
(27,0)  circle[radius=4pt];
\draw
(4,0) node [below] {\small$D_0$}
(11,0) node [below] {\small$D_1$}
(21,0) node [below] {\small$D_{j-1}$}
(27,0) node [below] {\small$D_{j}$};
\draw[ thick, domain=0:4] plot (\x, {\x+3})  -- (4, {4});
 \draw [ thick, domain=4:11] plot (\x, {\x})  -- (11, {3});
\draw[ thick, domain=11:14] plot (\x, {\x-8});
\draw[ thick, domain=16:20] plot (\x, {\x-14}) -- (20, {3});
\draw[ thick, domain=20:27] plot (\x, {\x-17}) -- (27, {3});
\draw[ thick, domain=27:29] plot (\x, {\x-24});
\draw[  thin,dashed,  domain=0:4] plot (\x, {\x})-- (4, 0);
\draw[  thin,dashed,  domain=8:11] plot (\x, {\x-8})-- (11, 0);
\draw[  thin,dashed,  domain=17:20] plot (\x, {\x-17})-- (20, 0);
\draw[  thin,dashed,  domain=24:27] plot (\x, {\x-24})-- (27, 0);
\end{tikzpicture}
\caption{Evolution of the age of information $\age(t)$ over time.}
\label{fig:age1}
\end{figure}


In practice, the state of  many systems is in the form of a time-varying signal $W_t$, such as the location of a vehicle, the wind speed of a hurricane, and the price chart of a stock. 
These signals may change slowly at some time
 and vary more dynamically later.
Hence, the time difference between the source and destination, described by the age $\age(t)=t- U(t)$, cannot fully characterize the amount of change $W_t - W_{U(t)}$ in the signal value. This motivated us to go beyond the age of information concept and  investigate \emph{timely updates of signal samples}. 

%





%
%
%
%
%
%
%
%

Let us consider a status update system with two terminals (see Fig.~\ref{fig_model}): 
An observer taking samples from a continuous-time signal $W_t$ which is modeled as a Wiener process, and an estimator, whose goal is to provide the best-guess $\hat W_t$ for the real-time signal value $W_t$ at all time $t$.\footnote{This paper focuses on a Wiener process signal model, which has some nice properties that were used our analysis. 
An important future direction is to study more general signal models. A recent result along this direction was reported  in \cite{Ornee2019}.}
The two terminals are connected by a channel 
that transmits time-stamped samples of the form $(S_i, W_{S_i})$ according to a first-in, first-out (FIFO) order, where $S_i$ is the sampling time of the $i$-th sample and $W_{S_i}$ is the value of the $i$-th sample. The samples are stored in a queue while they wait to be served by the channel. 
We assume that the samples experience \emph{i.i.d.} random transmission times over the channel, which may be caused by fading, interference, collisions, retransmissions, and etc. 
As such, the channel is modeled as a FIFO queue with \emph{i.i.d.} service time $Y_i$ satisfying $\mathbb{E}[Y_i^2]<\infty$, where $Y_i\geq 0$ is the transmission time of sample $i$. This queueing model is helpful to understand the robustness of remote estimation and control systems under occasionally  slow service. 
For example, a UAV flying by a WiFi access point may run into a communication outage caused by interference from the access point. The resulting delay in packet reception may affect the stability of UAV flight control and navigation \cite{Azari2019}.

Let $G_i$ be the service starting time of sample $i$ such that $S_i \leq G_i$. 
The delivery time of sample $i$ is $D_i = G_i + Y_i$. The initial value $W_0= 0$ is known by the estimator for free, which is represented by $S_0 =D_0 = 0$. At any time $t$, the estimator forms an estimate $\hat W_t$ using the samples received up to time $t$. Similar with \cite{SOLEYMANI20161}, we assume that the estimator neglects the implied knowledge when no sample was delivered. 
The quality of remote estimation is evaluated via the time-average mean-square error (MSE) between $W_t$ and $\hat W_t$:
\begin{align}\label{eq_MSE11111}
\mathsf{mse} = \limsup_{T\to \infty} {1\over T} \EE \left[\int_0^T (W_t - \hat W_t)^2 dt\right]. 
\end{align}
The sampler is subject to a sampling rate constraint
\begin{align} \label{eq_rate_constraint}
\liminf_{n\to\infty} {1\over n} \mathbb{E}[S_n] \ge {1\over f_{\max}}, 
\end{align}
where $f_{\max}$ is the maximum allowed sampling rate.
In practice, the sampling rate constraint \eqref{eq_rate_constraint} is imposed when there is a need to reduce the cost (e.g., energy consumption) for the transmission, storage, and processing of the samples. 

\begin{figure}
\centering
\includegraphics[width=0.47\textwidth]{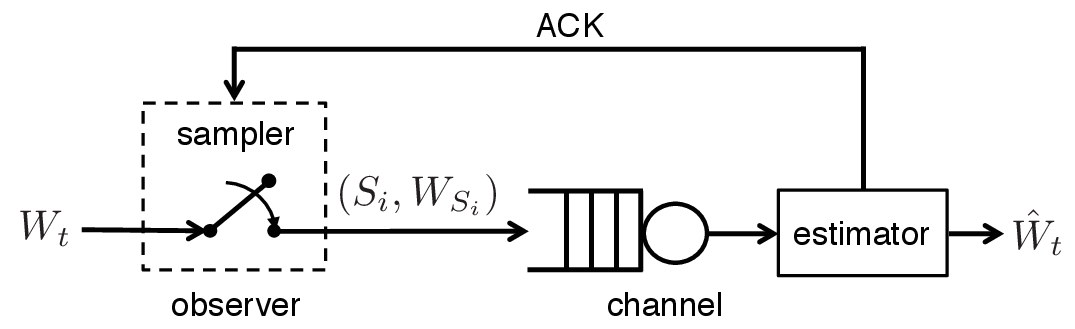}   
\caption{System model.}\vspace{-0.0cm}
\label{fig_model}
\end{figure}    
Our goal is to find an optimal \emph{online} sampling strategy that minimizes the MSE in \eqref{eq_MSE11111} by choosing the sampling times $S_i$ \emph{causally} subject to the sampling rate constraint \eqref{eq_rate_constraint}. 
The  contributions of this paper are summarized as follows:
\ifreport
\begin{itemize}
\else
\begin{itemize}[leftmargin=*]
\fi
\item We formulate the optimal sampling problem as a constrained continuous-time Markov decision problem with a continuous state space, and solve it exactly. 
We prove that the optimal online sampling strategy for the Wiener process is a threshold policy\footnote{
A sampling policy is said to be a \emph{threshold} policy if a new sample is taken when a threshold condition is satisfied. Examples of threshold policies can be found in Section \ref{sec:samplingpolicies}.}, and find the optimal threshold.
Let $Y$ be a random variable with the same distribution as $Y_i$. The optimal threshold is determined by $f_{\max}$ and $W_Y$, where $W_Y$ is a random variable that has the same distribution as the amount of signal variation $(W_{t+Y} - W_t)$ that occurs during the random service time $Y$. The random variable $W_Y$  indicates a tight \emph{coupling},   in the optimal sampling policy, between the source process $W_t$ and the service time $Y$. 


\item Our threshold-based optimal sampling policy has an important difference from the previous threshold-based sampling policies studied in, e.g., \cite{Astrom2002,Hajek2002,Hajek2008,Lipsa2009, Nuno2011,Molin2012, nayyar2013,Rabi2012,Basar2014,GaoACC2015, GaoCDC2015, Gao2016,GaoACC2016,GaoAsilomar2016,GaoCDC2016,CHAKRAVORTY2016,ChakravortyISIT2017,Chakravorty2018}: {\blue 
We have proven that it is better to not take any new sample when the server is busy. Consequently, the threshold should be \emph{disabled} when the server is busy and \emph{reactivated} once the server becomes available again.} This is one of the reasons that sampling policies that ignore the state of the server, such as periodic sampling, can have a large estimation error.



%



\item 
 We show, perhaps surprisingly, even in the absence of a sampling rate constraint
(i.e., $f_{\max}=\infty$), the optimal sampling strategy is \emph{not} zero-wait sampling in which a new sample is generated once the previous sample is delivered; {\blue rather, it is optimal to wait for a certain amount of time after the previous sample is delivered, and then take the next sample.}

\item Our study reveals a relationship between the age of information and the estimation error of Wiener process: If the sampling times $S_i$ are independent of the observed Wiener process (i.e., the sampling times $S_i$ are chosen without using any information about the Wiener process), the MSE in \eqref{eq_MSE11111} is exactly equal to the time-average expectation of the age of information $\limsup_{T\to \infty} {1\over T} \EE [\int_0^T \Delta(t) dt]$. Hence, the sampling problem for minimizing the MSE is equivalent to a sampling problem for minimizing the age, where the second problem was solved recently in \cite{SunInfocom2016,report_AgeOfInfo2016,Sun2019}. If the sampling times $S_i$ are chosen based on causal knowledge of the Wiener process, the age-optimal sampling policy (i.e., the sampling policy that minimizes the time-average expected age of information) no longer minimizes the MSE: Specifically, in the age-optimal sampling policy, a new sample is taken only when the \emph{age of information} $\age(t)$, or equivalently the \emph{expected estimation error} $\mathbb{E}[(W_t - \hat W_t)^2]$, is no smaller than a threshold; while in the MSE-optimal sampling policy, a new sample is taken only when \emph{the instantaneous estimation error} $|W_t - \hat W_t|$ is no smaller than  a threshold. The asymptotics of the MSE-optimal and  age-optimal sampling policies at  long/short service time or low/high sampling rates are also studied. 

%

%

\item Our theoretical and numerical comparisons show that the MSE of the optimal sampling policy can be much smaller than those of age-optimal sampling, periodic sampling, and the zero-wait sampling policy described in \eqref{eq_Zero_wait} below. In particular, periodic sampling is far from optimal when the sampling rate is sufficiently low or sufficiently high; age-optimal  sampling is far from optimal when the sampling rate is sufficiently low; periodic sampling, age-optimal sampling, and zero-wait sampling policies are all far from optimal if {the service time distribution is heavy-tailed}.

\end{itemize}


The rest of this paper is organized as follows. In Section \ref{sec:relate}, we discuss some related work. In Section \ref{sec:model}, we describe
the system model and the formulation of the optimal sampling problem. In Section \ref{sec:solution}, {we present the solution to this problem} and compare it with some other sampling policies. In Section \ref{sec_proof}, we describe the proof of this optimal solution. Some simulation results are provided in Section \ref{sec:simulation}. 

\section{Related Work}\label{sec:relate}

Lossy source coding and the rate-distortion function of the Wiener process was studied in, e.g., \cite{Berger1970,Kipnis2016}, where the rate-distortion function represents the optimal tradeoff between the source coding rate and the distortion  (i.e., MSE) for recovering of the Wiener process. The goal of these studies is to reconstruct the realization of Wiener process during a past time interval with a small distortion, which can be regarded  as an \emph{offline} signal reconstruction problem. 
This differs from our \emph{online} signal tracking problem, where the real-time value of the Wiener process is estimated at the destination from causally received samples. 

This paper is related to recent studies on the age of information, e.g., \cite{2012ISIT-YatesKaul,Yates2016,KaulYatesGruteser-Infocom2012,2015ISITYates,Kam-status-ISIT2013,Bacinoglu2015,SunInfocom2016,report_AgeOfInfo2016,Sun2019,Costa2016,Kam2016,Bedewy2016,Bedewy2017,Kadota2016,Bacinoglu2017}. As mentioned above in Section \ref{sec_intro}, a connection between the age of information and the estimation error of Wiener process is characterized in this paper. The estimation error of Wiener process was also mentioned in \cite{2012ISIT-YatesKaul,Yates2016} as an illustration of the age of information, where age-based sampling was not studied and the condition that the sampling times are independent of the Wiener process was used implicitly. 
Recently, a relationship between a nonlinear function of the age of information and the estimation error of the Ornstein-Uhlenbeck (OU) process was found in a follow-up study of the current paper \cite{Ornee2019}.

This paper can also be considered as a contribution to 
the rich literature on remote estimation, e.g., \cite{Astrom2002,Hajek2002,Hajek2008,Lipsa2009, Nuno2011,Molin2012, nayyar2013,Rabi2012,Basar2014,GaoACC2015, GaoCDC2015, Gao2016,GaoACC2016,GaoAsilomar2016,GaoCDC2016,CHAKRAVORTY2016,ChakravortyISIT2017,Chakravorty2018}, by including a queueing model. In \cite{Astrom2002}, \r{A}str\"{o}m and Bernhardsson showed that a threshold-based sampling method, in which a new sample is taken once the amount of signal variation since the previous sample has reached a threshold, can achieve a smaller estimation error than the traditional periodic sampling method with the same sampling rate. Such a threshold-based sampler and a Kalman-like estimator have been proven to be jointly optimal for minimizing the remote estimation error of several discrete-time signal processes in \cite{Hajek2002,Hajek2008,Lipsa2009,Nuno2011,Molin2012,nayyar2013}. The sampling and remote estimation of continuous-time signal processes were considered in \cite{Rabi2012,Basar2014}, where it was shown that a threshold-based sampling policy is optimal for minimizing the estimation error of the Wiener process, and the optimal threshold was found. In \cite{Hajek2002,Hajek2008,Lipsa2009,Nuno2011,Molin2012,nayyar2013,Rabi2012,Basar2014}, it was assumed that the samples are transmitted from the sampler to the estimator over a perfect channel that is error and noise free. 
There are some recent studies that used explicit channel models. In \cite{GaoACC2015, GaoCDC2015, Gao2016}, Gao et. al. considered the optimal transmission scheduling and remote estimation of an \emph{i.i.d.} discrete-time source process $X_t$ over an additive noise channel. Because of the noise, the transmitter needs to encode its message before transmission. In \cite{GaoACC2015, GaoCDC2015}, it was shown that, for a class of symmetric probability distributions on the source symbol $X_t$, if the transmission scheduling policy is threshold-based, i.e., a new coded packet is sent if $|X_t|$ is no smaller than a threshold, then the optimal encoder and decoder are piecewise affine. In \cite{Gao2016}, it was shown that if (i) the encoder and decoder (i.e., estimator) are piecewise affine, and (ii) the transmission scheduler satisfies some technical assumption, the optimal transmission scheduling policy is threshold-based. Some extensions of this research were reported in  \cite{GaoACC2016,GaoAsilomar2016,GaoCDC2016}.
In \cite{CHAKRAVORTY2016,ChakravortyISIT2017,Chakravorty2018}, Chakravorty and Mahajan considered optimal transmission scheduling and remote estimation over a few channel models, where it was proved that a threshold-based transmission policy and a Kalman-like estimator are jointly optimal for minimizing the remote estimation error. 

The closest study to this paper are \cite{Rabi2012,Basar2014}, where the optimal sampler of the Wiener process was designed in the absence of queueing and random service time (i.e., $Y_i = 0$). As we will see later, the queueing model affects the structure of the optimal sampler. Specifically, the sampler should disable the threshold when there is a packet in service and {reactivate} the threshold after all previous packets are delivered. A novel proof procedure is developed in the current paper to find the optimal sampler design.

\section{System Model and Problem Formulation}\label{sec:model}


\subsection{MMSE Estimation Policy}
At time $t$, the information available to the estimator contains two part: (i) $M_t = \{(S_i, W_{S_i}, D_i): D_i \leq t\}$, which contains the sampling time $S_i$, sample value $W_{S_i}$, and delivery time $D_i$ of the samples delivered by time $t$ and (ii) the facts that no sample has been received after the latest sample delivery time $\max\{D_i: D_i\leq t\}$. Similar with \cite{SOLEYMANI20161}, we assume that the estimator neglects the implied knowledge when no sample was delivered. In this case, the minimum mean-square error (MMSE) estimation policy  \cite{Poor:1994} is given by 
(see Appendix \ref{app_estimation} for its derivation)  
\begin{align}\label{eq_esti}
\hat{W}_{t}  = & \mathbb{E}[{W}_t | M_t ] \nonumber\\
                = & W_{S_i}, ~\text{if}~t\in[D_i,D_{i+1}),~i=0,1,2,\ldots, 
\end{align}
which is illustrated in Fig. \ref{fig_signal}(b). 

\begin{figure}[t!]
    \centering
    \begin{subfigure}[t]{0.5\textwidth}
        \centering
        \includegraphics[width=0.6\textwidth]{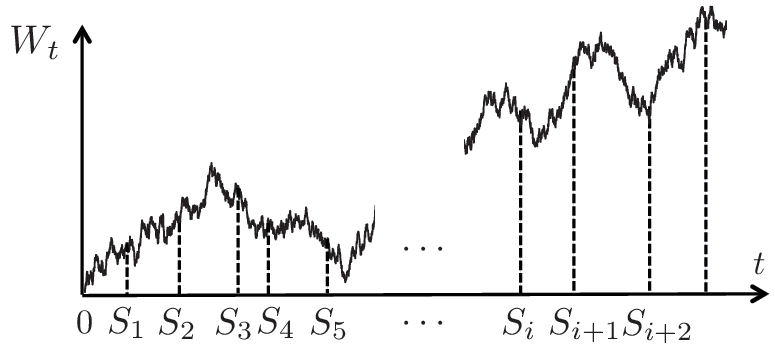}
        \caption{Wiener process $W_{t}$ and its samples.}\label{fig_signal1}
    \end{subfigure}\\
    \begin{subfigure}[t]{0.5\textwidth}
        \centering
        \includegraphics[width=0.6\textwidth]{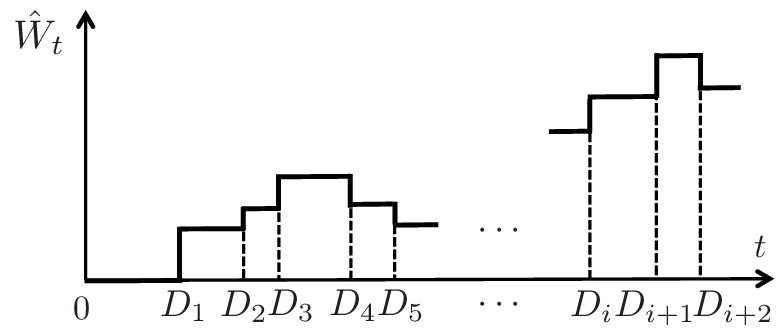}
        \caption{Estimate process $\hat{W}_{t}$ using causally received samples.}\label{fig_signal2}
    \end{subfigure}
                
    \caption{Illustration of the MMSE estimation policy \eqref{eq_esti}.}
    \ifreport
    \else
\fi\label{fig_signal}
\end{figure}











\subsection{Sampling Policies}\label{sec:samplingpolicies}
Let $I_t \in\{0,1\}$ denote the idle/busy state of the server at time $t$. As shown in Fig. \ref{fig_model}, the server state $I_t$ is known by the sampler through acknowledgements (ACKs). We assume that once a sample is delivered to the estimator, an ACK is fed back to the sampler with zero delay. Hence, the information that is available to the sampler at time $t$ can be expressed as $\{W_s, I_s: 0\leq s\leq t\}$.

In online sampling policies, each sampling time $S_i$ is chosen {causally} using the information available at the sampler. To characterize this statement precisely, 
we define 
\begin{align}
\mathcal{N}_t &= \sigma(W_s, I_s: 0\leq s\leq t),~\mathcal{N}_t^+ = \cap_{s>t}\mathcal{N}_s, 
\end{align}
where $\sigma(X_1, X_2,\ldots, X_n)$ represents the $\sigma$-field generated by the random variables $X_1, X_2,\ldots, X_n$.  
Then, $\{\mathcal{N}_t^+,t\geq0\}$ is a \emph{filtration} (i.e., a {non-decreasing} and {right-continuous} family of $\sigma$-fields) of the information available at the sampler.
Each sampling time $S_{i} $ is a \emph{stopping time} with respect to the filtration $\{\mathcal{N}_t^+,t\geq0\}$, i.e., 
\begin{align}\label{eq_sampling_policy}
\{S_{i}\leq t\} \in \mathcal{N}_t^+,~\forall t\geq0.
\end{align}

Let $\pi=(S_1,S_2,\ldots)$ denote a sampling policy where $S_1\leq S_2\leq \cdots$ form an increasing sequence of sampling times. Let $\Pi$ denote a set of \emph{online} (also called \emph{causal}) sampling policies  
 satisfying the following two conditions: \emph{(i)} Each sampling policy $\pi\in\Pi$ satisfies \eqref{eq_sampling_policy}  for all $i=0,1,\ldots$  
\emph{(ii)} The inter-sampling times $\{T_i = S_{i+1}-S_i, i=0,1,\ldots\}$ form a \emph{regenerative process} \cite[Section 6.1]{Haas2002}: There exist an increasing sequence $0\leq {k_1}<k_2< \ldots$ of almost surely finite random integers such that the post-${k_j}$ process $\{T_{k_j+i}, i=0,1,\ldots\}$ has the same distribution as the post-${k_1}$ process $\{T_{k_1+i}, i=0,1,\ldots\}$ and is independent of the pre-$k_j$ process $\{T_{i}, i=0,1,\ldots, k_j-1\}$; in addition, $\mathbb{E}[{k_{j+1}}-{k_j}]<\infty$, $\mathbb{E}[S_{k_{1}}^2]<\infty$, and $0<\mathbb{E}[(S_{k_{j+1}}-S_{k_j})^2]<\infty$  for  $j=1,2,\ldots$ 
By Condition (ii), we can obtain that, almost surely, 
\begin{align}\label{eq_infinite}
\lim_{i\rightarrow\infty} S_i=\infty,~\lim_{i\rightarrow\infty} D_i = \infty.
\end{align}
We analyze the MSE in \eqref{eq_MSE11111}, but operationally a nicer criterion is $\limsup_{n\rightarrow \infty} $ ${\mathbb{E}[\int_0^{D_n} (W_{t}-\hat{W}_{t})^2dt]}/{\mathbb{E}[D_n]}$. These two criteria are associated to two definitions of ``average cost per unit time'' used in the literature of infinite-horizon undiscounted semi-Markov decision problems  \cite{Ross1970,Mine1970,Hayman1984,Arapostathis1993Survey,Bertsekas2005bookDPVol1}. They are equivalent, if $\{T_1, T_2, \dots\}$ is a regenerative process, or more generally, if $\{T_1, T_2, \dots\}$ has only one ergodic class \cite{Ross1970,Mine1970,Hayman1984}. If no condition is imposed, however, these two criteria are different.

Some examples of the sampling policies in $\Pi$ are:
\begin{itemize}
\vspace{-0.5ex}
\ifreport 
\item[1.] \emph{Periodic sampling} \cite{Nyquist1928,Shannon1949}: 
\else
\item[1.] \emph{Periodic sampling}: 
\fi
The inter-sampling times are constant, such that for some $\beta\geq 0$,
\begin{align} \label{eq_uniform}
S_{i+1} = S_i+ \beta.
\end{align}

\item[2.] \emph{Zero-wait sampling} \cite{2015ISITYates,SunInfocom2016,report_AgeOfInfo2016,KaulYatesGruteser-Infocom2012}: A new sample is generated once the previous sample is delivered, i.e.,
\begin{align} \label{eq_Zero_wait}
S_{i+1} = S_i+ Y_i.
\end{align}

\item[3.] \emph{Threshold policy on expected estimation error} \cite{2015ISITYates,SunInfocom2016,report_AgeOfInfo2016}: The sampling times are given by
\begin{align}\label{eq_thm2_11}
S_{i+1}=&\inf \left\{ t\geq S_i + Y_i: t - {S_i} \geq{\beta}\right\} \\
=&\inf \left\{ t\geq S_i + Y_i: \mathbb{E}[(W_t - \hat W_t)^2] \geq{\beta}\right\},\label{eq_thm2_1}
\end{align}
where $S_i + Y_i = D_i$ and, according to \eqref{eq_esti}, $\hat W_t = W_{S_i}$. 
\item[4.] \emph{Threshold policy on instantaneous estimation error}: 
The sampling times are given by
\begin{align}\label{eq_opt_solution}
S_{i+1}= \inf \left\{ t\geq S_i + Y_i: |W_t - \hat W_{t}| \!\geq\! \sqrt{\beta}\right\},
\end{align} 
where $\hat W_t = W_{S_i}$.
The sampling policy in \eqref{eq_opt_solution} can be understood as follows:
As illustrated in Fig. \ref{fig_policy1}, if $|W_{S_i+Y_i} - W_{S_i}| \geq \sqrt{\beta}$, 
sample $i+1$ is generated at the time $S_{i+1} = S_i+Y_i$ when sample $i$ is delivered; otherwise, if $|W_{S_i+Y_i} - W_{S_i}| < \sqrt{\beta}$, sample $i+1$ is generated at the earliest time $t$ such that $t\geq S_i+Y_i$ and 
$|W_t - W_{S_i}|$ reaches the threshold $\sqrt{\beta}$. {\blue It is worthwhile to emphasize that even if there exists time $t\in[S_i, S_i+Y_i)$ such that $|W_{t} - W_{S_i}| \geq \sqrt{\beta}$, no sample is taken at such time $t$, as depicted in both cases of Fig. \ref{fig_policy1}. 
In other words, the threshold-based control is \emph{disabled} during $[S_i, S_i+Y_i)$ and is \emph{reactivated} at time $S_i+Y_i$.} 
\end{itemize}

\begin{figure}[t!]
\centering
    \!\!\!\!\begin{subfigure}[t]{0.23\textwidth}
        \centering
        \includegraphics[width=0.9\textwidth]{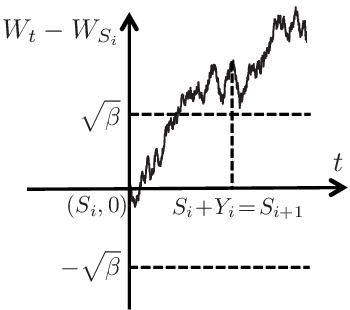}
        \caption[(i)]{If $|W_{S_i+Y_i}\! -\! W_{S_i}|\! \geq\! \sqrt{\beta}$,  sample $i+1$ is taken at  time $S_{i+1} = S_i+Y_i$.}\label{}
    \end{subfigure}~~~
    \begin{subfigure}[t]{0.23\textwidth}
        \centering
        \includegraphics[width=0.9\textwidth]{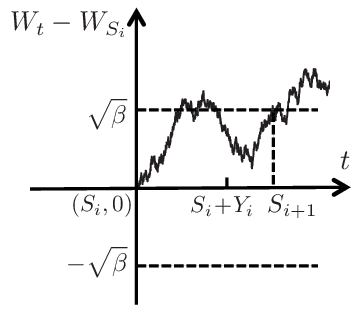}
        \caption{If $|W_{S_i+Y_i} \!-\! W_{S_i}|\! <\! \sqrt{\beta}$, sample $i+1$ is taken at time $t$ that satisfies $t\geq S_i+Y_i$ and $|W_{t}\! -\! W_{S_i}|=\sqrt{\beta}$.}\label{}
    \end{subfigure}
  
    \caption{Illustration of the threshold-based sampling policy \eqref{eq_opt_solution}, where no sample is taken during $[S_i,S_i+Y_i)$. }\label{fig_policy1}
\end{figure}

A sampling policy $\pi\in\Pi$ is said to be \emph{signal-ignorant} (\emph{signal-aware}), if $\pi$ is (not) independent of the Wiener process $\{W_t,t\geq0\}$. The sampling policies  \eqref{eq_uniform}, \eqref{eq_Zero_wait}, and \eqref{eq_thm2_1} are signal-ignorant, and the sampling policy  \eqref{eq_opt_solution} is signal-aware. 


\subsection{Optimal  Sampling  Problem}
We assume that the source process $\{W_t, t\geq 0\}$ and the service times $\{Y_i,i = 1,2,\ldots\}$  are \emph{mutually independent} and  
do not change according to the sampling policy. In addition, we assume that the $Y_i$'s are \emph{i.i.d.} with $\mathbb{E}[Y_i^2]<\infty$. 
The optimal sampling problem for minimizing the MSE  subject to a sampling rate constraint is formulated as 
\begin{align}\label{eq_DPExpected}
\mathsf{mse}_{\text{opt}}\triangleq\inf_{\pi\in\Pi}~& \limsup_{T\rightarrow \infty}\frac{1}{T}\mathbb{E}\left[\int_0^{T} (W_t - \hat W_t)^2dt\right] \\
~\text{s.t.}~~& \liminf_{n\rightarrow \infty} \frac{1}{n} 
\mathbb{E}[S_n]\geq \frac{1}{f_{\max}},\label{eq_constraint}
\end{align}
where $\mathsf{mse}_{\text{opt}}$ denotes the optimal  value of \eqref{eq_DPExpected}. Later on in the paper, the unconstrained problem with $f_{\max} = \infty$ will also be studied.

\section{Optimal Sampling Policies}\label{sec:solution}
\subsection{Signal-aware Sampling}

Problem  \eqref{eq_DPExpected} is a constrained continuous-time Markov decision problem with a continuous state space. Such problems are often lack of closed-form or analytical solutions, however
we were able to  solve \eqref{eq_DPExpected} exactly:
\begin{theorem}\label{thm_1}
If the service times $Y_i$'s are {i.i.d.} with $\mathbb{E}[Y_i^2]<\infty$, then there exists $\beta\geq0$ such that the sampling policy \eqref{eq_opt_solution} is an optimal solution of
  \eqref{eq_DPExpected},
 and the optimal $\beta$ is determined by solving\footnote{If $\beta\rightarrow0$, the last terms in \eqref{eq_thm1} and \eqref{eq_thm2} are determined by L'Hospital's rule.}
\begin{align} \label{eq_thm1}
\mathbb{E}[\max(\beta,W_Y^2)] \!=\! \max\left(\frac{1}{f_{\max}}, \frac{\mathbb{E}[\max(\beta^2,W_Y^4)]}{2\beta}\right),\!
\end{align}
where $Y$ is a random variable with the same distribution as $Y_i$.
The optimal  value of \eqref{eq_DPExpected} is then given by \emph{
\begin{align}\label{thm_1_obj}
\mathsf{mse}_{\text{opt}}=\frac{\mathbb{E}[\max(\beta^2,W_Y^4)]}{6\mathbb{E}[\max(\beta,W_Y^2)]} + \mathbb{E}[Y].
\end{align}}
\end{theorem}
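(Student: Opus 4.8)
The plan is to reduce the constrained functional optimization to a one-dimensional optimal stopping problem for the Wiener process, solve the latter in closed form, and then pin down the threshold from the rate constraint. First I would establish the structural reduction that it is never beneficial to sample while the server is busy: if a sample were generated at some $S_{i+1}\in(S_i,S_i+Y_i)$, the strong Markov property lets us postpone it to the service-completion instant $S_i+Y_i$ without changing any delivery time, yielding a fresher and hence weakly more informative sample; thus we may restrict to policies with $S_{i+1}\ge D_i=S_i+Y_i$, so that the queue holds at most one packet and $D_i=S_i+Y_i$. Under the regenerative assumption on $\{T_i\}$, the renewal--reward theorem turns the time-average MSE in \eqref{eq_DPExpected} into the per-cycle ratio $\mathbb{E}[R_i]/\mathbb{E}[T_i]$, where $R_i=\int_{D_i}^{D_{i+1}}(W_t-W_{S_i})^2\,dt$ and $T_i=S_{i+1}-S_i$, and the constraint \eqref{eq_constraint} becomes $\mathbb{E}[T_i]\ge 1/f_{\max}$.

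The next step is to compute these per-cycle quantities using the fresh Wiener process $\tilde W_\tau=W_{S_i+\tau}-W_{S_i}$, which is legitimate because $S_i$ is a stopping time. Writing $V_i=\tilde W_{T_i}$ for the terminal signal deviation and using the martingale identities $\mathbb{E}[\tilde W_\sigma^2]=\mathbb{E}[\sigma]$ and $\mathbb{E}[\int_0^\sigma \tilde W_\tau^2\,d\tau]=\tfrac16\mathbb{E}[\tilde W_\sigma^4]$ (from the fact that $\tilde W_\tau^2-\tau$ and $\tilde W_\tau^4-6\int_0^\tau\tilde W_s^2\,ds$ are martingales), together with the independence of $Y_{i+1}$ from the increment of the fresh process after $S_{i+1}$, I obtain $\mathbb{E}[T_i]=\mathbb{E}[V_i^2]$ and $\mathbb{E}[R_i]=\tfrac16\mathbb{E}[V_i^4]+\mathbb{E}[Y]\,\mathbb{E}[V_i^2]$, hence
\begin{align}
\mathsf{mse}=\frac{\mathbb{E}[V_i^4]}{6\,\mathbb{E}[V_i^2]}+\mathbb{E}[Y],\qquad \mathbb{E}[T_i]=\mathbb{E}[V_i^2]. \nonumber
\end{align}
For the candidate policy \eqref{eq_opt_solution} one has $V_i^2=\max(\beta,W_Y^2)$ and $V_i^4=\max(\beta^2,W_Y^4)$, which immediately yields \eqref{thm_1_obj}; the remaining tasks are to prove optimality and to determine $\beta$.

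To prove optimality I would dualize the rate constraint with a multiplier $\lambda\ge0$ and apply a Dinkelbach argument to the ratio objective, so that $\mathsf{mse}_{\mathrm{opt}}$ is the root of $\inf_{\pi}\mathbb{E}[R_i-(\mathsf{mse}_{\mathrm{opt}}+\lambda)T_i]=0$. Substituting the expressions above and using $\mathbb{E}[\sigma]=\mathbb{E}[\tilde W_\sigma^2]$ collapses the per-sample cost to the clean optimal stopping problem
\begin{align}
\inf_{\sigma\ge Y}\ \mathbb{E}\!\left[\int_0^{\sigma}\big(\tilde W_\tau^2-\nu\big)\,d\tau\right],\qquad \nu=\mathsf{mse}_{\mathrm{opt}}+\lambda-\mathbb{E}[Y]. \nonumber
\end{align}
I would solve this by a verification argument: guess the continuation region $\{|w|<\sqrt\beta\}$ with $\beta=3\nu$ and the value function $u(w)=-\tfrac16(w^2-\beta)^2$ for $|w|\le\sqrt\beta$ and $u(w)=0$ otherwise, which is $C^1$ (smooth pasting at $|w|=\sqrt\beta$), satisfies $u\le0$, obeys $\tfrac12 u''(w)+(w^2-\nu)=0$ inside and $\ge0$ outside. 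Applying It\^o to $u(\tilde W_\tau)$ and taking expectations shows $\mathbb{E}[\int_0^\sigma(\tilde W_\tau^2-\nu)\,d\tau]\ge \mathbb{E}[\int_0^Y(\tilde W_\tau^2-\nu)\,d\tau]+\mathbb{E}[u(\tilde W_Y)]$ for every $\sigma\ge Y$, with equality exactly for the rule ``after release at $Y$, stop on first exit of $(-\sqrt\beta,\sqrt\beta)$, i.e.\ immediately if already outside,'' which is precisely \eqref{eq_opt_solution}. The forced continuation on $[0,Y)$ is what turns the naive ``return to the boundary'' stopping time, which has infinite mean and is therefore infeasible, into the stop-on-overshoot threshold.

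Finally, with optimality of the threshold family established, I would select $\beta$ by minimizing $\tfrac{B(\beta)}{6A(\beta)}$ over $\beta$ subject to $A(\beta)\ge 1/f_{\max}$, where $A(\beta)=\mathbb{E}[\max(\beta,W_Y^2)]$ and $B(\beta)=\mathbb{E}[\max(\beta^2,W_Y^4)]$. Since $A'(\beta)=\Pr(W_Y^2<\beta)$ and $B'(\beta)=2\beta A'(\beta)$, differentiation gives the stationarity condition $2\beta A(\beta)=B(\beta)$ when the constraint is slack, while an active constraint forces $A(\beta)=1/f_{\max}$; combining the two cases yields exactly \eqref{eq_thm1}, and substituting back reproduces \eqref{thm_1_obj}. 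I expect the main obstacle to be the rigorous reduction from the constrained, infinite-horizon $\limsup$ time-average problem to the single-cycle optimal stopping problem: justifying the regenerative renewal--reward passage, the strong duality for the rate constraint, and the Dinkelbach root characterization, including the integrability and uniform-integrability conditions required to apply optional stopping, is considerably more delicate than the stopping computation itself, which is a routine verification once correctly set up.
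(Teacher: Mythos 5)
Your proposal is correct and follows essentially the same route as the paper's proof: restriction to policies that never sample while the server is busy, the renewal--reward reduction of the time-average MSE to a per-cycle ratio, Dinkelbach's root characterization combined with Lagrangian dualization of the rate constraint, decomposition into per-sample optimal stopping problems solved by the threshold rule, and a two-case (slack/active constraint) analysis yielding \eqref{eq_thm1} and \eqref{thm_1_obj}. The only differences are in execution of two sub-steps. First, you solve the per-sample stopping problem by a self-contained verification argument --- the reformulation $\inf_{\sigma\ge Y}\mathbb{E}[\int_0^\sigma(\tilde W_\tau^2-\nu)\,d\tau]$ with the explicit value function $u(w)=-\tfrac16(w^2-\beta)^2$, smooth pasting, and It\^{o} --- whereas the paper keeps the quartic form \eqref{eq_opt_stopping} and invokes the excessive-function theorem from Shiryaev, proving excessiveness by an It\^{o}--Tanaka--Meyer computation that is substantively the same as your variational-inequality check; your packaging is somewhat more elementary and avoids the external citation. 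Second, you pin down $\beta$ by direct calculus within the threshold family (using $B'(\beta)=2\beta A'(\beta)$), whereas the paper exhibits a geometric multiplier $\lambda^\star$ satisfying Lagrangian optimality and complementary slackness; the computations are equivalent, but the paper's multiplier argument simultaneously delivers the zero duality gap, which in your plan is exactly the ``strong duality'' ingredient you correctly flag as the delicate remaining step --- without it, optimizing within the threshold family only bounds the optimum over that family, not over all of $\Pi_1$.
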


\begin{proof}
See Section \ref{sec_proof}.
\end{proof}



According to Theorem \ref{thm_1}, in the optimal signal-aware sampling policy, the $(i+1)$-th sample is taken at the earliest time $t$ satisfying two conditions:  \emph{(i)} The $i$-th sample has already been delivered by time $t$, i.e., $t\geq D_i = S_i + Y_i$, and \emph{(ii)} 
the instantaneous estimation error $|W_t - \hat W_t|$ at time $t$ is no smaller than a threshold $\sqrt{\beta}$. In addition, the threshold $\sqrt{\beta}$ is determined by the maximum allowed sampling rate $f_{\max}$ and $W_Y$, where $W_Y$ is a random variable that has the same distribution with the amount of signal variation $(W_{t+Y} - W_t)$ during the random service time $Y$ for all starting time $t$. This indicates a tight \emph{coupling} between the source process $W_t$ and the service time $Y$,  in the optimal sampling policy.  

Equation \eqref{eq_thm1} can be solved by using the bisection method with a low computational complexity. Hence, Problem  \eqref{eq_DPExpected} does not suffer from 
the curse of dimensionality 
encountered in most Markov decision problems with continuous state spaces. We note that the  sampling policy in \eqref{eq_opt_solution} and \eqref{eq_thm1} is quite general in the sense that it is optimal
 for any service time distribution satisfying $\mathbb{E}[Y^2]<\infty$. The optimal signal-aware sampling policy in \eqref{eq_opt_solution} and \eqref{eq_thm1} is also called the ``MSE-optimal'' sampling policy in the sequel.




\subsection{Signal-ignorant Sampling and the Age of Information}\label{sec_signal_independent}

Let $\Pi_{\text{signal-ignorant}}\subset \Pi$ denote the set of signal-ignorant sampling policies, defined as   
\begin{align}
\Pi_{\text{signal-ignorant}} \!=\! \{\pi\!\in\Pi: \pi \text{ is independent of }\{W_t, t\geq 0\}\}.
\end{align}
In these policies, the sampling decisions depend only  on the service time $\{Y_i,i=1,2,\ldots\}$ but not the source process $\{W_t, t\geq 0\}$.
For each $\pi \in \Pi_{\text{signal-ignorant}}$,  the objective function in \eqref{eq_DPExpected} can be rewritten as 
\ifreport
(see Appendix \ref{app_age} for the proof)
\fi
{
\begin{align}\label{eq_age_MSE}
& \limsup_{T\rightarrow \infty}\frac{1}{T}\mathbb{E}\left[\int_0^{T} (W_t - \hat W_t)^2dt\right] \nonumber\\
= &\limsup_{T\rightarrow \infty}\frac{1}{T}\mathbb{E}\left[ \int_0^{T}\Delta (t) dt\right],
\end{align}}
\!\!\!where $\Delta (t)$ is the \emph{age of information} defined in \eqref{eq_age_definition}. In FIFO queueing systems, $D_i \leq D_{i+1}$ holds for all $i$. Hence, the age $\Delta(t)$ can be equivalently expressed as
\begin{align}\label{eq_age_def}
\Delta (t) = t - S_i, ~t\in[D_i,D_{i+1}),~i=0,1,2,\ldots
\end{align}

If the set of feasible policies is restricted from $\Pi$ to $\Pi_{\text{signal-ignorant}}$, \eqref{eq_DPExpected} reduces to the following sampling problem for minimizing the time-average expectation of the age of information \cite{SunInfocom2016,report_AgeOfInfo2016,Sun2019}:
\begin{align}\label{eq_age}
\mathsf{mse}_{\text{age-opt}}\triangleq\inf_{\pi \in\Pi_{\text{signal-ignorant}} }& \limsup_{T\rightarrow \infty}\frac{1}{T}\mathbb{E}\left[ \int_0^{T}\Delta (t) dt\right] \\
~\text{s.t.}~~~~~ &\liminf_{n\rightarrow \infty} \frac{1}{n} 
\mathbb{E}[S_n]\geq \frac{1}{f_{\max}},\nonumber
\end{align}
where $\mathsf{mse}_{\text{age-opt}}$ denotes the optimal  value of \eqref{eq_age}. Because $\Pi_{\text{signal-ignorant}}\subset \Pi$,
\begin{align}\label{eq_compare_aware}
\mathsf{mse}_{\text{opt}} \leq \mathsf{mse}_{\text{age-opt}}.
\end{align}
Note that problem \eqref{eq_age} is simpler than \eqref{eq_DPExpected} because the sampler does not use knowledge of $W_t$ to make decisions. To solve \eqref{eq_DPExpected}, stronger techniques than those in \cite{SunInfocom2016,report_AgeOfInfo2016,Sun2019} are developed in Section \ref{sec_proof}.




%
\begin{theorem}\cite{Sun2019}\label{thm_2}
If the service times $Y_i$'s are {i.i.d.} with $\mathbb{E}[Y_i^2]<\infty$, then there exists $\beta\geq0$ such that 
the sampling policy \eqref{eq_thm2_1} is an optimal solution of \eqref{eq_age}, and the optimal $\beta$ is determined by solving
\begin{align}\label{eq_thm2}
\mathbb{E}[\max(\beta,Y)] \!=\! \max\left(\frac{1}{f_{\max}}, \frac{\mathbb{E}[\max(\beta^2,Y^2)]}{2\beta}\right),\!
\end{align}
where $Y$ is a random variable with the same distribution as $Y_i$.
The optimal  value of \eqref{eq_age} is then given by\emph{
\begin{align}\label{thm_2_obj}
\mathsf{mse}_{\text{age-opt}}\triangleq\frac{\mathbb{E}[\max(\beta^2,Y^2)]}{2\mathbb{E}[\max(\beta,Y)] } + \mathbb{E}[Y].
\end{align}}
\end{theorem}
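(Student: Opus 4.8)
The plan is to use the regenerative structure built into the definition of $\Pi$ to turn the long-run time-average age in \eqref{eq_age} into a ratio of per-cycle expectations, and then solve the resulting constrained fractional program in closed form; this reproduces the threshold policy \eqref{eq_thm2_1}, the fixed-point equation \eqref{eq_thm2}, and the value \eqref{thm_2_obj} reported in \cite{SunInfocom2016,report_AgeOfInfo2016}. First I would reduce the admissible policies: because the channel is a FIFO queue, generating a sample while the server is busy only forces that sample to wait and grow stale before delivery, which cannot lower the age, so it is without loss of optimality to sample only after the previous sample has been delivered, i.e. $S_{i+1}\geq D_i = S_i+Y_i$. Under this restriction each sample arrives to an empty queue, its sojourn time is exactly $Y_i$, and the only remaining freedom is the post-delivery wait $Z_i = S_{i+1}-D_i\geq 0$, chosen causally (hence a function of $Y_1,\dots,Y_i$). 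Writing $T_i = S_{i+1}-S_i = Y_i+Z_i$, the age is a sawtooth that over $[D_i,D_{i+1})$ rises linearly from $Y_i$ to $T_i+Y_{i+1}$, so its area is $\tfrac12[(T_i+Y_{i+1})^2 - Y_i^2]$.

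Second, I would invoke the renewal-reward theorem, which is legitimate by the regenerative assumption on $\{T_i\}$ together with $\mathbb{E}[Y^2]<\infty$, to write the $\limsup$ time-average in \eqref{eq_age} as the ratio of expected per-cycle area to expected cycle length. Expanding $(T_i+Y_{i+1})^2-Y_i^2 = T_i^2 + 2T_iY_{i+1} + Y_{i+1}^2 - Y_i^2$ and using that $Y_{i+1}$ is independent of $T_i$ with $\mathbb{E}[Y_{i+1}^2]=\mathbb{E}[Y_i^2]$ (so the $Y^2$ terms cancel in expectation and contribute only negligible boundary terms), the expected per-frame area collapses to $\tfrac12\mathbb{E}[T_i^2]+\mathbb{E}[T_i]\mathbb{E}[Y]$, while the expected frame length is $\mathbb{E}[T_i]$. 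Hence the objective becomes $\mathbb{E}[T^2]/(2\mathbb{E}[T]) + \mathbb{E}[Y]$ and the rate constraint becomes $\mathbb{E}[T]\geq 1/f_{\max}$, where $T=Y+Z$ with $Z\geq 0$; the additive per-cycle structure shows dependence of $Z_i$ on the earlier history beyond $Y_i$ cannot help, so it suffices to take $Z$ as a measurable function of $Y$. Note the $+\mathbb{E}[Y]$ already matches \eqref{thm_2_obj}.

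Third, I would solve $\min_{Z\geq 0}\mathbb{E}[T^2]/(2\mathbb{E}[T])$ subject to $\mathbb{E}[T]\geq 1/f_{\max}$ by Dinkelbach's method together with a Lagrange multiplier for the rate constraint. Introducing $\gamma$ for the ratio and $\lambda\geq 0$ for the constraint, the inner problem reduces to the pointwise minimization of $T\mapsto T^2-(2\gamma+\lambda)T$ over $T\geq Y$, whose minimizer is $T^\star=\max(Y,\gamma+\lambda/2)$. Setting $\beta=\gamma+\lambda/2$ recovers exactly the threshold policy \eqref{eq_thm2_1}. Imposing the Dinkelbach zero condition in the inactive case $\lambda=0$ gives $\beta = \mathbb{E}[\max(\beta^2,Y^2)]/(2\mathbb{E}[\max(\beta,Y)])$, while an active constraint ($\lambda>0$) forces $\mathbb{E}[\max(\beta,Y)]=1/f_{\max}$ by complementary slackness; combining the two cases yields \eqref{eq_thm2}, and substituting $T^\star$ back into the collapsed objective yields \eqref{thm_2_obj}. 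Existence of a valid $\beta\geq 0$ and selection of the correct branch follow from monotonicity of $\beta\mapsto\mathbb{E}[\max(\beta,Y)]$ and an intermediate-value argument, which I regard as routine.

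The main obstacle, I expect, is not the algebra but the two optimality-reduction steps and the tightness of the relaxation: rigorously justifying that sampling while the server is busy is suboptimal, and that restricting $Z_i$ to depend only on $Y_i$ loses nothing, both rest on the independence of $Y_{i+1}$ from the current cycle and the additive renewal decomposition of the age area. Making the Lagrangian relaxation provably tight, so that $T^\star=\max(Y,\beta)$ is a \emph{global} optimum over all signal-ignorant causal policies rather than a mere stationary point, is the delicate part; I would close it by verifying that $T^\star$ together with the chosen $(\gamma,\lambda)$ satisfies the KKT and complementary-slackness conditions and attains the lower bound furnished by weak duality.
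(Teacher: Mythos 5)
Your proposal is correct and takes essentially the same route as the source: the paper defers this theorem's proof to \cite{SunInfocom2016,report_AgeOfInfo2016}, but both those references and the paper's own proof machinery for Theorem~\ref{thm_1} follow exactly your skeleton --- restrict to sampling after delivery, use the regenerative property to reduce the time-average age to a ratio of per-cycle expectations, apply Dinkelbach's transformation plus a Lagrange multiplier to decompose into per-sample problems whose solution is the threshold $T_i=\max(Y_i,\beta)$, and close with a complementary-slackness/weak-duality (geometric multiplier) argument for global optimality. The only simplification relative to the paper's Theorem~\ref{thm_1} proof is that in the signal-ignorant case the per-sample problem is a pointwise minimization over $T\geq Y$ rather than an optimal stopping problem for the Wiener process, which is precisely why your elementary closed-form argument suffices here.
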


Theorem \ref{thm_2} was proven in \cite{SunInfocom2016,report_AgeOfInfo2016} under an extra condition that the time difference
$S_{i+1}-D_i$ is upper bounded by a constant $M>0$. In \cite{Sun2019},
Theorem \ref{thm_2} was established without requiring this extra condition.

One can obtain some interesting observations by comparing Theorem \ref{thm_1} and Theorem \ref{thm_2}: 
In the optimal signal-ignorant sampling policy presented in Theorem \ref{thm_2}, the $(i+1)$-th sample is taken at the earliest time $t$ satisfying two conditions: \emph{(i)} The $i$-th sample has already been delivered by time $t$, i.e., $t\geq D_i = S_i + Y_i$, and \emph{(ii)} the expected estimation error $\mathbb{E}[(W_t - \hat W_t)]^2$ at time $t$, which, by \eqref{eq_thm2_11} and \eqref{eq_age_def}, is equal to the age $\Delta(t)$, is no smaller than a threshold $\beta$. The first condition is the same with that in Theorem \ref{thm_1}, but the second condition is quite different: Because the sampler has no knowledge about the Wiener process (except for its distribution), it can only use expected estimation error to make decisions. Further, the threshold ${\beta}$ in Theorem \ref{thm_2} is determined by the maximum allowed sampling rate $f_{\max}$ and the random service time $Y$, which is also different from the case in Theorem \ref{thm_1}.
The optimal signal-ignorant sampling policy in \eqref{eq_thm2_1} and \eqref{eq_thm2} is also referred to as the ``age-optimal'' sampling policy. 

%

In the following, the asymptotics of the MSE-optimal and age-optimal sampling policies at  low/high service time or low/high sampling frequencies are studied. 
\subsection{Short Service Time or Low Sampling Rate}
\ifreport
Let 
\begin{align}\label{eq_scale_Y_i}
Y_i = \alpha X_i
\end{align}
\else
Let $Y_i = \alpha X_i$ 
\fi
represent the scaling of the service time $Y_i$ with $\alpha$, where $\alpha\geq0$ and the $X_i$'s are \emph{i.i.d.} positive random variables. If $\alpha\rightarrow  0$ or $f_{\max}\rightarrow 0$, we can obtain from \eqref{eq_thm1} that
\ifreport
(see Appendix \ref{app_low_delay} for the proof)
\fi
\begin{align}\label{eq_betaasy}
\beta = \frac{1}{f_{\max}} + o\left(\frac{1}{f_{\max}}\right),
\end{align}
where $f(x)=o(g(x))$ as $x\rightarrow a$ means that $\lim_{x\rightarrow a} $ $f(x)/g(x) = 0$.
In this case, the MSE-optimal sampling policy 
\ifreport
in \eqref{eq_opt_solution} and \eqref{eq_thm1} 
\fi
becomes  
\begin{align} \label{eq_opt_solution1}
S_{i+1}\! =\! \inf \left\{ t \geq S_i: |W_t - W_{S_i}| \!\geq\! \sqrt{\frac{1}{f_{\max}}}\right\},
\end{align}
and 
\ifreport 
as shown in Appendix \ref{app_low_delay}, the optimal value of \eqref{eq_DPExpected}  becomes
\begin{align}\label{eq_opt_limit_1}
\mathsf{mse}_{\text{opt}}= \frac{1}{6f_{\max}} + o\left(\frac{1}{f_{\max}}\right).
\end{align}
\else
the optimal value of \eqref{eq_DPExpected}  becomes
$\mathsf{mse}_{\text{opt}}= {1}/{(6f_{\max})} $ $+ o(1/f_{\max})$. 
\fi 
The sampling policy \eqref{eq_opt_solution1} 
was also obtained  in \cite{Basar2014} for the case that $ Y_i =  0$ for all $i$.

%

Similarly, if $\alpha\rightarrow  0$ or $f_{\max}\rightarrow 0$, the age-optimal sampling policy in \eqref{eq_thm2_1} and \eqref{eq_thm2} becomes  periodic sampling \eqref{eq_uniform} with $\beta = {1}/{f_{\max}}+ o(1/f_{\max})$, and the optimal value of \eqref{eq_age} is $\mathsf{mse}_{\text{age-opt}}= {1}/{(2f_{\max})}+ o(1/f_{\max})$. Therefore, 
\begin{align}\label{eq_ratio_MSE}
\lim_{\alpha\rightarrow  0} \frac{\mathsf{mse}_{\text{opt}}}{\mathsf{mse}_{\text{age-opt}}}= \lim_{f_{\max}\rightarrow 0}\frac{\mathsf{mse}_{\text{opt}}}{\mathsf{mse}_{\text{age-opt}}}=\frac{1}{3}.
\end{align}

\subsection{Long Service Time or Unbounded Sampling Rate}
If $\alpha\rightarrow\infty$ or $f_{\max}\rightarrow\infty$, 
\ifreport
as shown in Appendix \ref{app_scale}, 
\fi
the MSE-optimal sampling policy for solving \eqref{eq_DPExpected}
is given by \eqref{eq_opt_solution} where $\beta$ is determined by solving 
\begin{align} \label{eq_coro_1}
2\beta \mathbb{E}[\max(\beta,W_{Y}^2)]= {\mathbb{E}[\max(\beta^2,W_{Y}^4)]}{}.
\end{align}
Similarly, if $\alpha\rightarrow\infty$ or $f_{\max}\rightarrow\infty$, the age-optimal sampling policy for solving \eqref{eq_age} is given by \eqref{eq_thm2_1} where $\beta$ is determined by solving 
\begin{align} \label{eq_coro_2}
2\beta \mathbb{E}[\max(\beta,Y)] = {\mathbb{E}[\max(\beta^2,Y^2)]}{}.
\end{align}
In these limits, the ratio between ${\mathsf{mse}_{\text{opt}}}$ and ${\mathsf{mse}_{\text{age-opt}}}$ depends on the distribution of $Y$.

When the sampling rate is unbounded, i.e., $f_{\max}=\infty$, one logically reasonable policy is the zero-wait sampling policy  in \eqref{eq_Zero_wait} \cite{2015ISITYates,SunInfocom2016,report_AgeOfInfo2016,KaulYatesGruteser-Infocom2012}. This zero-wait sampling policy achieves the maximum throughput and the minimum queueing delay. However, this zero-wait sampling policy \emph{almost never} minimizes the MSE in \eqref{eq_DPExpected} and \emph{does not always} minimize the age of information in \eqref{eq_age}, as stated in the following two theorems:

We note that the zero-wait sampling policy can be expressed as \eqref{eq_opt_solution} with $\beta =0$. By checking when $\beta =0$ is satisfied in \eqref{eq_opt_solution} and \eqref{eq_thm1}, one can  obtain

\begin{theorem}\label{lem_zero_wait1}
Suppose that $f_{\max}=\infty$. Then, the zero-wait sampling policy \eqref{eq_Zero_wait} is an optimal solution to \eqref{eq_DPExpected} if and only if 
the service time $Y$ is equal to zero with probability one.
\end{theorem}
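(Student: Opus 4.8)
The plan is to use Theorem~\ref{thm_1} to reduce the claim to a one-dimensional calculus statement about the optimal threshold. With $f_{\max}=\infty$ the term $1/f_{\max}$ in \eqref{eq_thm1} vanishes, so the optimality equation becomes \eqref{eq_coro_1}, i.e. $2\beta\,\mathbb{E}[\max(\beta,W_Y^2)]=\mathbb{E}[\max(\beta^2,W_Y^4)]$, and the optimal value is $g(\beta)=\frac{\mathbb{E}[\max(\beta^2,W_Y^4)]}{6\,\mathbb{E}[\max(\beta,W_Y^2)]}+\mathbb{E}[Y]$ evaluated at the optimal threshold. First I would observe that the zero-wait policy \eqref{eq_Zero_wait} is exactly the threshold policy \eqref{eq_opt_solution} with $\beta=0$: since $|W_t-W_{S_i}|\ge 0$ always holds, the infimum in \eqref{eq_opt_solution} is attained at $t=S_i+Y_i$. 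Hence $\mathsf{mse}_{\text{zero-wait}}=g(0)$, which a short renewal-reward computation (or the conditional Gaussian moments $\mathbb{E}[W_Y^2\mid Y]=Y$ and $\mathbb{E}[W_Y^4\mid Y]=3Y^2$) evaluates to $\mathbb{E}[Y^2]/(2\mathbb{E}[Y])+\mathbb{E}[Y]$. Thus zero-wait is optimal if and only if $\beta=0$ is a minimizer of $g$.

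Next I would determine the sign of $g'$. Writing $D(\beta)=\mathbb{E}[\max(\beta,W_Y^2)]$, $N(\beta)=\mathbb{E}[\max(\beta^2,W_Y^4)]$ and $F(\beta)=\Pr(W_Y^2\le\beta)$, differentiation under the expectation gives $D'(\beta)=F(\beta)$ and $N'(\beta)=2\beta F(\beta)$, so that $g'(\beta)=F(\beta)h(\beta)/\big(6D(\beta)^2\big)$, where $h(\beta):=2\beta D(\beta)-N(\beta)$ is the left minus right side of \eqref{eq_coro_1}; here $F(\beta)>0$ and $D(\beta)>0$ for every $\beta>0$. The crucial computation is $h'(\beta)=2D(\beta)+2\beta F(\beta)-2\beta F(\beta)=2D(\beta)>0$, so $h$ is \emph{strictly increasing} on $[0,\infty)$. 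Since $h(0)=-\mathbb{E}[W_Y^4]=-3\mathbb{E}[Y^2]\le 0$ and $h(\beta)/\beta^2\to 1$ as $\beta\to\infty$, the function $h$ has a unique root $\beta^\star\ge 0$; consequently $g$ is strictly decreasing on $(0,\beta^\star)$ and strictly increasing on $(\beta^\star,\infty)$, and $\beta^\star$ is the \emph{unique} minimizer of $g$, consistent with Theorem~\ref{thm_1}.

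The theorem then follows by reading off when $\beta^\star=0$. Because $h$ is strictly increasing with $h(0)=-3\mathbb{E}[Y^2]$, we have $\beta^\star=0$ if and only if $\mathbb{E}[Y^2]=0$, i.e. if and only if $Y=0$ with probability one. When $Y=0$ almost surely the optimal threshold is $\beta^\star=0$ and the $\beta=0$ policy is precisely zero-wait, so zero-wait is optimal (both attain $\mathsf{mse}=0$). Conversely, when $\Pr(Y>0)>0$ we get $\beta^\star>0$, and by strict monotonicity of $g$ on $[0,\beta^\star]$ together with continuity at $0$ we obtain $\mathsf{mse}_{\text{zero-wait}}=g(0)>g(\beta^\star)=\mathsf{mse}_{\text{opt}}$, so zero-wait is strictly suboptimal.

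I expect the main technical care to lie in justifying the differentiation through the nonsmooth $\max$ (handled by dominated convergence using $\mathbb{E}[Y^2]<\infty$ and the fact that $W_Y^2$ has no atom at any $\beta>0$, since $W_Y\mid Y$ is Gaussian) and in confirming that the MSE of the threshold-$\beta$ policy equals $g(\beta)$ for \emph{all} $\beta\ge 0$, not merely at the optimum; the latter is exactly what lets me evaluate $g(0)$ and is supplied by the renewal-reward analysis of Section~\ref{sec_proof}. The degenerate case $Y=0$ also warrants a one-line separate treatment, since $\mathbb{E}[Y]$ appears in a denominator.
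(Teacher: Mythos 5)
Your proof is correct, and it shares the paper's core reduction: identify zero-wait sampling with the threshold policy \eqref{eq_opt_solution} at $\beta=0$, invoke Theorem \ref{thm_1} with $f_{\max}=\infty$ so that the optimal threshold is a root of \eqref{eq_coro_1}, and use the Gaussian moments $\mathbb{E}[W_Y^2]=\mathbb{E}[Y]$ and $\mathbb{E}[W_Y^4]=3\mathbb{E}[Y^2]$. Where you genuinely depart from (and improve on) the paper is the converse direction. The paper's proof in Appendix \ref{app_zerowait} is two sentences: $\beta=0$ satisfies \eqref{eq_coro_1} if and only if $3\mathbb{E}[Y^2]=0$, i.e.\ $Y=0$ with probability one; this implicitly equates ``zero-wait is optimal'' with ``$\beta=0$ solves \eqref{eq_coro_1}''. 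The ``if'' half of that equivalence is indeed Theorem \ref{thm_1}, but the ``only if'' half is not immediate: Theorem \ref{thm_1} asserts that \emph{some} root of \eqref{eq_coro_1} yields an optimal policy, not that a threshold failing the equation must be \emph{strictly} suboptimal. Your unimodality argument closes exactly this gap: writing the MSE of the threshold-$\beta$ policy as $g(\beta)=N(\beta)/\bigl(6D(\beta)\bigr)+\mathbb{E}[Y]$ and computing $g'(\beta)=F(\beta)h(\beta)/\bigl(6D(\beta)^2\bigr)$ with $h(\beta)=2\beta D(\beta)-N(\beta)$ and $h'(\beta)=2D(\beta)>0$, you obtain a unique root $\beta^\star$ of \eqref{eq_coro_1} and the strict inequality $g(0)>g(\beta^\star)=\mathsf{mse}_{\text{opt}}$ whenever $\Pr[Y>0]>0$, which is what the theorem's ``only if'' direction really requires. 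The price of your route is the two supporting facts you already flag: (i) that $g(\beta)$ is the MSE of \emph{every} threshold-$\beta$ policy, not only of the optimal one --- this does follow by running \eqref{eq_integral}, Corollary \ref{coro_stop}, and the renewal-reward computation \eqref{eq_KKT_1} at an arbitrary $\beta\geq 0$, but since the paper only carries out that computation at the optimum, it should be recorded as a separate lemma; and (ii) the L'Hospital reading of $g(0)$ in the degenerate case $Y=0$ almost surely, where $D(0)=\mathbb{E}[Y]=0$ makes the ratio formally $0/0$.
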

\ifreport
\begin{proof}
See Appendix \ref{app_zerowait}. 
\end{proof}
\fi

Hence, as long as the service time $Y$ has a small probability to be positive, the zero-waiting sampling policy is not an optimal solution to \eqref{eq_DPExpected}. Similarly, the optimality of zero-wait sampling policy for solving \eqref{eq_age} is characterized as

\begin{theorem}\label{lem_zero_wait2}\cite{report_AgeOfInfo2016}
Suppose that $f_{\max}=\infty$. Then, the zero-wait sampling policy \eqref{eq_Zero_wait} is an optimal solution to \eqref{eq_age} if and only if \emph{
\begin{align}\label{eq_zero_wait2}
\mathbb{E}[Y^2]\leq 2~ \text{ess}\inf Y~ \mathbb{E}[Y],
\end{align}}\!\!
where \emph{$\text{ess}\inf Y = \sup\{y\in[0,\infty): \Pr[Y< y]=0\}$} can be considered as the minimum possible value of $Y$.
\end{theorem}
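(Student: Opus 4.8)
The plan is to turn the statement into the one-dimensional question of when the threshold $\beta=0$ minimizes the time-average age, using the closed form already available from Theorem~\ref{thm_2}. When $f_{\max}=\infty$ the rate constraint \eqref{eq_constraint} becomes vacuous, so by Theorem~\ref{thm_2} an optimal policy in $\Pi_{\text{signal-ignorant}}$ is a threshold policy \eqref{eq_thm2_1}, and (by the renewal--reward computation behind \eqref{thm_2_obj}) the age achieved by threshold $\beta\ge0$ is
\begin{align}
g(\beta)=\frac{\mathbb{E}[\max(\beta^2,Y^2)]}{2\mathbb{E}[\max(\beta,Y)]}+\mathbb{E}[Y].
\end{align}
The zero-wait policy \eqref{eq_Zero_wait} is precisely \eqref{eq_thm2_1} with $\beta=0$, so zero-wait is optimal if and only if $g(0)=\min_{\beta\ge0}g(\beta)$.

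Next I would analyze the shape of $g$ through the single function $\phi(\beta)=2\beta\,\mathbb{E}[\max(\beta,Y)]-\mathbb{E}[\max(\beta^2,Y^2)]$, i.e.\ the difference between the two sides of the stationarity equation \eqref{eq_coro_2}. Differentiating under the expectation (justified by $\mathbb{E}[Y^2]<\infty$, using that $\beta\mapsto 2\beta\max(\beta,y)-\max(\beta^2,y^2)$ is $C^1$ with $\beta$-derivative $2\max(\beta,y)$) gives $\phi'(\beta)=2\,\mathbb{E}[\max(\beta,Y)]>0$, so $\phi$ is strictly increasing, with $\phi(0)=-\mathbb{E}[Y^2]\le 0$. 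A short computation then shows that $g'(\beta)$ has the same sign as $\Pr[Y<\beta]\,\phi(\beta)$. Writing $m=\text{ess}\inf Y$, we have $\Pr[Y<\beta]=0$ for $\beta\le m$ and $\Pr[Y<\beta]>0$ for $\beta>m$; hence $g$ is constant on $[0,m]$, and on $(m,\infty)$ the sign of $g'$ equals the sign of $\phi$.

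Finally, since $\phi$ is strictly increasing, $g$ is minimized at $\beta=0$ if and only if $\phi(\beta)\ge0$ for all $\beta>m$, equivalently $\phi(m)\ge0$: if $\phi(m)\ge0$ then $g$ is nondecreasing on $(m,\infty)$ and flat on $[0,m]$, so $\beta=0$ attains the minimum; if $\phi(m)<0$ then $\phi$ has a unique root $\beta^{*}>m$ and $g$ is strictly decreasing on $(m,\beta^{*})$, whence $g(\beta^{*})<g(m)=g(0)$ and zero-wait is strictly suboptimal. It then remains to evaluate $\phi(m)$: because $Y\ge m$ almost surely, $\max(m,Y)=Y$ and $\max(m^2,Y^2)=Y^2$, so $\phi(m)=2m\,\mathbb{E}[Y]-\mathbb{E}[Y^2]$, and $\phi(m)\ge0$ is exactly condition \eqref{eq_zero_wait2}. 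The main technical care is in the differentiation step when the law of $Y$ has atoms (handled by the one-sided derivatives of $g$, or equivalently by the $C^1$ integrand above) and in the strictness needed for the ``only if'' direction; the degenerate case $m=0$ reduces \eqref{eq_zero_wait2} to $\mathbb{E}[Y^2]\le0$, i.e.\ $Y=0$ almost surely, which is consistent with the claim.
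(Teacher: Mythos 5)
Your proposal is correct, but it follows a genuinely different route from the paper's proof. The paper argues directly through the fixed-point characterization of the optimal threshold: it observes that the zero-wait policy \eqref{eq_Zero_wait} coincides with the threshold policy \eqref{eq_thm2_1} for \emph{every} $\beta \leq \text{ess}\inf Y$ (not just $\beta=0$), so in the ``only if'' direction the root of \eqref{eq_coro_2} must satisfy $\beta \leq \text{ess}\inf Y$, whence $\max(\beta,Y)=Y$ and $\max(\beta^2,Y^2)=Y^2$ almost surely and \eqref{eq_coro_2} collapses to $2\beta\,\mathbb{E}[Y]=\mathbb{E}[Y^2]$, giving \eqref{eq_zero_wait2}; in the ``if'' direction it verifies that $\beta = \mathbb{E}[Y^2]/(2\mathbb{E}[Y])$ solves \eqref{eq_coro_2} and lies below $\text{ess}\inf Y$ (treating $\mathbb{E}[Y]=0$ as a separate case). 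You instead minimize the closed-form age $g(\beta)$ over all thresholds, establishing that $g$ is flat on $[0,\text{ess}\inf Y]$ and that the sign of $g'$ beyond that point is governed by the strictly increasing function $\phi$, so that zero-wait is optimal iff $\phi(\text{ess}\inf Y)\geq 0$. Both arguments rest on Theorem~\ref{thm_2} to reduce to threshold policies, and you correctly need (and flag) the extension of the renewal--reward formula \eqref{thm_2_obj} to \emph{arbitrary} $\beta$, which the paper never states explicitly but which follows from the same computation. What your route buys is a cleaner ``only if'' direction: the paper's step ``if zero-wait is optimal, then the solution to \eqref{eq_coro_2} must satisfy $\beta\leq \text{ess}\inf Y$'' tacitly assumes that optimality of zero-wait pins down the behavior of the root of \eqref{eq_coro_2}, whereas your monotonicity analysis shows explicitly that when \eqref{eq_zero_wait2} fails, $g$ strictly decreases past $\text{ess}\inf Y$, so zero-wait is \emph{strictly} suboptimal; you also obtain the full unimodal shape of the age as a function of the threshold. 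The cost is a longer calculus argument requiring care at atoms of the law of $Y$ (one-sided derivatives), which the paper's algebraic verification avoids entirely.
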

\ifreport
\begin{proof} 
See Appendix \ref{app_zerowait}. 
\end{proof}
\else
Theorems \ref{lem_zero_wait1} and \ref{lem_zero_wait2} are proven in our technical report \cite{Sun_reportISIT17}.
 \fi




\section{Proof of Theorem \ref{thm_1}}\label{sec_proof}
We prove Theorem \ref{thm_1} in four steps: 
First, we  show that no sample should be generated when the server is busy, which simplifies the optimal online sampling problem. Second, we study the Lagrangian dual problem of the simplified problem, and decompose the Lagrangian dual problem into a series of \emph{mutually independent} per-sample control problems. Each of these per-sample control problems is a continuous-time Markov decision problem. Further, we utilize optimal stopping theory 
\cite{Shiryaev1978} 
to solve the per-sample control problems. Finally, we show that the Lagrangian duality gap of our Markov decision problem is zero. By this, Problem \eqref{eq_DPExpected} is solved. 
The details are as follows.

\subsection{Simplification of Problem \eqref{eq_DPExpected}} %
The following lemma is useful for simplifying \eqref{eq_DPExpected}.
\begin{lemma}\label{lem_zeroqueue}
Suppose that $\pi$ is a feasible  policy for Problem \eqref{eq_DPExpected}, in which at least one sample is taken when the server is busy processing an earlier generated sample. Then, there exists another feasible  policy $\pi_1$ for Problem \eqref{eq_DPExpected} that has a smaller estimation error than policy $\pi$.
Hence, it is suboptimal in Problem \eqref{eq_DPExpected} to take a new sample before the previous sample is delivered. 





%
\end{lemma}
\begin{proof}
Lemma \ref{lem_zeroqueue} is proven by using the strong Markov property of the Wiener process and the orthogonality principle of MMSE estimation. The details are provided in Appendix \ref{app_zeroqueue}. 
\end{proof}

By Lemma \ref{lem_zeroqueue}, we only need to consider a sub-class of sampling policies $\Pi_1\subset\Pi$ such that each  sample is generated and submitted to the server after the previous sample is delivered, i.e., 
\begin{align}
\Pi_1 
 &= \{\pi\in\Pi: S_{i} \geq D_{i-1} \text{ for all $i$}\}. 
\end{align}
This completely eliminates the waiting time wasted in the queue, and hence the queue is always kept empty. 
The \emph{information} that is available for determining $S_i$ includes the history of signal values $(W_t, t\in[0, S_i])$ and the service times  $(Y_1,\ldots, Y_{i-1})$ of  previous samples.\footnote{Note that the generation times $(S_1,\ldots, S_{i-1})$ of previous samples are also included in this information.} To characterize this statement precisely, let us define the $\sigma$-fields $\mathcal{F}_t = \sigma(W_s: s\in[0, t])$ and $\mathcal{F}_t^+ = \cap_{s>t}\mathcal{F}_s$. Then, $\{\mathcal{F}_t^+,t\geq0\}$ is the {filtration} (i.e., a non-decreasing and right-continuous  family of  $\sigma$-fields) of the Wiener process $W_t$.
Given the service times $(Y_1,\ldots, Y_{i-1})$ of previous samples, $S_{i} $ is a \emph{stopping time} with respect to the filtration $\{\mathcal{F}_t^+,t\geq0\}$ of the Wiener process $W_t$, that is
\begin{align}
[\{S_{i}\leq t\} | Y_1,\ldots, Y_{i-1}] \in \mathcal{F}_t^+,~\forall~t\geq0.\label{eq_stopping}
\end{align}  
Then, the policy space $\Pi_1$ can be alternatively expressed as
\begin{align}\label{eq_policyspace}
\!\!\!\!\Pi_1 = & \{S_i : [\{S_{i}\leq t\} | Y_1,\ldots, Y_{i-1}] \in \mathcal{F}_t^+,~\forall~t\geq0,\nonumber\\
&~~~~~~~S_{i}  \geq D_{i-1} \text{ for all $i$}, \nonumber\\
&~~~~~~~\text{$T_i = S_{i+1}-S_i$ is a regenerative process} \}.\!\!
\end{align}  
Recall that any policy in $\Pi$ satisfies ``$T_i = S_{i+1}-S_i$ is a regenerative process''.


Let $Z_i = S_{i+1} - D_{i}\geq0$ represent the \emph{waiting time} between the delivery time $D_{i}$ of sample $i$  and the generation time $S_{i+1}$ of sample $i+1$. Then,
$S_i =  Z_{0} +\sum_{j=1}^{i-1} (Y_{j} + Z_j)$ and $D_i =  \sum_{j=0}^{i-1} (Z_{j}+Y_{j+1})$. If $(Y_1,Y_2,\ldots)$ is given, $(S_0,S_1,\ldots)$ is uniquely determined by $(Z_0,Z_1,\ldots)$. Hence, one can also use $\pi = (Z_0,Z_1,\ldots)$ to represent a sampling policy. 
\ignore{
Proof idea of the following result: Use \cite[Theorem  6.1.18]{Haas2002}. Note that $T_i$ is a discrete time regenerative process. You may find that $\frac{1}{n} 
\mathbb{E}[S_n]$ may directly follow from this theorem. On the other hand, $\int_0^{T} (W_t-\hat W_t)^2dt$ is related to the property of the Wiener process. One method could be first showing the renewal property on $k_i$ and then treat the integral as a renewal award. You can use the strong Markov property to prove i.i.d. property.} 

Because $T_i $ is a regenerative process, using the renewal theory \cite{Ross1996} and \cite[Section 6.1]{Haas2002}, one can show that in Problem \eqref{eq_DPExpected}, $\frac{1}{n} 
\mathbb{E}[S_n]$ is a convergent sequence and 
\begin{align}
&\limsup_{T\rightarrow \infty}\frac{1}{T}\mathbb{E}\left[\int_0^{T} (W_t-\hat W_t)^2dt\right] \nonumber
\\
=& \lim_{n\rightarrow \infty}\frac{\mathbb{E}\left[\int_0^{D_n} (W_t-\hat W_t)^2dt\right]}{\mathbb{E}[D_n]} \nonumber\\
=& \lim_{n\rightarrow \infty}\frac{\sum_{i=0}^{n-1}\mathbb{E}\left[\int_{D_{i}}^{D_{i+1}} (W_t- W_{S_{i}})^2dt\right]}{\sum_{i=0}^{n-1} \mathbb{E}\left[Y_i+Z_i\right]},\nonumber
\end{align}
where in the last step we have used $\mathbb{E}\left[D_n\right]=\mathbb{E}[\sum_{i=0}^{n-1} (Z_{i}+Y_{i+1})] = \mathbb{E}[\sum_{i=0}^{n-1} (Y_{i}+Z_{i})]$.
Hence, \eqref{eq_DPExpected} can be rewritten as the following Markov decision problem:
\begin{align}\label{eq_Simple}
{\mathsf{mse}}_{\text{opt}}\triangleq&\inf_{\pi\in\Pi_1} \lim_{n\rightarrow \infty}\frac{\sum_{i=0}^{n-1}\mathbb{E}\left[\int_{D_{i}}^{D_{i+1}} (W_t\!-\!W_{S_{i}})^2dt\right]}{\sum_{i=0}^{n-1} \mathbb{E}\left[Y_i+Z_i\right]} \\
&~\text{s.t.}~\lim_{n\rightarrow \infty} \frac{1}{n} 
\sum_{i=0}^{n-1} \mathbb{E}\left[Y_i+Z_i\right]\geq \frac{1}{f_{\max}},\label{eq_Simple_constraint}
\end{align}
where ${\mathsf{mse}}_{\text{opt}}$ is the optimal  value of \eqref{eq_Simple}.

In order to solve \eqref{eq_Simple}, let us consider the following Markov decision problem with a parameter $c\geq 0$:
\begin{align}\label{eq_SD}
\!\!p(c)\!\triangleq\!\inf_{\pi\in\Pi_1}&\lim_{n\rightarrow \infty}\frac{1}{n}\sum_{i=0}^{n-1}\!\mathbb{E}\!\left[\int_{D_{i}}^{D_{i+1}} \!\!\!\!(W_t-W_{S_{i}})^2dt\!-\! c(Y_i\!+\!Z_i)\!\right]\!\!\!\!\\
\text{s.t.}~&\lim_{n\rightarrow \infty} \frac{1}{n} 
\sum_{i=0}^{n-1} \mathbb{E}\left[Y_i+Z_i\right]\geq \frac{1}{f_{\max}},\nonumber
\end{align}
where $p(c)$ is the optimum  value of \eqref{eq_SD}. Similar with Dinkelbach's method  \cite{Dinkelbach67} for nonlinear fractional programming, we can obtain the following lemma for our Markov decision problem:
\begin{lemma} \label{lem_ratio_to_minus}
The following assertions are true:
\begin{itemize}
\vspace{0.5em}
\item[(a).] \emph{${\mathsf{mse}}_{\text{opt}} \gtreqqless c $} if and only if $p(c)\gtreqqless 0$. 
\vspace{0.5em}
\item[(b).] If $p(c)=0$, the solutions to \eqref{eq_Simple}
and \eqref{eq_SD} are identical. 
\end{itemize}
\end{lemma}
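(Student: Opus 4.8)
The plan is to treat Lemma~\ref{lem_ratio_to_minus} as an instance of Dinkelbach's transformation for fractional programs, adapted to our Markov decision setting. First I would fix notation: for a feasible policy $\pi\in\Pi_1$ (one also satisfying the rate constraint \eqref{eq_Simple_constraint}), write $A(\pi)=\lim_{n}\frac1n\sum_{i=0}^{n-1}\mathbb{E}[\int_{D_i}^{D_{i+1}}(W_t-W_{S_i})^2dt]$ and $B(\pi)=\lim_{n}\frac1n\sum_{i=0}^{n-1}\mathbb{E}[Y_i+Z_i]$. Because $T_i$ is a regenerative process, these Ces\`aro limits exist, and feasibility forces $B(\pi)\ge 1/f_{\max}>0$. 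With this, $\mathsf{mse}_{\text{opt}}=\inf_\pi A(\pi)/B(\pi)$ and $p(c)=\inf_\pi[A(\pi)-cB(\pi)]$, both infima over the \emph{same} feasible set. The single elementary fact driving everything is that, since $B(\pi)>0$, for each individual $\pi$ we have $A(\pi)/B(\pi)\gtreqqless c \iff A(\pi)-cB(\pi)\gtreqqless 0$.

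For part (a), the ``$\ge$'' equivalence is immediate and avoids any attainment issue: $\mathsf{mse}_{\text{opt}}\ge c$ means $A(\pi)/B(\pi)\ge c$ for \emph{every} feasible $\pi$, which by the elementary fact is equivalent to $A(\pi)-cB(\pi)\ge 0$ for every $\pi$, i.e. $p(c)\ge 0$; taking contrapositives gives $\mathsf{mse}_{\text{opt}}<c\iff p(c)<0$. To upgrade these into the full three-way statement I would establish that $c\mapsto p(c)$ is an infimum of affine functions of $c$, hence concave, and in fact strictly decreasing: for $c_1<c_2$, $p(c_1)\ge p(c_2)+(c_2-c_1)\inf_\pi B(\pi)\ge p(c_2)+(c_2-c_1)/f_{\max}$. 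Concavity together with finiteness near $c=\mathsf{mse}_{\text{opt}}$ gives continuity there, so combining $p(c)\ge 0$ for $c\le\mathsf{mse}_{\text{opt}}$ with $p(c)<0$ for $c>\mathsf{mse}_{\text{opt}}$ and continuity yields $p(\mathsf{mse}_{\text{opt}})=0$; strict monotonicity then separates the strict cases, completing $\mathsf{mse}_{\text{opt}}\gtreqqless c\iff p(c)\gtreqqless 0$.

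Part (b) then follows directly from the pointwise fact and part (a). If $p(c)=0$ then $\mathsf{mse}_{\text{opt}}=c$ by (a); a feasible $\pi$ solves \eqref{eq_Simple} iff $A(\pi)/B(\pi)=c$, which (since $B(\pi)>0$) holds iff $A(\pi)-cB(\pi)=0=p(c)$, i.e. iff $\pi$ attains the infimum in \eqref{eq_SD}. Hence the two problems have identical optimal sets.

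The main obstacle I anticipate is not the algebra but justifying the regularity that makes the three-way equivalence watertight: namely that $p(c)$ is \emph{finite} (not $-\infty$) for $c$ in a neighborhood of $\mathsf{mse}_{\text{opt}}$, so that concavity delivers continuity at $\mathsf{mse}_{\text{opt}}$. This is where the structure of the Wiener process enters: because $\mathbb{E}[(W_t-W_{S_i})^2]=t-S_i$ grows linearly, the per-cycle error contributing to $A$ grows \emph{superlinearly} in the inter-sample length while $B$ grows only linearly, so policies with large $B$ necessarily have large $A/B$ and $A(\pi)-cB(\pi)$ stays bounded below. I would make this coercivity precise (controlling the contribution of long cycles and using $\mathbb{E}[Y^2]<\infty$) to rule out $p(c)=-\infty$ and to guarantee the relevant infima behave well; the rest is the routine Dinkelbach bookkeeping described above.
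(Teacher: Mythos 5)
Your proposal is correct, and its engine is the same Dinkelbach-type ratio-to-difference reduction that the paper uses: since every feasible policy has a positive long-run mean cycle length $B(\pi)$, the sign of $A(\pi)-cB(\pi)$ matches the position of $A(\pi)/B(\pi)$ relative to $c$, and your part (b) is essentially identical to the paper's. The genuine difference is how the three-way equivalence in part (a) is closed. The paper proves ``$\mathsf{mse}_{\text{opt}}\le c$ iff $p(c)\le 0$'' and ``$\mathsf{mse}_{\text{opt}}<c$ iff $p(c)<0$'' by running the same ratio/difference manipulation once with ``$\le$'' and once with ``$<$'' and then negating; but its forward step asserts that $\mathsf{mse}_{\text{opt}}\le c$ yields a feasible policy whose ratio is at most $c$, which tacitly assumes the infimum in \eqref{eq_Simple} is attained---and the only case where this matters is precisely the delicate boundary case $c=\mathsf{mse}_{\text{opt}}$. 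Your route avoids that: the ``$\ge$ iff $\ge$'' equivalence is the contrapositive of the attainment-free strict case, and you then pin down $p(\mathsf{mse}_{\text{opt}})=0$ through structural properties of $p$ (infimum of affine functions, hence concave; strictly decreasing with an explicit modulus; continuous where finite). This buys rigor exactly where the paper is loose, at the price of the finiteness/coercivity step you flag, which is indeed payable with tools already in the paper: by Lemma \ref{lem_stop} (or \eqref{eq_integral}) together with Wald's identity and Jensen's inequality, $\mathbb{E}\left[(W_{S_i+T_i}-W_{S_i})^4\right]\ge(\mathbb{E}[T_i])^2$ for the cycle length $T_i=Y_i+Z_i$, so each cycle contributes at least $\frac{1}{6}(\mathbb{E}[T_i])^2-(c-\mathbb{E}[Y])\mathbb{E}[T_i]\ge-\frac{3}{2}(c-\mathbb{E}[Y])^2$ to the objective of \eqref{eq_SD}, whence $p(c)>-\infty$ for every $c$. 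Two small repairs you should make: (i) when $f_{\max}=\infty$ the bound $B(\pi)\ge 1/f_{\max}$ degenerates, so take the positive lower bound on $B(\pi)$ instead from the policy class (for $\pi\in\Pi_1$ each cycle contains a full service time, giving $B(\pi)\ge\mathbb{E}[Y]$, and the regenerative-process assumption with renewal theory gives positivity in general, which is how the paper argues); (ii) state explicitly that the Ces\`{a}ro limits $A(\pi)$ and $B(\pi)$ exist, which again follows from the regenerative assumption the paper imposes on $\Pi$.
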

\begin{proof}
See Appendix \ref{app_ratio_to_minus}.
\end{proof}
Hence, the solution to \eqref{eq_Simple} can be obtained by solving \eqref{eq_SD} and seeking a ${\mathsf{mse}}_{\text{opt}}\geq 0$ such that 
\begin{align}\label{eq_c}
p({\mathsf{mse}}_{\text{opt}})=0. 
\end{align}

\subsection{Lagrangian Dual Problem of  \eqref{eq_SD} when \emph{$c = {\mathsf{mse}}_{\text{opt}}$} }

Although \eqref{eq_SD} is a continuous-time Markov decision problem with a continuous state space, rather than a convex optimization problem, it is possible to use the Lagrangian dual approach to solve \eqref{eq_SD} and show that it admits no duality gap. 

When $c = {\mathsf{mse}}_{\text{opt}}$, define the following Lagrangian
\begin{align}\label{eq_Lagrangian}
& L(\pi;\lambda) \nonumber\\
=& \lim_{n\rightarrow \infty}\!\frac{1}{n}\sum_{i=0}^{n-1}\mathbb{E}\!\left[\int_{D_{i}}^{D_{i+1}} \!\!\!\!(W_t\!-\!W_{S_{i}})^2dt\!-\! ({\mathsf{mse}}_{\text{opt}}+\lambda)(Y_i\!+\!Z_i)\right] \nonumber\\
&~+ \frac{\lambda}{f_{\max}},
\end{align}
where $\lambda\geq0$ is the dual variable. 
Let
\begin{align}\label{eq_primal}
e(\lambda) \triangleq \inf_{\pi\in\Pi_1}  L(\pi;\lambda).
\end{align}
Then, the Lagrangian dual problem of \eqref{eq_SD} is defined by
\begin{align}\label{eq_dual}
d \triangleq \max_{\lambda\geq 0}e(\lambda),
\end{align}
where $d$ is the optimum value of \eqref{eq_dual}. 
Weak duality \cite{Bertsekas2003,Boyd04} implies that $d \leq p({\mathsf{mse}}_{\text{opt}})$. 
In Section \ref{sec:dual}, we will establish strong duality, i.e., $d = p({\mathsf{mse}}_{\text{opt}})$.



In the sequel, we solve \eqref{eq_primal}.
%
Using the stopping times and martingale theory of the Wiener process, we can obtain the following  lemma: 
%
\begin{lemma}\label{lem_stop}
Let $\tau \geq0$ be a stopping time of the Wiener process $W_t$ with $\mathbb{E}[\tau^2]<\infty$, 
then
\begin{align}\label{eq_stop}
\mathbb{E}\left[\int_0^\tau W_t^2 dt\right]= \frac{1}{6}\mathbb{E}\left[ W_\tau^4 \right].
\end{align}
\end{lemma}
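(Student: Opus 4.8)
The plan is to apply It\^o's formula to the map $x\mapsto x^4$ and extract the identity by optional stopping. Since $f(x)=x^4$ has $f'(x)=4x^3$ and $f''(x)=12x^2$, It\^o's lemma gives $d(W_t^4)=4W_t^3\,dW_t+6W_t^2\,dt$, so that
\begin{align}
W_t^4 = 4\int_0^t W_s^3\,dW_s + 6\int_0^t W_s^2\,ds. \nonumber
\end{align}
Equivalently, $W_t^4-6\int_0^t W_s^2\,ds$ is a continuous local martingale whose martingale part is $4\int_0^t W_s^3\,dW_s$. If I can justify that, evaluated at the stopping time $\tau$, the stochastic-integral term has zero expectation, then taking expectations yields $\mathbb{E}[W_\tau^4]=6\,\mathbb{E}[\int_0^\tau W_s^2\,ds]$, which is exactly \eqref{eq_stop}.

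First I would record an a priori moment bound via the Burkholder--Davis--Gundy inequality, namely $\mathbb{E}[\sup_{t\le\tau}W_t^4]\le C\,\mathbb{E}[\langle W\rangle_\tau^2]=C\,\mathbb{E}[\tau^2]<\infty$, where the hypothesis $\mathbb{E}[\tau^2]<\infty$ enters. This bound shows both sides of \eqref{eq_stop} are finite: the right side because $\mathbb{E}[\int_0^\tau W_s^2\,ds]\le \mathbb{E}[\tau\sup_{s\le\tau}W_s^2]\le (\mathbb{E}[\tau^2])^{1/2}(\mathbb{E}[\sup_{s\le\tau}W_s^4])^{1/2}$ by Cauchy--Schwarz, and the left side directly.

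Next I would localize. Taking $\sigma_n=\inf\{t:|W_t|\ge n\}$, the integrand $4W_s^3$ is bounded by $4n^3$ on $[0,\sigma_n]$ and $\mathbb{E}[\sigma_n]=n^2<\infty$, so the stopped process $\int_0^{t\wedge\sigma_n}4W_s^3\,dW_s$ is an $L^2$-bounded, hence uniformly integrable, true martingale; optional stopping then gives $\mathbb{E}[\int_0^{\tau\wedge\sigma_n}4W_s^3\,dW_s]=0$ and therefore $\mathbb{E}[W_{\tau\wedge\sigma_n}^4]=6\,\mathbb{E}[\int_0^{\tau\wedge\sigma_n}W_s^2\,ds]$ for every $n$. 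I would then send $n\to\infty$: on the right, $\int_0^{\tau\wedge\sigma_n}W_s^2\,ds\uparrow\int_0^\tau W_s^2\,ds$, so monotone convergence applies; on the left, $W_{\tau\wedge\sigma_n}^4\le\sup_{t\le\tau}W_t^4$ with the majorant integrable by the BDG bound, so dominated convergence gives $\mathbb{E}[W_{\tau\wedge\sigma_n}^4]\to\mathbb{E}[W_\tau^4]$. Combining the two limits establishes \eqref{eq_stop}.

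The main obstacle is entirely the integrability and uniform-integrability bookkeeping required to upgrade the local martingale to a genuine expectation identity; without the moment condition $\mathbb{E}[\tau^2]<\infty$ the stochastic-integral term need not have zero mean and the passage to the limit could fail. Everything else is an immediate consequence of It\^o's formula. One might instead try to exploit the classical polynomial martingale $W_t^4-6tW_t^2+3t^2$, but evaluating it at $\tau$ produces the cross term $\mathbb{E}[\tau W_\tau^2]$ rather than $\mathbb{E}[\int_0^\tau W_s^2\,ds]$, so I prefer the direct It\^o route above.
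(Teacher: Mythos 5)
Your proof is correct, and it reaches the identity by a genuinely different technical route than the paper, even though both arguments ultimately rest on the same quartic martingale $W_t^4-6\int_0^t W_s^2\,ds$ (you derive it via It\^o's formula; the paper cites it directly from the Brownian motion literature). The divergence is in how the optional-stopping identity is extended from truncated times to $\tau$ itself. The paper truncates in \emph{time}: it applies optional stopping at the bounded stopping times $t\wedge\tau$, uses monotone convergence on the integral side, and then closes the fourth-moment side with a two-sided argument --- a binomial expansion of $W_\tau^4=\mathbb{E}\bigl[(W_{t\wedge\tau}+(W_\tau-W_{t\wedge\tau}))^4\bigr]$ combined with the strong Markov property and Wald's identities to get $\mathbb{E}[W_\tau^4]\geq\mathbb{E}[W_{t\wedge\tau}^4]$, and Fatou's lemma for the reverse inequality. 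You instead localize in \emph{space} with the exit times $\sigma_n=\inf\{t:|W_t|\geq n\}$, which makes the stochastic integral an $L^2$-bounded (hence uniformly integrable) true martingale so that optional stopping applies at $\tau\wedge\sigma_n$, and then you close the limit with a Burkholder--Davis--Gundy majorant $\sup_{t\leq\tau}W_t^4$ and dominated convergence. Your route is more standard and arguably cleaner: it makes completely transparent where the hypothesis $\mathbb{E}[\tau^2]<\infty$ enters (only through the BDG bound), and it avoids the paper's expansion step, whose factorization of cross terms such as $\mathbb{E}[W_{t\wedge\tau}^3(W_\tau-W_{t\wedge\tau})]$ into products of expectations requires some care to justify rigorously (one must condition on $\mathcal{F}_{t\wedge\tau}$ and apply optional stopping to the post-$t\wedge\tau$ increments, since $\tau$ is not independent of the path). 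What the paper's argument buys in exchange is economy of tools: it needs only Wald's identities and Fatou's lemma, with no appeal to BDG or to stochastic-integral localization.
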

\begin{proof}
See Appendix \ref{applem_stop}.
\end{proof}

By using Lemma \ref{lem_stop} and  the sufficient statistics of \eqref{eq_primal},  we can show that for every $i = 1,2,\ldots$,
\begin{align}\label{eq_integral}
&\mathbb{E}\left[\int_{D_{i}}^{D_{i+1}} (W_t- W_{S_i})^2dt\right] \nonumber\\
= & \frac{1}{6}\mathbb{E}\left[ (W_{S_{i}+Y_i+Z_i}-W_{S_i})^4 \right] + \mathbb{E}\left[Y_i + Z_i \right] \mathbb{E}\left[Y_i \right],  
\end{align}
which is proven in Appendix \ref{app_integral}.

For any $s\geq 0$, define the $\sigma$-fields $\mathcal{F}^{s}_t = \sigma(W_{s+v}-W_{s}: v\in[0, t])$ and $\mathcal{F}^{s+}_t = \cap_{v>t} \mathcal{F}^{s}_v$, as well as the {filtration} $\{\mathcal{F}^{s+}_t,t\geq0\}$ of the {time-shifted Wiener process} $\{W_{s+t}-W_{s},t\in[0,\infty)\}$. Define ${\mathfrak{M}}_s$ as the set of square-integrable stopping times of $\{W_{s+t}-W_{s},t\in[0,\infty)\}$, i.e.,
\begin{align}
\mathfrak{M}_{s} = \{\tau \geq 0:  \{\tau\leq t\} \in \mathcal{F}^{s+}_t, \mathbb{E}\left[\tau^2\right]<\infty\}.\nonumber
\end{align}
By substituting  
  \eqref{eq_integral} into  \eqref{eq_primal}  and using again the sufficient statistics of \eqref{eq_primal},  
we can  obtain
\begin{theorem}\label{thm_solution_form}
An optimal solution  $(Z_0,Z_1,\ldots)$ 
to 
\eqref{eq_primal} 
satisfies 
\begin{align}\label{eq_opt_stopping}
Z_i =& \arg \inf_{\tau  \in \mathfrak{M}_{S_i+Y_i}} \mathbb{E}\left[\frac{1}{2}(W_{S_{i}+Y_i+\tau}-W_{S_i})^4\right.\nonumber\\
&~~~~~~~~~~~~~~~~\left.- {\beta}(Y_i+\tau)\bigg|W_{S_{i}+Y_i}-W_{S_i},Y_i \right],
\end{align}
where $\beta$ is given by \emph{
\begin{align}\label{eq_beta_new}
\beta = 3({\mathsf{mse}}_{\text{opt}} + \lambda - \mathbb{E}\left[Y \right]) \geq 0.
\end{align}}\!\!\!

\ignore{where $\beta \triangleq 3({\mathsf{mse}}_{\text{opt}} + \lambda - \mathbb{E}\left[Y \right]) \geq0$ is determined by solving
\begin{align}\label{eq_equation}
\mathbb{E}[Y_i+Z_i] \!=\! \max\left(\frac{1}{f_{\max}}, \frac{\mathbb{E}[(W_{S_{i}+Y_i+Z_i}-W_{S_i})^4]}{2\beta}\right).\!
\end{align}
The optimal  value of \eqref{eq_Simple} is then given by \emph{
\begin{align}\label{thm_3_obj}
\mathsf{mse}_{\text{opt}}=\frac{\mathbb{E}[(W_{S_{i}+Y_i+Z_i}-W_{S_i})^4]}{6\mathbb{E}[Y_i+Z_i] } + \mathbb{E}[Y].
\end{align}}}
\end{theorem}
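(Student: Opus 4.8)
The plan is to reduce the long-run average objective in \eqref{eq_primal} to a sum of \emph{decoupled} per-sample optimal stopping problems, and then to read off the stopping rule \eqref{eq_opt_stopping} from the per-sample problem. First I would substitute the closed-form expression \eqref{eq_integral} for $\mathbb{E}[\int_{D_i}^{D_{i+1}}(W_t-W_{S_i})^2dt]$ into the Lagrangian \eqref{eq_Lagrangian}. Using $\mathbb{E}[Y_i]=\mathbb{E}[Y]$, the term $\mathbb{E}[Y_i+Z_i]\,\mathbb{E}[Y_i]$ merges with $-(\mathsf{mse}_{\text{opt}}+\lambda)(Y_i+Z_i)$, so that the $i$-th summand becomes $\frac{1}{6}\mathbb{E}[(W_{S_i+Y_i+Z_i}-W_{S_i})^4]-(\mathsf{mse}_{\text{opt}}+\lambda-\mathbb{E}[Y])\,\mathbb{E}[Y_i+Z_i]$. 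Setting $\beta=3(\mathsf{mse}_{\text{opt}}+\lambda-\mathbb{E}[Y])$ as in \eqref{eq_beta_new} and multiplying by the harmless positive factor $3$, minimizing $L(\pi;\lambda)$ is equivalent to minimizing the long-run average of $\frac{1}{2}(W_{S_i+Y_i+Z_i}-W_{S_i})^4-\beta(Y_i+Z_i)$ (the additive constant $3\lambda/f_{\max}$ plays no role). The sign constraint $\beta\geq0$ reflects $\mathsf{mse}_{\text{opt}}\geq\mathbb{E}[Y]$ together with $\lambda\geq0$; the former holds because the MMSE error equals the age, which is at least $Y_i$ throughout each delivery interval $[D_i,D_{i+1})$.

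Next I would establish the decoupling using the strong Markov property. Conditioned on $\mathcal{F}_{S_i+Y_i}$, the time-shifted process $\{W_{S_i+Y_i+t}-W_{S_i+Y_i}:t\geq0\}$ is a fresh standard Wiener process independent of the past, and $Z_i$ is a stopping time of this process. Writing $W_{S_i+Y_i+Z_i}-W_{S_i}=(W_{S_i+Y_i}-W_{S_i})+(W_{S_i+Y_i+Z_i}-W_{S_i+Y_i})$, the $i$-th summand depends on the history only through the sufficient statistic $(W_{S_i+Y_i}-W_{S_i},\,Y_i)$ and the fresh increments on which $Z_i$ acts. Moreover, by stationarity of the increments, the distribution of every later summand depends only on increments after $S_{i+1}$ and is therefore invariant to the location $S_{i+1}=S_i+Y_i+Z_i$ at which the $i$-th interval ends. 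Hence the choice of $Z_i$ influences only the $i$-th term of the average, and the long-run average is minimized by minimizing each term separately, which is precisely the conditional optimal stopping problem \eqref{eq_opt_stopping}.

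The hard part will be making the decoupling rigorous at the level of the limiting time average rather than a finite horizon: I must justify interchanging the infimum over $\pi\in\Pi_1$ with the term-by-term minimization, i.e.\ that assembling the per-sample optimal stopping rules yields an admissible policy (in particular that the induced inter-sampling times $T_i$ are square-integrable and regenerative) and that no policy coupling decisions across samples can do better. Here stationarity helps: since the $Y_i$ are i.i.d.\ and the Wiener increments are stationary, the sufficient statistic $(W_{S_i+Y_i}-W_{S_i},Y_i)$ has the same law for every $i$, so the per-sample optimal stopping rule is a single fixed rule applied at every $i$; the resulting $T_i$ are then i.i.d.\ and hence regenerative, which closes this gap and legitimizes the term-by-term reduction.
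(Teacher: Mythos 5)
Your proposal follows essentially the same route as the paper's proof: substitute \eqref{eq_integral} into the Lagrangian \eqref{eq_Lagrangian}, absorb the cross term $\mathbb{E}[Y_i+Z_i]\,\mathbb{E}[Y_i]$ into the multiplier so that the $i$-th summand becomes $\frac{1}{6}\mathbb{E}[(W_{S_i+Y_i+Z_i}-W_{S_i})^4]-\frac{\beta}{3}\mathbb{E}[Y_i+Z_i]$ with $\beta=3(\mathsf{mse}_{\text{opt}}+\lambda-\mathbb{E}[Y])$, and then decouple the long-run average into per-sample stopping problems through the strong Markov property and the sufficient statistic $(W_{S_i+Y_i}-W_{S_i},Y_i)$. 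The paper does exactly this (it invokes the sufficient-statistic results of Bertsekas and of Kumar--Varaiya where you argue invariance of the later summands under shifts of $S_{i+1}$), and it closes the admissibility question the same way you do: the per-sample rule is a single fixed rule, so the resulting $Z_i$'s are i.i.d.\ and the inter-sampling times are regenerative.

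The one step that would fail as written is your justification of $\beta\geq 0$. You claim $\mathsf{mse}_{\text{opt}}\geq\mathbb{E}[Y]$ ``because the MMSE error equals the age.'' That identity is precisely the signal-ignorant reduction \eqref{eq_age_MSE}, and it is false for signal-aware policies: when the sampler sees $W_t$, the interval endpoint $D_{i+1}$ is correlated with the error $W_t-W_{S_i}$ (the threshold rule stops the interval exactly when the error gets large), so one cannot integrate the pointwise variance identity over the random interval; indeed the entire point of the paper, quantified in \eqref{eq_ratio_MSE}, is that signal-aware sampling achieves an MSE strictly below the time-average age. The conclusion you need is nevertheless immediate from a tool you have already deployed: by \eqref{eq_integral}, the error accumulated on $[D_i,D_{i+1})$ equals $\frac{1}{6}\mathbb{E}[(W_{S_i+Y_i+Z_i}-W_{S_i})^4]+\mathbb{E}[Y_i+Z_i]\,\mathbb{E}[Y_i]\geq \mathbb{E}[Y_i+Z_i]\,\mathbb{E}[Y]$, so dividing accumulated error by accumulated expected time in the fractional program \eqref{eq_Simple} (formally, via Lemma \ref{lem_ratio_to_minus}) yields $\mathsf{mse}_{\text{opt}}\geq\mathbb{E}[Y]$ and hence $\beta\geq0$; this is exactly how the paper derives the sign constraint. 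With that repair, your argument is sound and matches the paper's.
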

\begin{proof}
See Appendix \ref{app_solution_form}. 
\end{proof}

Note that because the $Y_i$'s are \emph{i.i.d.} and the strong Markov property of the Wiener process, 
the $Z_i$'s as solutions of \eqref{eq_opt_stopping} are also \emph{i.i.d.} 

\subsection{Per-Sample Optimal Stopping Solution to \eqref{eq_opt_stopping}}
We use  optimal stopping theory  \cite{Shiryaev1978} 
to solve \eqref{eq_opt_stopping}.
Let us first pose \eqref{eq_opt_stopping} in the language of  optimal stopping.
A continuous-time two-dimensional Markov chain $X_t$ on a probability space $(\mathbb{R}^2,\mathcal{F},\mathbb{P})$ is defined as follows:  Given the initial state $ X_0=x = (s,b)$, the state $X_t$ at time $t$ is  
\begin{align}\label{eq_Markov}
X_t= (s + t, b + W_t), 
\end{align}
where $\{W_t,t\geq 0\}$ is a standard Wiener process. Define $\mathbb{P}_x(A) = \mathbb{P}(A | X_0 = x)$ and $\mathbb{E}_x Z = \mathbb{E}(Z | X_0 = x)$, respectively, 
as the conditional probability of event $A$ and the conditional expectation of random variable $Z$ 
for given  initial state $X_0 =x$. Define the $\sigma$-fields $\mathcal{F}^{X}_t  = \sigma(X_{v}: v\in[0, t])$ and $\mathcal{F}^{X+}_t  = \cap_{v>t} \mathcal{F}^{X}_v$, as well as the {filtration} $\{\mathcal{F}^{X+}_t ,t\geq0\}$ of the Markov chain $X_t$.
A random variable $\tau: \mathbb{R}^2\rightarrow [0,\infty)$ is said to be a \emph{stopping time} of $X_t$ if $\{\tau \leq t\}\in \mathcal{F}^{X+}_t $ {for all}~$t\geq0$. Let ${\mathfrak{M}}$ be the set of  square-integrable stopping times of $X_t$, i.e.,
\begin{align}
\mathfrak{M} = \{\tau \geq 0:  \{\tau\leq t\} \in \mathcal{F}^{X+}_t , \mathbb{E}\left[\tau^2\right]<\infty\}.\nonumber
\end{align}

Our goal is to solve the following optimal stopping problem:
\begin{align}\label{eq_stop_prob}
\sup_{\tau\in \mathfrak{M}}  \mathbb{E}_x g(X_\tau),
\end{align}
where $X_0 = x$ is the initial state of the Markov chain $X_t$, the function $g: \mathbb{R}^2\rightarrow \mathbb{R}$ is defined as 
\begin{align}\label{eq_fun}
g(s,b) =\beta s  - \frac{1}{2}b^4
\end{align}
with parameter $\beta \geq 0$. Notice that \eqref{eq_opt_stopping} is a special case of \eqref{eq_stop_prob} where  
 the initial state is $ x=(Y_i,W_{S_{i}+Y_i}-W_{S_i})$,  and  $W_t$ is replaced by the time-shifted Wiener process $W_{S_{i}+Y_i+t}-W_{S_i}$.

\begin{theorem}\label{thm_optimal_stopping}
For all $x= (s,b)\in \mathbb{R}^2$ and  $\beta\geq0$, an optimal stopping time for solving \eqref{eq_stop_prob} is
\begin{align}\label{eq_opt_stop_solution}
\tau^* = \inf \left\{ t \geq 0:\! \left|b+W_t\right| \!\geq\! \sqrt{\beta}\right\}.
\end{align}
\end{theorem}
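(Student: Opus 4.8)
The plan is to solve \eqref{eq_stop_prob} by the guess-and-verify method of optimal stopping theory: I will propose a candidate value function $V(s,b)=\sup_{\tau\in\mathfrak{M}}\mathbb{E}_{x}\,g(X_\tau)$, derive its closed form from the associated free-boundary problem, and then confirm through a verification theorem that the stopping time $\tau^*$ in \eqref{eq_opt_stop_solution} attains it. Because the reward $g(s,b)=\beta s-\tfrac{1}{2}b^4$ is additive and the time coordinate of $X_t$ grows deterministically, I first note that $\mathbb{E}_{x}\,g(X_\tau)=\beta s+\beta\,\mathbb{E}[\tau]-\tfrac{1}{2}\mathbb{E}[(b+W_\tau)^4]$, so the problem depends on the initial state only through the additive term $\beta s$ and reduces to a one-dimensional stopping problem in the space coordinate. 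Applying It\^{o}'s formula to $(b+W_t)^4$ (equivalently, the shifted version of Lemma \ref{lem_stop}) lets me rewrite the objective as $\beta s-\tfrac{1}{2}b^4+\mathbb{E}[\int_0^\tau(\beta-3(b+W_t)^2)\,dt]$, which makes the natural candidate boundary the symmetric pair $b=\pm\sqrt{\beta}$, with continuation region $\{|b|<\sqrt{\beta}\}$ and stopping region $\{|b|\ge\sqrt{\beta}\}$.

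To pin down $V$, I use the generator $\mathcal{A}=\partial_s+\tfrac{1}{2}\partial_{bb}$ of $X_t$. In the continuation region I require $\mathcal{A}V=0$, i.e. $\beta+\tfrac{1}{2}V_{bb}=0$; together with the $b\mapsto -b$ symmetry this forces $V(s,b)=\beta s-\beta b^2+c$ there, while in the stopping region $V=g=\beta s-\tfrac{1}{2}b^4$. Imposing value matching and smooth pasting (continuity of $V$ and of $V_b$) at the free boundary both locates the boundary at $\sqrt{\beta}$ and fixes $c=\tfrac{1}{2}\beta^2$; explicitly,
\begin{align}
V(s,b)=\begin{cases}\beta s-\beta b^2+\tfrac{1}{2}\beta^2, & |b|\le\sqrt{\beta},\\[2pt] \beta s-\tfrac{1}{2}b^4, & |b|>\sqrt{\beta}.\end{cases}\nonumber
\end{align}
This $V$ is $C^1$ on $\mathbb{R}^2$ and $C^2$ away from the lines $|b|=\sqrt{\beta}$.

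I then verify the variational inequalities $V\ge g$ and $\mathcal{A}V\le 0$ on all of $\mathbb{R}^2$. The first reduces, on the continuation region, to $\tfrac{1}{2}b^4-\beta b^2+\tfrac{1}{2}\beta^2=\tfrac{1}{2}(b^2-\beta)^2\ge 0$, and holds with equality on the stopping region. The second gives $\mathcal{A}V=0$ on the continuation region and $\mathcal{A}V=\beta-3b^2\le \beta-3\beta=-2\beta\le 0$ on $\{|b|\ge\sqrt{\beta}\}$, so both inequalities hold.

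Finally I invoke a verification argument based on It\^{o}'s formula. Since $V$ is $C^1$ and piecewise $C^2$, the generalized It\^{o} formula applies, and the smooth-pasting condition (continuity of $V_b$) guarantees that no local-time term arises at the boundary, giving $dV(X_t)=\mathcal{A}V(X_t)\,dt+V_b(X_t)\,dW_t$. For any $\tau\in\mathfrak{M}$, applying this on $[0,\tau\wedge n]$, taking expectations (the stochastic integral is a martingale on the bounded horizon), and using $\mathcal{A}V\le 0$ yields $\mathbb{E}_x V(X_{\tau\wedge n})\le V(x)$, hence $\mathbb{E}_x g(X_{\tau\wedge n})\le V(x)$; for $\tau^*$ one has $\mathcal{A}V=0$ on $[0,\tau^*)$ and $V=g$ on the boundary, so the same computation is an equality. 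The main obstacle is the passage $n\to\infty$: I must establish uniform integrability of $\{g(X_{\tau\wedge n})\}$ so that $\mathbb{E}_x g(X_{\tau\wedge n})\to\mathbb{E}_x g(X_\tau)$. This is exactly where the square-integrability condition $\mathbb{E}[\tau^2]<\infty$ built into $\mathfrak{M}$ enters: combining the identity $\mathbb{E}[W_\tau^4]=6\,\mathbb{E}[\int_0^\tau W_t^2\,dt]$ from Lemma \ref{lem_stop} with Cauchy--Schwarz and maximal-inequality bounds controls the quartic term by $\mathbb{E}[\tau^2]$, while $\tau^*$, being the exit time of a Wiener process from the bounded interval $(-\sqrt{\beta},\sqrt{\beta})$, has finite moments of all orders and hence lies in $\mathfrak{M}$. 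Together these give $\sup_{\tau\in\mathfrak{M}}\mathbb{E}_x g(X_\tau)=V(x)=\mathbb{E}_x g(X_{\tau^*})$, which establishes the optimality of $\tau^*$.
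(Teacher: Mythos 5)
Your proof is correct, and it reconstructs exactly the objects the paper uses: your candidate $V$ coincides with the paper's $u(s,b)=\mathbb{E}_x g(X_{\tau^*})$ in Lemma \ref{lem1_stop}, your two variational inequalities $V\ge g$ and $\mathcal{A}V\le 0$ are in substance Lemmas \ref{lem1_stop} and \ref{lem2_stop}, and your treatment of the $C^1$-but-not-$C^2$ boundary via the generalized It\^{o} formula is the same It\^{o}--Tanaka--Meyer argument used in Appendix \ref{app_optimal_stopping1}. The two proofs diverge, however, in how the candidate is produced and in how optimality is concluded. The paper obtains $u$ by directly computing the reward of $\tau^*$ (via the exit-time identity $\mathbb{E}_x\tau^*=\beta-b^2$), so attainment by $\tau^*$ is automatic, and it then closes the argument by citing the excessive-function theorem of \cite[Section 3.3.1]{Shiryaev1978}: an excessive majorant of $g$ attained by an a.s.\ finite stopping time is the value function. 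You instead derive $V$ from the free-boundary problem with value matching and smooth pasting, and close with a self-contained martingale verification; this forces you to do the work that Shiryaev's theorem hides, namely the passage $n\to\infty$ in $\mathbb{E}_x g(X_{\tau\wedge n})\le V(x)$, which you correctly reduce to uniform integrability of the quartic term controlled by $\mathbb{E}[\tau^2]<\infty$ (Burkholder--Davis--Gundy, or Doob's maximal inequality combined with Lemma \ref{lem_stop}, both suffice), together with the observation that $\tau^*$, being an exit time from a bounded interval, has all moments finite and hence lies in $\mathfrak{M}$. Your route buys self-containedness and makes visible exactly where the square-integrability restriction defining $\mathfrak{M}$ is used; the paper's route is shorter on the page but delegates precisely this limiting step to the cited theorem. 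One presentational caveat: your integral representation $\beta s-\tfrac{1}{2}b^4+\mathbb{E}[\int_0^\tau(\beta-3(b+W_t)^2)dt]$ does not by itself suggest the boundary $\pm\sqrt{\beta}$ (its integrand changes sign at $\pm\sqrt{\beta/3}$); it is the smooth-pasting computation in your next step that pins the boundary down, so the claim that this representation makes $\pm\sqrt{\beta}$ the ``natural'' candidate should not be read as a justification---your subsequent derivation, which is what actually carries the argument, is fine.
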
 


In order to prove Theorem \ref{thm_optimal_stopping}, let us define the function
\begin{align}
u(x) = \mathbb{E}_x g(X_{\tau^*}) 
\end{align}
and establish some properties of $u(x)$.

\begin{lemma} \label{lem1_stop}
 $u(x)\geq g(x)$ for all $x\in \mathbb{R}^2$, and 
\begin{align}
u(s,b) = \left\{\begin{array}{l l} 
\beta s- \frac{1}{2}b^4 , &\text{if}~ b^2 \geq \beta;
\vspace{0.5em}\\
\beta s+ \frac{1}{2} \beta ^2- \beta b^2 , &\text{if}~ b^2 < \beta.
\end{array}\right.
\end{align}
\end{lemma}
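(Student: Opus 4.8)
The plan is to evaluate $u(x) = \mathbb{E}_x g(X_{\tau^*})$ directly by splitting on whether the process starts inside or outside the stopping region, and then to verify the claimed inequality by a one-line algebraic comparison. Recall from \eqref{eq_Markov} that $X_t = (s+t, b+W_t)$, from \eqref{eq_fun} that $g(s,b) = \beta s - \frac{1}{2}b^4$, and from \eqref{eq_opt_stop_solution} that $\tau^* = \inf\{t\geq0: |b+W_t|\geq\sqrt{\beta}\}$.

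First I would treat the case $b^2 \geq \beta$. Here the defining condition $|b+W_0| = |b| \geq \sqrt{\beta}$ already holds at $t=0$, so $\tau^* = 0$ almost surely and $X_{\tau^*} = (s,b)$. Consequently $u(s,b) = g(s,b) = \beta s - \frac{1}{2}b^4$, which is exactly the first branch of the claimed formula, and $u(x)\geq g(x)$ holds with equality.

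Next I would handle the case $b^2 < \beta$, where $\tau^*$ is the first exit time of the Brownian motion $b+W_t$ from the open interval $(-\sqrt{\beta},\sqrt{\beta})$. On this event the stopped position satisfies $|b+W_{\tau^*}| = \sqrt{\beta}$, hence $(b+W_{\tau^*})^4 = \beta^2$. Using $X_{\tau^*} = (s+\tau^*, b+W_{\tau^*})$ and linearity of expectation, $u(s,b) = \beta(s + \mathbb{E}[\tau^*]) - \frac{1}{2}\beta^2$. It then remains to compute $\mathbb{E}[\tau^*]$, and I would do so by applying the optional stopping theorem to the martingale $(b+W_t)^2 - t$, whose value at $t=0$ is $b^2$; since $(b+W_{\tau^*})^2 = \beta$, this yields $\beta - \mathbb{E}[\tau^*] = b^2$, i.e. $\mathbb{E}[\tau^*] = \beta - b^2$. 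Substituting gives $u(s,b) = \beta s + \frac{1}{2}\beta^2 - \beta b^2$, the second branch of the formula.

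Finally, the inequality $u(x)\geq g(x)$ in this case reduces to checking that $u(s,b) - g(s,b) = \frac{1}{2}\beta^2 - \beta b^2 + \frac{1}{2}b^4 = \frac{1}{2}(\beta - b^2)^2 \geq 0$, which is immediate. The one technical point to be careful about is the legitimacy of the optional stopping argument: I would first record that the exit time of Brownian motion from a bounded interval is integrable, $\mathbb{E}[\tau^*]<\infty$ (indeed it has finite moments of all orders), and that the stopped process $(b+W_{t\wedge\tau^*})^2 - (t\wedge\tau^*)$ has a first term bounded by $\beta$ and is dominated in absolute value by $\beta + \tau^*$, so dominated convergence justifies passing to the limit $t\to\infty$ in $\mathbb{E}[(b+W_{t\wedge\tau^*})^2 - (t\wedge\tau^*)] = b^2$. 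This integrability bookkeeping is the only real obstacle; everything else is a direct computation.
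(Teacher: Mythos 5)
Your proposal is correct and follows essentially the same route as the paper's proof: the same case split on $b^2 \geq \beta$ versus $b^2 < \beta$, the same evaluation of $u$ on each branch, and the same completion of the square $u(x)-g(x)=\frac{1}{2}(b^2-\beta)^2$ for the inequality. The only cosmetic difference is that the paper cites the standard expected-exit-time formula (Theorem 8.5.5 of Durrett) to get $\mathbb{E}_x\tau^*=\beta-b^2$, whereas you re-derive it by optional stopping on the martingale $(b+W_t)^2-t$ with the dominated-convergence justification spelled out, which is precisely how that cited theorem is proved.
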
 
\begin{proof}
See Appendix \ref{app_optimal_stopping}. 
\end{proof}

A function $f(x)$ is said to be \emph{excessive} for the process $X_t$ if \cite{Shiryaev1978}
\begin{align}
\mathbb{E}_x f(X_t)\leq f(x),~\text{for all}~ t\geq 0, ~x\in \mathbb{R}^2.
\end{align} 
By using the It\^{o}-Tanaka-Meyer formula \cite[Theorem 7.14 and Corollary 7.35]{BMbook10} in stochastic calculus, we can obtain
\begin{lemma}\label{lem2_stop}
The function $u(x)$ is excessive for the process $X_t$.
\end{lemma}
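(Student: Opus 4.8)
The plan is to reduce the excessiveness inequality $\mathbb{E}_x u(X_t)\le u(x)$ to the pointwise statement that the formal generator of $X_t$ applied to $u$ is nonpositive. Since $X_t=(s+t,b+W_t)$ carries a deterministic unit drift in its first coordinate and a standard Wiener motion in its second, its generator acts as $\mathcal{A}=\partial_s+\tfrac12\partial_{bb}$. The strategy is threefold: (i) verify that $u$ is $C^1$, so that crossing the interface $b^2=\beta$ produces no singular (local-time) contribution; (ii) check that $\mathcal{A}u\le0$ off the interface; and (iii) invoke the It\^o--Tanaka--Meyer formula to write $u(X_t)-u(x)$ as a drift term $\int_0^t\mathcal{A}u(X_r)\,dr$ plus a mean-zero stochastic integral, then take expectations.

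For steps (i) and (ii) I would differentiate the two branches of $u$ given in Lemma~\ref{lem1_stop}. In the continuation region $\{b^2<\beta\}$ one has $\partial_s u=\beta$, $\partial_b u=-2\beta b$, $\partial_{bb}u=-2\beta$, while in the stopping region $\{b^2\ge\beta\}$ one has $\partial_s u=\beta$, $\partial_b u=-2b^3$, $\partial_{bb}u=-6b^2$. Matching these across $b=\pm\sqrt{\beta}$ shows that $u$ and $\partial_b u$ are both continuous, so only $\partial_{bb}u$ jumps; hence $u\in C^1$ and its distributional second derivative in $b$ is the bounded, piecewise-continuous function above, with no atomic part on the interface. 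Consequently
\begin{align}
\mathcal{A}u(s,b)=
\begin{cases}
\beta-3b^2, & b^2\ge\beta,\\
0, & b^2<\beta,
\end{cases}\nonumber
\end{align}
and since $\beta-3b^2\le\beta-3\beta=-2\beta\le0$ whenever $b^2\ge\beta$, we get $\mathcal{A}u\le0$ everywhere it is defined.

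For step (iii), because $u$ is $C^1$ with an absolutely continuous (indeed bounded) second-derivative measure in $b$, the It\^o--Tanaka--Meyer formula collapses to the ordinary form: the local-time integral $\tfrac12\int_{\mathbb{R}}L_t^a\,u''(da)$ equals $\tfrac12\int_0^t\partial_{bb}u(X_r)\,dr$ by the occupation-time formula, so the interface contributes nothing beyond the classical drift. This yields
\begin{align}
u(X_t)=u(x)+\int_0^t \mathcal{A}u(X_r)\,dr+\int_0^t \partial_b u(X_r)\,dW_r.\nonumber
\end{align}
Since $\partial_b u(X_r)$ grows at most like $(b+W_r)^3$, the moment bounds of the Wiener process give $\mathbb{E}_x\int_0^t(\partial_b u(X_r))^2\,dr<\infty$, so the stochastic integral is a true martingale with zero mean. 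Taking $\mathbb{E}_x$ and using $\mathcal{A}u\le0$ gives $\mathbb{E}_x u(X_t)=u(x)+\mathbb{E}_x\int_0^t\mathcal{A}u(X_r)\,dr\le u(x)$, which is exactly the excessiveness of $u$.

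The main obstacle is the rigorous justification in step (iii): one must confirm that the jump of $\partial_{bb}u$ across $b=\pm\sqrt{\beta}$ produces no extra term in the stochastic calculus. This is precisely what the $C^1$ regularity guarantees, since the absence of a jump in $\partial_b u$ means the second-derivative measure carries no Dirac mass on the interface, and the It\^o--Tanaka--Meyer local-time term reduces to the Lebesgue integral of the classical second derivative. The remaining technical point is the integrability check ensuring the stochastic integral is a genuine martingale with zero expectation, which follows from the finiteness of all moments of $W_t$.
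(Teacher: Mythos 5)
Your proposal is correct and follows essentially the same route as the paper's proof: verify that $u$ is $C^1$ with only a jump in $\partial_{bb}u$ across $b=\pm\sqrt{\beta}$, apply the It\^o--Tanaka--Meyer formula together with the occupation-time identity to eliminate any local-time contribution, check square-integrability of $\partial_b u(X_r)$ so the stochastic integral is a mean-zero martingale, and conclude from $\partial_s u + \tfrac12\partial_{bb}u \le 0$ off the measure-zero interface. No gaps to report.
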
 

\begin{proof}
See Appendix \ref{app_optimal_stopping1}. 
\end{proof}

Now, we are ready to prove Theorem \ref{thm_optimal_stopping}. 
\begin{proof}[Proof of Theorem \ref{thm_optimal_stopping}]
In Lemma \ref{lem1_stop} and Lemma \ref{lem2_stop}, we have shown that $u(x)=\mathbb{E}_x g(X_{\tau^*})$ is an excessive function and $u(x)\geq g(x)$. In addition, it is known that $\mathbb{P}_x(\tau^*<\infty) = 1$ for all $x\in \mathbb{R}^2$ \cite[Theorem 8.5.3]{Durrettbook10}. These conditions, together with the Corollary to Theorem 1 in \cite[Section 3.3.1]{Shiryaev1978}, imply that $\tau^*$ is an optimal stopping time of \eqref{eq_stop_prob}. This completes the proof.
\end{proof}

A consequence of Theorem \ref{thm_optimal_stopping} is
\begin{corollary}\label{coro_stop}
An optimal solution to \eqref{eq_opt_stopping} is
\begin{align}\label{eq_opt_stop_solution1}
Z_i = \inf \left\{ t \geq 0:\! \left|W_{S_i+Y_i+t}-W_{S_i}\right| \geq \sqrt{\beta}\right\}. 
\end{align}
In addition, this solution satisfies
\begin{align}
&\mathbb{E}[Y_i+Z_i] = \mathbb{E}[\max(\beta,W_Y^2)],\label{eq_expression0}\\
&\mathbb{E}[(W_{S_{i}+Y_i+Z_i}-W_{S_i})^4] = \mathbb{E}[\max(\beta^2,W_Y^4)].\label{eq_expression}
\end{align}
\begin{proof}
See Appendix \ref{app_coro_stop}. 
\end{proof}
\end{corollary}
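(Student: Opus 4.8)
The plan is to read off the expression \eqref{eq_opt_stop_solution1} for $Z_i$ directly from Theorem \ref{thm_optimal_stopping}, and then to evaluate the two moments \eqref{eq_expression0}--\eqref{eq_expression} by conditioning on the signal increment accrued during service. First I would make the reduction of \eqref{eq_opt_stopping} to the optimal stopping problem \eqref{eq_stop_prob} explicit. Write $b = W_{S_i+Y_i}-W_{S_i}$ and let $\tilde W_t = W_{S_i+Y_i+t}-W_{S_i+Y_i}$; by the strong Markov property $\{\tilde W_t, t\geq 0\}$ is a standard Wiener process, and $W_{S_i+Y_i+\tau}-W_{S_i} = b + \tilde W_\tau$. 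Minimizing $\frac12(W_{S_i+Y_i+\tau}-W_{S_i})^4 - \beta(Y_i+\tau)$ is therefore the same as maximizing $\beta(Y_i+\tau) - \frac12(b+\tilde W_\tau)^4 = g(X_\tau)$ for the Markov chain $X_t=(Y_i+t,\, b+\tilde W_t)$ started at $x=(Y_i,b)$. Applying Theorem \ref{thm_optimal_stopping} with this initial state gives the optimal stopping time $\tau^*=\inf\{t\geq 0: |b+\tilde W_t|\geq\sqrt\beta\}$, which is exactly \eqref{eq_opt_stop_solution1}. Since $\tau^*$ is the exit time of a Wiener process from the bounded interval $(-\sqrt\beta,\sqrt\beta)$, it has finite moments of all orders and hence lies in $\mathfrak{M}_{S_i+Y_i}$, so the hypotheses of the theorem are satisfied.

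For the fourth-moment identity I would split on the two regimes governed by the initial displacement $b$. On the event $\{b^2\geq\beta\}$ the threshold is already exceeded at $t=0$, so $Z_i=0$, the terminal increment equals $b$, and $(W_{S_i+Y_i+Z_i}-W_{S_i})^4=b^4=\max(\beta^2,b^4)$. On the event $\{b^2<\beta\}$ the continuous process $b+\tilde W_t$ exits $(-\sqrt\beta,\sqrt\beta)$ exactly at the boundary, so $(W_{S_i+Y_i+Z_i}-W_{S_i})^2=\beta$ and thus $(W_{S_i+Y_i+Z_i}-W_{S_i})^4=\beta^2=\max(\beta^2,b^4)$. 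Combining the two regimes, the terminal fourth power equals $\max(\beta^2,b^4)$ in every sample path; taking expectations and using that $b=W_{S_i+Y_i}-W_{S_i}$ has the same distribution as $W_Y$ yields \eqref{eq_expression}.

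For \eqref{eq_expression0} I would compute $\mathbb{E}[Z_i\mid b]$ via the optional stopping theorem applied to the martingale $(b+\tilde W_t)^2 - t$, which is justified because the stopped process stays in $[-\sqrt\beta,\sqrt\beta]$ and $\mathbb{E}[Z_i]<\infty$. On $\{b^2<\beta\}$ this gives $\beta = b^2 + \mathbb{E}[Z_i\mid b]$, i.e.\ $\mathbb{E}[Z_i\mid b]=\beta-b^2$, while on $\{b^2\geq\beta\}$ one has $Z_i=0$; the two cases combine into $\mathbb{E}[Z_i\mid b]=\max(\beta,b^2)-b^2$. Since $\mathbb{E}[b^2\mid Y_i]=Y_i$ gives $\mathbb{E}[b^2]=\mathbb{E}[Y_i]$, taking total expectation produces $\mathbb{E}[Y_i+Z_i]=\mathbb{E}[Y_i]+\mathbb{E}[\max(\beta,b^2)]-\mathbb{E}[b^2]=\mathbb{E}[\max(\beta,b^2)]=\mathbb{E}[\max(\beta,W_Y^2)]$, which is \eqref{eq_expression0}.

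The main obstacle I anticipate is not any individual calculation but making the reduction of \eqref{eq_opt_stopping} to \eqref{eq_stop_prob} watertight: one must invoke the strong Markov property to guarantee that $\tilde W$ is a standard Wiener process, match the conditioning on $(W_{S_i+Y_i}-W_{S_i},Y_i)$ in \eqref{eq_opt_stopping} to the deterministic initial state $x=(Y_i,b)$ in \eqref{eq_stop_prob}, and verify the square-integrability needed both to apply Theorem \ref{thm_optimal_stopping} and to justify the optional stopping step for $(b+\tilde W_t)^2-t$. Once these measurability and integrability points are settled, the two moment identities follow from the elementary boundary-exit observations above.
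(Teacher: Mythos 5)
Your proposal is correct and follows essentially the same route as the paper's proof in Appendix I: the same substitution $s=Y_i$, $b=W_{S_i+Y_i}-W_{S_i}$ to invoke Theorem \ref{thm_optimal_stopping}, the same pathwise case split on $\{b^2\geq\beta\}$ versus $\{b^2<\beta\}$ giving the terminal fourth power $\max(\beta^2,b^4)$, and the same combination of the expected exit time $\beta-b^2$ with Wald's identity $\mathbb{E}[b^2]=\mathbb{E}[Y_i]$ to get \eqref{eq_expression0}. The only cosmetic differences are that you derive the exit-time formula by optional stopping of $(b+\tilde W_t)^2-t$ where the paper reuses \eqref{eq_expectation} (cited from Durrett) from the proof of Lemma \ref{lem1_stop}, and you explicitly check square-integrability of $\tau^*$, a detail the paper leaves implicit.
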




\subsection{Zero Duality Gap between \eqref{eq_SD} and \eqref{eq_dual}} \label{sec:dual}

Strong duality is established in the following theorem:
\begin{theorem}\label{thm_zero_gap}
If \emph{$c= {\mathsf{mse}}_{\text{opt}}$}, the following assertions are true:
\begin{itemize}
\vspace{0.5em}
\item[(a).] The duality gap between \eqref{eq_SD} and \eqref{eq_dual} is zero, i.e., \emph{$d = p({\mathsf{mse}}_{\text{opt}})$}.
\vspace{0.5em}
\item[(b).] A common optimal solution to \eqref{eq_DPExpected},  \eqref{eq_Simple}, and \eqref{eq_SD} is given by \eqref{eq_opt_solution}  and \eqref{eq_thm1}. The optimal value of \eqref{eq_DPExpected} is given by \eqref{thm_1_obj}.
\end{itemize}
\end{theorem}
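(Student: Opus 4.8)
The plan is to prove strong duality by exhibiting a dual-feasible multiplier that closes the gap, and then to use the resulting saddle point to identify the optimal policy and value. First, recall from Lemma \ref{lem_ratio_to_minus}(a) that setting $c = \mathsf{mse}_{\text{opt}}$ forces $p(\mathsf{mse}_{\text{opt}}) = 0$, while weak duality gives $d \le p(\mathsf{mse}_{\text{opt}}) = 0$. It therefore suffices to produce a $\lambda^* \ge 0$ with $e(\lambda^*) = 0$, for then $d \ge e(\lambda^*) = 0$ sandwiches $d = p(\mathsf{mse}_{\text{opt}}) = 0$, which is part (a).

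To evaluate $e(\lambda) = \inf_{\pi \in \Pi_1} L(\pi;\lambda)$, I would invoke Theorem \ref{thm_solution_form} and Corollary \ref{coro_stop}: for each fixed $\lambda$ the inner infimum is attained by the i.i.d.\ threshold policy \eqref{eq_opt_stop_solution1} with threshold parameter $\beta = \beta(\lambda) = 3(\mathsf{mse}_{\text{opt}} + \lambda - \mathbb{E}[Y])$ from \eqref{eq_beta_new}. Since the resulting $Z_i$ are i.i.d., the time-average limit in \eqref{eq_Lagrangian} collapses to a single per-sample expectation; substituting \eqref{eq_integral}, \eqref{eq_expression0}, and \eqref{eq_expression} and cancelling the terms proportional to $\mathbb{E}[Y]$ via the definition of $\beta$ yields the closed form
\[
e(\lambda) = \frac{1}{6}\mathbb{E}[\max(\beta^2, W_Y^4)] - \frac{\beta}{3}\mathbb{E}[\max(\beta, W_Y^2)] + \frac{\lambda}{f_{\max}}.
\]
Next I would fix $\lambda^*$ by complementary slackness. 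If the rate constraint \eqref{eq_Simple_constraint} is inactive, so that $\mathbb{E}[\max(\beta,W_Y^2)] = \mathbb{E}[\max(\beta^2,W_Y^4)]/(2\beta) \ge 1/f_{\max}$, take $\lambda^* = 0$ and read off $e(0) = 0$ from the display. Otherwise the constraint binds, $\mathbb{E}[\max(\beta,W_Y^2)] = 1/f_{\max}$, and solving \eqref{eq_beta_new} gives a unique $\lambda^* > 0$; direct substitution again returns $e(\lambda^*) = 0$, the $\lambda$-dependent terms cancelling against $\beta/3 = \mathsf{mse}_{\text{opt}} + \lambda^* - \mathbb{E}[Y]$. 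In both cases $\lambda^* \ge 0$ and $e(\lambda^*) = 0$, proving (a); moreover, eliminating $\mathsf{mse}_{\text{opt}}$ and $\lambda^*$ between these relations merges the two cases into the single fixed-point equation \eqref{eq_thm1}.

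For part (b) I would assemble the saddle point: the threshold policy \eqref{eq_opt_solution} with this $\beta$ minimizes $L(\cdot;\lambda^*)$ (stationarity), is primal feasible for \eqref{eq_SD} and \eqref{eq_Simple}, satisfies $\lambda^* \ge 0$ (dual feasibility), and meets $\lambda^*(\mathbb{E}[\max(\beta,W_Y^2)] - 1/f_{\max}) = 0$ (complementary slackness) by construction. The standard strong-duality argument then certifies it as optimal for \eqref{eq_SD} at $c = \mathsf{mse}_{\text{opt}}$, hence by Lemma \ref{lem_ratio_to_minus}(b) for \eqref{eq_Simple}, and therefore for \eqref{eq_DPExpected}; the optimal value \eqref{thm_1_obj} follows by dividing the per-sample cost $\tfrac{1}{6}\mathbb{E}[\max(\beta^2,W_Y^4)] + \mathbb{E}[\max(\beta,W_Y^2)]\mathbb{E}[Y]$ by the per-sample length $\mathbb{E}[\max(\beta,W_Y^2)]$ from \eqref{eq_expression0}--\eqref{eq_expression}. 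I expect the main obstacle to be the strong-duality step itself: confirming that the inner infimum $e(\lambda)$ is genuinely attained by the per-sample stopping rule over all of $\Pi_1$ (not a subclass), and disentangling the apparent circularity whereby $\mathsf{mse}_{\text{opt}}$ appears both in the Lagrangian \eqref{eq_Lagrangian} and in the threshold $\beta$ of \eqref{eq_beta_new}, which requires showing that a consistent, nonnegative pair $(\beta,\lambda^*)$ solving \eqref{eq_thm1} exists.
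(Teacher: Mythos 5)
Your route is the same one the paper takes (Lagrangian duality, the per-sample minimizer from Theorem~\ref{thm_solution_form} and Corollary~\ref{coro_stop}, a complementary-slackness case split, then KKT-type sufficiency for part (b)), and your closed form for $e(\lambda)$ and the renewal-reward computation behind \eqref{thm_1_obj} are correct. The gap is in the central step of your part (a): the claim that $e(\lambda^*)=0$ follows ``by direct substitution'' is circular. In your binding case, substituting $\mathbb{E}[\max(\beta,W_Y^2)]=1/f_{\max}$ and $\beta/3=\mathsf{mse}_{\text{opt}}+\lambda^*-\mathbb{E}[Y]$ into your display cancels only the $\lambda^*$ terms and leaves
\begin{align}
e(\lambda^*)=\frac{1}{6}\mathbb{E}[\max(\beta^2,W_Y^4)]-\frac{1}{f_{\max}}\left(\mathsf{mse}_{\text{opt}}-\mathbb{E}[Y]\right),\nonumber
\end{align}
which vanishes \emph{if and only if} $\mathsf{mse}_{\text{opt}}$ equals $\frac{\mathbb{E}[\max(\beta^2,W_Y^4)]}{6\mathbb{E}[\max(\beta,W_Y^2)]}+\mathbb{E}[Y]$, i.e., the MSE of the candidate threshold policy --- precisely the conclusion of the theorem. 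Your other case is worse: its hypothesis, that $\beta=3(\mathsf{mse}_{\text{opt}}-\mathbb{E}[Y])$ satisfies $2\beta\,\mathbb{E}[\max(\beta,W_Y^2)]=\mathbb{E}[\max(\beta^2,W_Y^4)]$, is again the fixed-point identity to be proven, not a legitimate branch of a dichotomy, and ``otherwise'' does not imply the constraint binds. You flag this circularity at the end as the expected obstacle, but the proposal does not resolve it, so the sandwich $0=e(\lambda^*)\le d\le p(\mathsf{mse}_{\text{opt}})=0$ is never actually closed.

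The paper escapes the circle by arguing in the opposite direction: it never evaluates $e(\lambda^\star)$ at all. It verifies the four geometric-multiplier conditions \eqref{eq_mix_or_not0}--\eqref{eq_KKT_last} (primal feasibility, $\lambda^\star\ge 0$, Lagrangian minimization, complementary slackness), which by \cite[Prop.~6.2.5]{Bertsekas2003} certify that $\pi^\star$ is optimal for \eqref{eq_SD} and that $\lambda^\star$ is a geometric multiplier, hence the gap is zero by \cite[Prop.~6.2.3(b)]{Bertsekas2003}; only \emph{after} optimality is certified does it substitute $\pi^\star$ into \eqref{eq_Simple} (equation \eqref{eq_KKT_1}) to identify $\mathsf{mse}_{\text{opt}}$ and extract \eqref{eq_thm1} as a consistency condition. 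To repair your argument in the same spirit, split on a checkable event rather than on the answer: either the $\lambda=0$ Lagrangian minimizer, with threshold $3(\mathsf{mse}_{\text{opt}}-\mathbb{E}[Y])$, already has rate $\mathbb{E}[\max(\beta(0),W_Y^2)]\ge 1/f_{\max}$, in which case take $\lambda^*=0$; or it does not, in which case the continuity and monotonicity of $\lambda\mapsto\mathbb{E}[\max(\beta(\lambda),W_Y^2)]$ (which tends to $\infty$) yield some $\lambda^*>0$ at which the rate equals $1/f_{\max}$ exactly. In both branches complementary slackness holds trivially, all four conditions are verified without knowing $\mathsf{mse}_{\text{opt}}$ in advance, optimality follows, and the identity $\mathsf{mse}_{\text{opt}}=$ \eqref{thm_1_obj}, the equation \eqref{eq_thm1}, and $e(\lambda^*)=0$ then drop out \emph{a posteriori} rather than being assumed.
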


\begin{proof}[Proof Sketch of Theorem \ref{thm_zero_gap}]
We use \cite[Prop. 6.2.5]{Bertsekas2003} to find a \emph{geometric multiplier} \cite[Definition 6.1.1]{Bertsekas2003} for Problem \eqref{eq_SD}. This tells us that the duality gap between \eqref{eq_SD} and \eqref{eq_dual} must be zero, because otherwise there is no geometric multiplier \cite[Prop. 6.2.3(b)]{Bertsekas2003}.\footnote{Note that geometric multiplier is   different from the traditional Lagrangian multiplier.} This result holds  not only for convex optimization problem, but also for general non-convex optimization and Markov decision problems like \eqref{eq_SD}. 
See Appendix \ref{app_zero_gap} for the details.
\end{proof}


\ignore{
The proof of Theorem  \ref{thm_zero_gap} relies on the analysis of the following mixture policies:
Let $\pi_{\text{mix}}$ be a mixture policy  which adopts policy $\pi_1$ with probability $p$ and $\pi_2$ with  probability $1-p$, where $\pi_1,\pi_2 \in\Pi_1,0\leq p\leq 1$.  However, this mixture policy may satisfy $\pi_{\text{mix}}\notin \Pi_1$, because a mixture of two stopping times is not necessarily a stopping time if the mixture probability $p$ is unknown. To resolve this difficulty, we consider a larger policy space. Let $\Pi_{1,\text{mix}}$ denotes the set of mixture policies, defined as
\begin{align}
\Pi_{1,\text{mix}} =\bigg \{\pi_{\text{mix}}: &~\pi_{\text{mix}} \text{ is the mixture of a number of policies } \nonumber\\ 
&\!\!\!\!\!\!\!\!\!\!\!\!\text{ $\pi_i \in \Pi_1$ such that }  \pi_{\text{mix}}=\pi_i \text{~with probability $p_i$}, \nonumber\\
&\!\!\!\!\!\!\!\!\!\!\!\! \sum_{i}p_i = 1, ~p_i \geq 0\bigg\}.\nonumber
\end{align}
Hence, 
\begin{align}\label{eq_policy_space}
\Pi_1\subset \Pi_{1,\text{mix}}.
\end{align} Consider the following problem with  policy space $\Pi_{1,\text{mix}}$:
\begin{align}\label{eq_SD_mix}
\!\!p_{\text{mix}}\!\triangleq\!\inf_{\pi\in\Pi_{1,\text{mix}}}\!&\lim_{n\rightarrow \infty}\!\frac{1}{n}\!\sum_{i=0}^{n-1}\!\mathbb{E}\!\!\left[\int_{D_{i}}^{D_{i+1}} \!\!\!\!\!\!\!\!(W_t\!-\!W_{S_{i}})^2dt\!-\! {\mathsf{mse}}_{\text{opt}}(Y_i\!+\!Z_i)\!\right]\!\!\!\!\\
\text{s.t.}~~&\lim_{n\rightarrow \infty} \frac{1}{n} 
\sum_{i=0}^{n-1} \mathbb{E}\left[Y_i+Z_i\right]\geq \frac{1}{f_{\max}},\label{eq_SD_constraint_mix}
\end{align}
where $p_{\text{mix}}$ is the optimum value of \eqref{eq_SD_mix}. By  \eqref{eq_policy_space}, we can get
$p_{\text{mix}} \leq p({\mathsf{mse}}_{\text{opt}}).$
Let
\begin{align}\label{eq_primal_mix}
g_{\text{mix}}(\lambda,{\mathsf{mse}}_{\text{opt}}) = \inf_{\pi\in\Pi_{1,\text{mix}}}  L(\pi;\lambda,{\mathsf{mse}}_{\text{opt}}).
\end{align}
Then, the Lagrangian dual problem of \eqref{eq_SD_mix} is defined by
\begin{align}\label{eq_dual_mix}
d_{\text{mix}} \triangleq \max_{\lambda\geq 0}g_{\text{mix}}(\lambda,{\mathsf{mse}}_{\text{opt}}).
\end{align}
Now, we are ready to prove Theorem \ref{thm_zero_gap}:

\begin{proof}[Proof sketch of Theorem \ref{thm_zero_gap}] We first establish the strong duality between \eqref{eq_SD_mix} and \eqref{eq_dual_mix}, i.e., $d_{\text{mix}}= p_{\text{mix}}$. Using this, we can show that the MMSE-optimal  policy in \eqref{eq_opt_solution} and \eqref{eq_thm1} achieves the optimal value $p_{\text{mix}}$ of Problem \eqref{eq_SD_mix}. Because the MMSE-optimal  policy is a \emph{pure} policy in $\Pi_1$, it also achieves the optimal value $p({\mathsf{mse}}_{\text{opt}})$ of Problem \eqref{eq_SD} with $c={\mathsf{mse}}_{\text{opt}}$. From this, we can obtain $d({\mathsf{mse}}_{\text{opt}})=d_{\text{mix}}= p_{\text{mix}} = p({\mathsf{mse}}_{\text{opt}})$.
See Appendix \ref{app_zero_gap} for the details.
\end{proof}}
Hence, Theorem~\ref{thm_1} follows from Theorem \ref{thm_zero_gap}.


\section{Numerical Results}\label{sec:simulation}
In this section, we evaluate the estimation performance achieved by the following four sampling policies:  

\begin{itemize}
\item[1.] \emph{Periodic sampling:} The policy in \eqref{eq_uniform} with $\beta = f_{\max}$. 
\item[2.] \emph{Zero-wait sampling \cite{2015ISITYates,report_AgeOfInfo2016,SunInfocom2016,KaulYatesGruteser-Infocom2012}:} The sampling policy in \eqref{eq_Zero_wait}, which is feasible when $f_{\max}\geq 1/\EE[Y_i]$.

\item[3.] \emph{Age-optimal sampling \cite{report_AgeOfInfo2016,SunInfocom2016}:} The sampling policy in \eqref{eq_thm2_1} and \eqref{eq_thm2}, which is the optimal solution to  \eqref{eq_age}.

\item[4.] \emph{MSE-optimal sampling:} The sampling policy in \eqref{eq_opt_solution} and \eqref{eq_thm1}, which is the optimal solution to \eqref{eq_DPExpected}.
\end{itemize}
Let ${\mathsf{mse}_{\text{periodic}}}$, ${\mathsf{mse}_{\text{zero-wait}}}$, ${\mathsf{mse}_{\text{age-opt}}}$, and ${\mathsf{mse}_{\text{opt}}}$, 
be the MSEs of periodic sampling, zero-wait sampling, age-optimal sampling,  MSE-optimal sampling, respectively. 
According to \eqref{eq_compare_aware}, as well as the facts that periodic sampling is  feasible for \eqref{eq_age} and zero-wait sampling is  feasible for \eqref{eq_age} when $f_{\max}\geq 1/\EE[Y_i]$,  
we can obtain  
\begin{align}
&{\mathsf{mse}_{\text{opt}}}\leq{\mathsf{mse}_{\text{age-opt}}}\leq{\mathsf{mse}_{\text{periodic}}}, \nonumber\\
&{\mathsf{mse}_{\text{opt}}}\leq{\mathsf{mse}_{\text{age-opt}}}\leq{\mathsf{mse}_{\text{zero-wait}}},~\text{when $f_{\max}\geq \frac{1}{\EE[Y_i]}$,} \nonumber
\end{align}
which fit with our numerical results below.

\begin{figure}
\centering \includegraphics[width=0.35\textwidth]{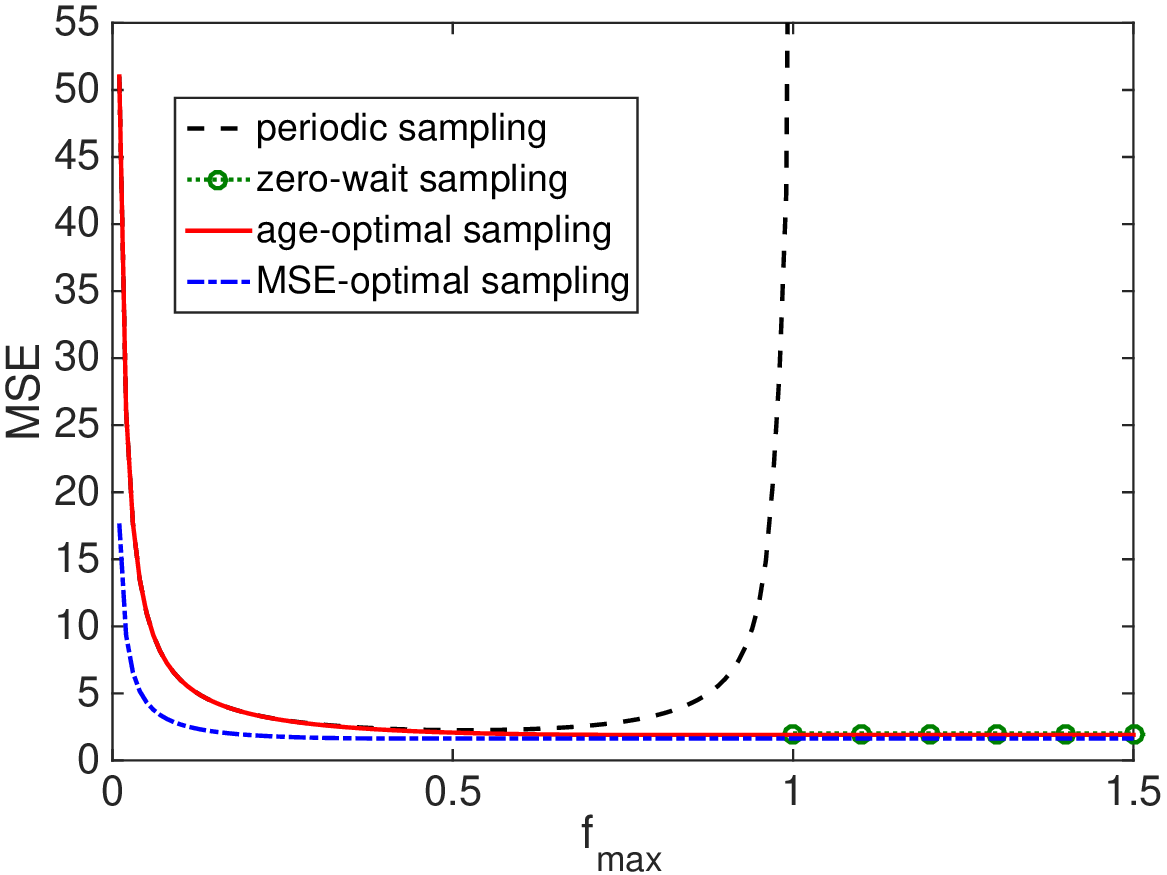} \caption{MSE vs. ${f_{\max}}$ tradeoff for \emph{i.i.d.} exponential service time and $\mathbb{E} [Y_i] = 1$, where  zero-wait sampling is not feasible when $f_{\max}< 1/\EE[Y_i]$  and hence is not plotted in that regime.}
\label{fig1} \vspace{-0.cm}
\end{figure}

\ifreport
\begin{figure}
\centering \includegraphics[width=0.35\textwidth]{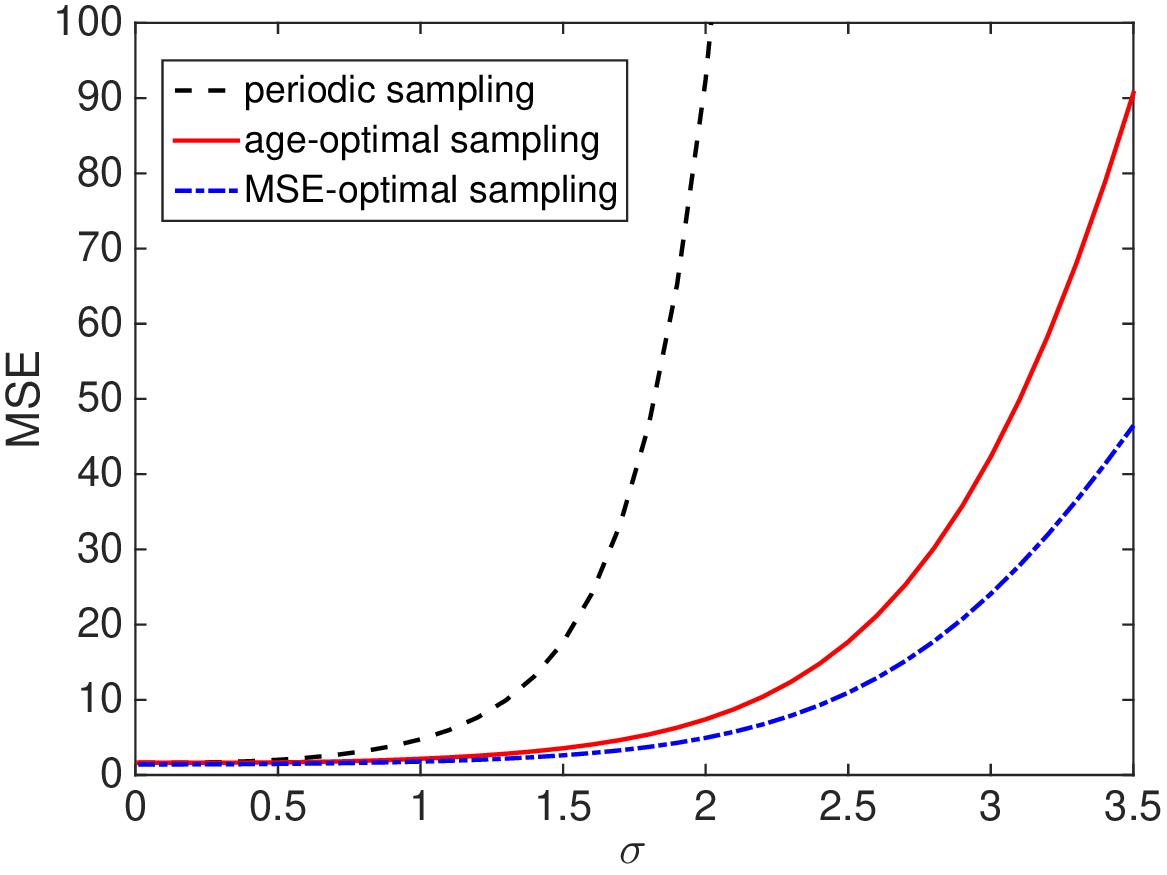} \caption{MSE vs. the scale parameter  $\sigma$ of \emph{i.i.d.} log-normal service time for $f_{\max}=0.8$ and $\mathbb{E} [Y_i] = 1$, where zero-wait sampling is not feasible because $f_{\max}< 1/\EE[Y_i]$. }
\label{fig2} \vspace{-0.cm}
\end{figure}
\else
\fi

\ifreport
\else
\begin{figure}
\centering \includegraphics[width=0.33\textwidth]{./figure1} \caption{MSE vs. ${f_{\max}}$ tradeoff for \emph{i.i.d.} exponential service time.}
\label{fig1} 
\end{figure}
\fi

Figure \ref{fig1} depicts the tradeoff between MSE and $f_{\max}$  for \emph{i.i.d.} exponential service time with mean $\mathbb{E}[Y_i] = 1/\mu=1$. Hence, the maximum throughput of the queue is $\mu=1$. In this setting, ${\mathsf{mse}_{\text{periodic}}}$ is characterized by eq. (25) of \cite{KaulYatesGruteser-Infocom2012}, which was obtained using a D/M/1 queueing model \cite{Kleinrock1976}. For small values of $f_{\max}$,   age-optimal sampling is similar with periodic sampling, and hence ${\mathsf{mse}_{\text{age-opt}}}$ and ${\mathsf{mse}_{\text{periodic}}}$ are of similar values. However, as $f_{\max}$ approaches the maximum throughput $1$, ${\mathsf{mse}_{\text{periodic}}}$ blows up to infinity. 
This is because the queue length in periodic sampling is large at high sampling frequencies, and the samples become stale during their long waiting times in the queue. On the other hand, ${\mathsf{mse}_{\text{opt}}}$ and ${\mathsf{mse}_{\text{age-opt}}}$ decrease with respect to $f_{\max}$. The reason is that the set of feasible policies satisfying the constraint in \eqref{eq_DPExpected} and \eqref{eq_age} becomes larger as $f_{\max}$ grows, and hence the optimal  values of \eqref{eq_DPExpected} and \eqref{eq_age} are decreasing in $f_{\max}$. Moreover, the gap between ${\mathsf{mse}_{\text{opt}}}$ and ${\mathsf{mse}_{\text{age-opt}}}$ is large for small values of $f_{\max}$. The ratio ${\mathsf{mse}_{\text{opt}}}/{\mathsf{mse}_{\text{age-opt}}}$ tends to $1/3$ as $f_{\max}\rightarrow 0$, which is in accordance with \eqref{eq_ratio_MSE}.
As we expected, ${\mathsf{mse}_{\text{zero-wait}}}$ is larger than ${\mathsf{mse}_{\text{opt}}}$ and ${\mathsf{mse}_{\text{age-opt}}}$ when $f_{\max}\geq1$. In summary, {periodic sampling is far from optimal if the sampling rate is too low or sufficiently high; age-optimal sampling is far from optimal if the sampling rate is too low.}


\ifreport

\begin{figure}
\centering \includegraphics[width=0.35\textwidth]{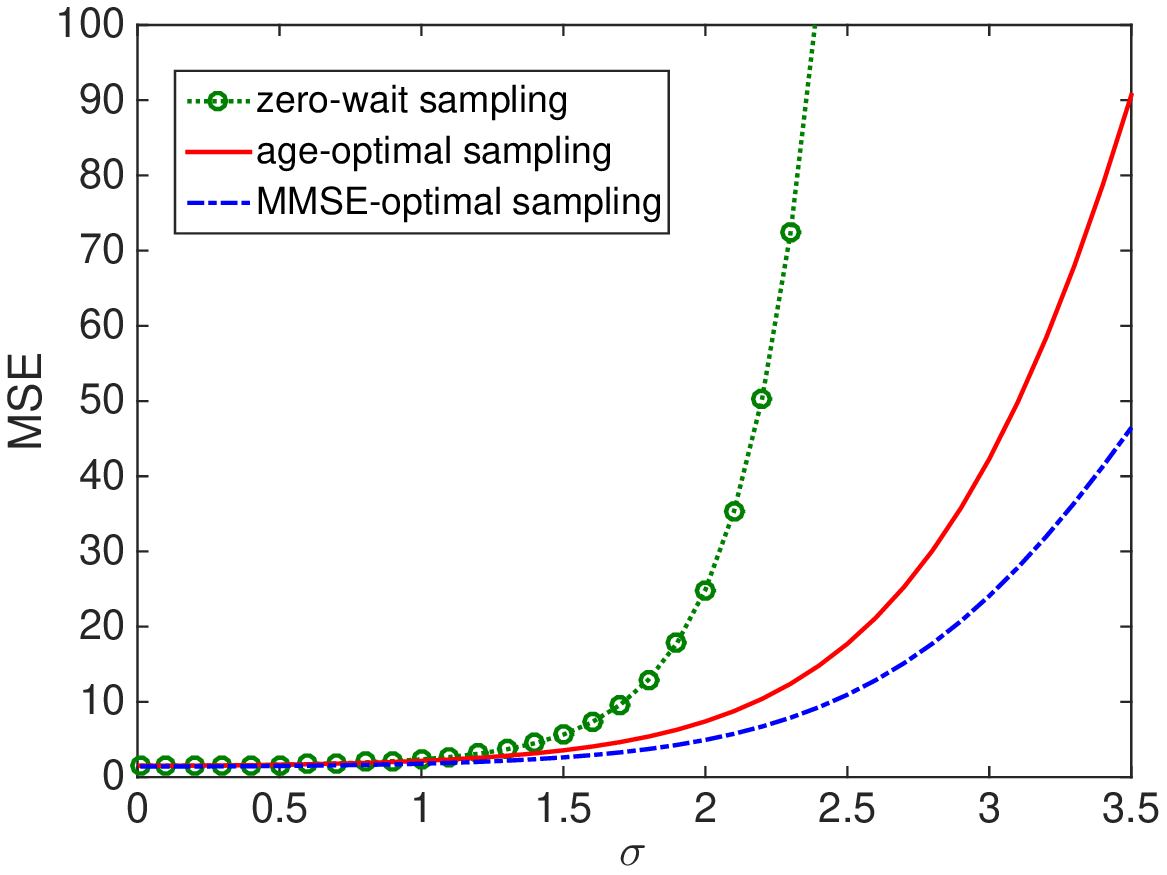} \caption{MSE vs. the scale parameter  $\sigma$ of \emph{i.i.d.} log-normal service time for $f_{\max}=1.5$ and $\mathbb{E} [Y_i] = 1$, where ${\mathsf{mse}_{\text{periodic}}}=\infty$ due to queueing.}
\label{fig3} \vspace{-0.cm}
\end{figure}

Figure \ref{fig2} and Figure \ref{fig3} illustrate the MSE of \emph{i.i.d.} log-normal service time for $f_{\max}=0.8$ and $f_{\max}=1.5$, respectively, where $Y_i = e^{\sigma X_i}/ \mathbb{E}[e^{\sigma X_i}]$, $\sigma>0$ is the scale parameter of log-normal distribution, and $(X_1,X_2,\ldots)$ are \emph{i.i.d.} Gaussian random variables with zero mean and unit variance. Because $ \mathbb{E}[Y_i] = 1$, the maximum throughput of the queue is  $1$. In Fig. \ref{fig2}, since $f_{\max}<1$, zero-wait sampling is not feasible and hence is not plotted. As the scale parameter $\sigma$ grows, the tail of the log-normal distribution becomes heavier and heavier. We observe that ${\mathsf{mse}_{\text{periodic}}}$ grows quickly with respect to $\sigma$, much faster than ${\mathsf{mse}_{\text{opt}}}$ and ${\mathsf{mse}_{\text{age-opt}}}$. 
In addition, the gap between ${\mathsf{mse}_{\text{opt}}}$ and ${\mathsf{mse}_{\text{age-opt}}}$ increases as $\sigma$ grows.
In Fig. \ref{fig3}, because $f_{\max}> 1$, ${\mathsf{mse}_{\text{periodic}}}$ is infinite and hence is not plotted. We can find that ${\mathsf{mse}_{\text{zero-wait}}}$ grows quickly with respect to $\sigma$ and is much larger than ${\mathsf{mse}_{\text{opt}}}$ and ${\mathsf{mse}_{\text{age-opt}}}$. 
In summary, {periodic sampling, age-optimal sampling, and zero-wait sampling policies are all far from optimal if the service times 
follow a heavy-tail distribution}.

\else

Figure \ref{fig2} illustrates the MSE of \emph{i.i.d.} log-normal service time for $f_{\max}=1.5$, where $Y_i = e^{\sigma X_i}/ \mathbb{E}[e^{\sigma X_i}]$, $\sigma>0$ is the scale parameter of log-normal distribution, and $(X_1,X_2,\ldots)$ are \emph{i.i.d.} Gaussian random variables with zero mean and unit variance. Because $ \mathbb{E}[Y_i] = 1$, the maximum throughput of the channel is  $1$. Because $f_{\max}> 1$, ${\mathsf{mse}_{\text{periodic}}}$ is infinite and hence is not plotted. As the scale parameter $\sigma$ grows, the tail of the log-normal distribution becomes heavier and heavier. We observe that ${\mathsf{mse}_{\text{zero-wait}}}$ grows quickly with respect to $\sigma$ and is much larger than ${\mathsf{mse}_{\text{opt}}}$ and ${\mathsf{mse}_{\text{age-opt}}}$. In addition, the gap between ${\mathsf{mse}_{\text{opt}}}$ and ${\mathsf{mse}_{\text{age-opt}}}$ increases as $\sigma$ grows. 
\begin{figure}
\centering \includegraphics[width=0.33\textwidth]{./figure3} \caption{MSE vs. the scale parameter  $\sigma$ of \emph{i.i.d.} log-normal service time for $f_{\max}=1.5$.}
\label{fig2} \vspace{-0.5em}
\end{figure}


\section*{Acknowledgement}
The authors are grateful to Ness B. Shroff and Roy D. Yates for their careful reading of the paper and valuable suggestions.
\fi

\section{Conclusion}\label{conclusion}
In this paper, we have investigated optimal sampling of the Wiener process for remote estimation over a queue. 
The  optimal sampling policy for minimizing the mean square estimation error subject to an average sampling rate constraint has been obtained in a semi-closed form. 
We prove that a threshold-based sampler is optimal and the optimal threshold is found exactly.
Analytical and numerical comparisons with several important sampling policies, including age-optimal sampling, zero-wait sampling, and traditional periodic sampling,  have been provided. 
The results in this paper generalize recent research on age of information by adding a signal-based control model, and generalize existing studies on remote estimation by adding a queueing model with random service times.

\section*{Acknowledgement}
The  authors appreciate Ness B. Shroff and Roy D. Yates for their careful reading of the conference version  of this paper and their valuable suggestions. The authors are also grateful to Aditya Mahajan for pointing out an error in an earlier version of this paper. 


\appendices

\section{Proof of \eqref{eq_esti}}\label{app_estimation}

\ignore{
One can also use the technique in the proof of \eqref{eq_age_MSE}. Find the optimal estimate in the interval $[D_i\wedge T,D_{i+1}\wedge T]$. The estimate is a stochastic process, calculus of variations could be needed, and because it is not functional optimization, calculus of variations may or may not be sufficient. In any case, this is an alternative proof.  
}
We use the calculus of variations to prove \eqref{eq_esti}. Define $x\wedge y = \min\{x,y\}$. 
Let us consider a functional $h$ of the estimate $\hat W_t$, which is defined as 
\begin{align}\label{eq_functional_h}
h (\hat W_t) &= \mathbb{E}\left\{\int_{D_{i}\wedge T}^{D_{i+1}\wedge T} (\hat W_t-W_{t})^2dt\Bigg|(S_j, W_{S_j}, D_j)_{j\leq i}\right\}
\end{align}
 for any $T>0$.
By using Lemma 4 in \cite{report_AgeOfInfo2016}, it is not hard to show that $h (\hat W_t)$ is a convex functional of the estimate $\hat W_t$. In the sequent, we will find the optimal estimate that solves
\begin{align}\label{eq_optimal_estimate_problem}
\min_{\hat W_t} h(\hat W_t).
\end{align}

Let $f_t$ and $g_t$ be two estimates, which are functions of the information available at the estimator $\{S_i, W_{S_i}, D_i: D_i \leq t\}$.
Similar to the one-sided sub-gradient in finite dimensional space, the one-sided G\^ateaux derivative of the functional $h$ in the direction of $g$ at a point $f$ is given by   
\begin{align}
& \delta h(f; g) \nonumber\\
=& \lim_{\epsilon\rightarrow 0^+} \frac{h (f_t + \epsilon g_t)-h (f_t )} {\epsilon} \nonumber\\
=&  \lim_{\epsilon\rightarrow 0^+} \frac{1}{\epsilon}\mathbb{E}\left\{\int_{D_{i}\wedge T}^{D_{i+1}\wedge T} (f_t + \epsilon g_t-W_{t})^2 - (f_t -W_{t})^2dt\right.\nonumber \\
& ~~~~~~~~~~~~~~~~~~~~~~~~~\Bigg|(S_j, W_{S_j}, D_j)_{j\leq i}\Bigg\}.\nonumber\\
=&  \lim_{\epsilon\rightarrow 0^+} \mathbb{E}\left\{\int_{D_{i}\wedge T}^{D_{i+1}\wedge T} 2(f_t -W_{t}) g_t +  \epsilon g_t^2dt\right.\nonumber \\
& ~~~~~~~~~~~~~~~~~~~~~~~~~\Bigg|(S_j, W_{S_j}, D_j)_{j\leq i}\Bigg\}\nonumber
\end{align}
\begin{align}
=& \mathbb{E}\left\{\int_{D_{i}\wedge T}^{D_{i+1}\wedge T}  2(f_t -W_{t}) g_t d t ~\Bigg|(S_j, W_{S_j}, D_j)_{j\leq i}\right\}\nonumber\\
=& \mathbb{E}\left\{\int_{D_{i}\wedge T}^{D_{i+1}\wedge T}  2\left(f_t -\mathbb{E}\left[W_{t}|(S_j, W_{S_j}, D_j)_{j\leq i}\right]\right) g_t d t \right.\nonumber \\ & ~~~~~~~~~~~~~~~~~~~~~~~~~\Bigg|(S_j, W_{S_j}, D_j)_{j\leq i}\Bigg\},\label{eq_derivative}
\end{align}
where the last step follows from the iterated law of expectations. 
According to \cite[p. 710]{Bertsekas}, $f_t$ is an optimal solution to \eqref{eq_optimal_estimate_problem} if and only if
\begin{align}
\delta h(f; g) \geq 0,~~\forall ~g.\nonumber
\end{align}
By $\delta h(f; g)=- \delta h(f; -g)$, we get
\begin{align}\label{eq_gradient111}
\delta h(f; g) = 0,~~\forall ~g.
\end{align}
Since $g_t$ is arbitrary, by \eqref{eq_derivative} and \eqref{eq_gradient111}, the optimal solution to \eqref{eq_optimal_estimate_problem} is 
\begin{align}
f_t =& \mathbb{E}[W_{t}|(S_j, W_{S_j}, D_j)_{j\leq i}],\nonumber\\
=&  W_{S_{i}}+\mathbb{E}[W_{t}-W_{S_{i}}|(S_j, W_{S_j}, D_j)_{j\leq i}] \nonumber\\
&~~~~~~~~~~~~~~~~~~~~~~ t\in[D_{i}\wedge T, D_{i+1}\wedge T). 
\end{align}
Notice that under any online sampling policy $\pi$, $\{S_{j}, W_{S_{j}},D_{j}, $ $  j \leq i\}$ are determined by the source $(W_t,t\in[0, S_{i}])$ and the service times $(Y_1,\ldots,Y_i)$. 
According to (i) the strong Markov property of the Wiener process \cite[Theorem 2.16 and Remark 2.17]{BMbook10} and (ii) the fact that the $Y_i$'s are independent of the Wiener process $W_t$, we obtain that  for any given realization of $(S_{j}, W_{S_{j}},D_{j})_{ j \leq i}$, $\{W_{t} - W_{S_{i}}, t\geq S_i\}$ is a Wiener process. Hence, 
\begin{align}
\mathbb{E}[W_{t}-W_{S_{i}}|(S_j, W_{S_j}, D_j)_{j\leq i}] = 0 
\end{align}
for all $t\geq S_i$. Therefore, the optimal solution to \eqref{eq_optimal_estimate_problem}
is 
\begin{align}\label{eq_estimate11111}
f_t = W_{S_i}, ~\text{if}~t\in[D_{i}\wedge T, D_{i+1}\wedge T), i = 1,2,\ldots
\end{align}

Finally, we note that 
\begin{align}\label{eq_age_MSE3}
&\lim_{n\rightarrow \infty}\sum_{i=0}^n\mathbb{E}\left\{\int_{D_{i}\wedge T}^{D_{i+1}\wedge T} (W_t-\hat W_t)^2dt\right\}\nonumber\\
\overset{}{=} & \lim_{n\rightarrow \infty} \mathbb{E}\left\{\int_0^{D_{n}\wedge T} (W_t-\hat W_t)^2dt\right\}\nonumber\\
\overset{(a)}{=}& \mathbb{E}\left\{\lim_{n\rightarrow \infty}\int_0^{D_{n}\wedge T} (W_t-\hat W_t)^2dt\right\}\nonumber\\
\overset{(b)}{=}&\mathbb{E}\left\{\int_0^{ T} (W_t-\hat W_t)^2dt\right\} 
\end{align}
where in Step (a) we have used the monotonic convergence theorem, and Step (a) is due to $\lim_{n\rightarrow \infty} D_n =\infty$ almost surely, which was obtained in \eqref{eq_infinite}. Hence, the MMSE estimation problem can be formulated as
\begin{align}\label{eq_age_MSE4}
&\min_{\hat W_t}\limsup_{T\rightarrow \infty}\frac{1}{T}\mathbb{E}\left\{\int_0^{ T} (W_t-\hat W_t)^2dt\right\} \nonumber\\
=&\min_{\hat W_t}\limsup_{T\rightarrow \infty}\lim_{n\rightarrow \infty}\frac{1}{T}\sum_{i=0}^n\mathbb{E}\left\{\int_{D_{i}\wedge T}^{D_{i+1}\wedge T} (W_t-\hat W_t)^2dt\right\}.
\end{align}
Recall that \eqref{eq_estimate11111} is the solution of \eqref{eq_functional_h} for any $T>0$. Let $T \rightarrow \infty$ in \eqref{eq_estimate11111}, we obtain that \eqref{eq_esti} is the MMSE estimator for solving \eqref{eq_age_MSE4}. This completes the proof. 

\section{Proof of \eqref{eq_age_MSE}}\label{app_age}
If $\pi$ is independent of $\{W_t,t\in[0,\infty)\}$, the $S_i$'s and $D_i$'s are independent of $\{W_t,t\in[0,\infty)\}$. Define $x\wedge y = \min\{x,y\}$. For any $T>0$, let us consider the term
\begin{align}
&\mathbb{E}\left\{\int_{D_{i}\wedge T}^{D_{i+1}\wedge T} (W_t-\hat W_t)^2dt\right\} \nonumber
\end{align}
in the following two cases:

\emph{Case 1:} If $D_{i}\wedge T\geq S_i$, we can obtain
\begin{align}\label{eq_age_MSE1}
&\mathbb{E}\left\{\int_{D_{i}\wedge T}^{D_{i+1}\wedge T} (W_t-\hat W_t)^2dt\right\} \nonumber\\
\overset{}{=}&\mathbb{E}\left\{\int_{D_{i}\wedge T}^{D_{i+1}\wedge T} (W_t-\hat W_{S_i})^2dt\right\} \nonumber\\
\overset{(a)}{=}& \mathbb{E}\left\{\mathbb{E}\left\{\int_{D_{i}\wedge T}^{D_{i+1}\wedge T} (W_t-W_{S_i})^2dt\Bigg|S_i,D_i,D_{i+1}\right\} \right\} \nonumber\\
\overset{(b)}{=}& \mathbb{E}\left\{\int_{D_{i}\wedge T}^{D_{i+1}\wedge T} \mathbb{E}\left\{(W_t-W_{S_i})^2|S_i,D_i,D_{i+1}\right\} dt\right\} \nonumber\\
\overset{(c)}{=}& \mathbb{E}\left\{\int_{D_{i}\wedge T}^{D_{i+1}\wedge T}  (t - S_i) dt\right\} \nonumber\\
\overset{(d)}{=}& \mathbb{E}\left\{\int_{D_{i}\wedge T}^{D_{i+1}\wedge T} \Delta(t) dt\right\}, 
\end{align} 
where Step (a) is due to the law of iterated expectations, Step (b) is due to Fubini's theorem, Step (c) is due to  the strong Markov property of the Wiener process \cite[Theorem 2.16]{BMbook10} and the fact that $S_i,D_i,D_{i+1}$ are  independent of the Wiener process,
 and Step (d) is due to \eqref{eq_age_def}. 

\emph{Case 2:} If $D_{i}\wedge T<S_i$, then the fact $D_{i}\geq S_i$ implies that $T<S_i \leq D_i \leq D_{i+1}$. Hence, $D_i \wedge T= D_{i+1}\wedge T = T$ and   
\begin{align}
&\mathbb{E}\left\{\int_{D_{i}\wedge T}^{D_{i+1}\wedge T} (W_t-\hat W_t)^2dt\right\}= \mathbb{E}\left\{\int_{D_{i}\wedge T}^{D_{i+1}\wedge T} \Delta(t) dt\right\} = 0.\nonumber
\end{align} 
Therefore, \eqref{eq_age_MSE1} holds in both cases.

By using an argument similar to \eqref{eq_age_MSE3}, we can obtain
\begin{align}\label{eq_age_MSE2}
&\lim_{n\rightarrow \infty}\sum_{i=0}^n\mathbb{E}\left\{\int_{D_{i}\wedge T}^{D_{i+1}\wedge T} \Delta(t) dt\right\}=\mathbb{E}\left\{\int_0^{ T} \Delta(t) dt\right\}.
\end{align} 
Combining \eqref{eq_age_MSE3}-\eqref{eq_age_MSE2}, \eqref{eq_age_MSE} is proven.

\section{Proofs of \eqref{eq_betaasy} and  \eqref{eq_opt_limit_1}}\label{app_low_delay}
If $f_{\max}\rightarrow 0$,  \eqref{eq_thm1} tells us that
\begin{align}
\mathbb{E}[\max(\beta,W_Y^2)] =\frac{1}{f_{\max}},
\end{align}
which implies
\begin{align}
\beta \leq \frac{1}{f_{\max}} \leq \beta+ \mathbb{E}[W_Y^2] = \beta+ \mathbb{E}[Y].
\end{align}
Hence,
\begin{align}
\frac{1}{f_{\max}} -\mathbb{E}[Y] \leq \beta \leq \frac{1}{f_{\max}}.
\end{align}
If $f_{\max}\rightarrow 0$, \eqref{eq_betaasy} follows. 

Because $Y$ is independent of the Wiener process, using the law of iterated expectations and the Gaussian distribution of the Wiener process, we can obtain $\mathbb{E}[W_Y^4]=3\mathbb{E}[Y^2]$ and $\mathbb{E}[W_Y^2]=3\mathbb{E}[Y]$. Hence, 
\begin{align}
\beta\leq &~\mathbb{E}[\max(\beta,W_Y^2)] \leq \beta+ \mathbb{E}[W_Y^2]=\beta+ \mathbb{E}[Y],\nonumber\\
\beta^2\leq &~\mathbb{E}[\max(\beta^2,W_Y^4)] \leq \beta^2+ \mathbb{E}[W_Y^4]=\beta^2+ 3\mathbb{E}[Y^2].\nonumber
\end{align}
Therefore, 
\begin{align}\label{eq_eq_betaasy}
\frac{\beta^2}{\beta+ \mathbb{E}[Y]}\leq \frac{\mathbb{E}[\max(\beta^2,W_Y^4)]}{\mathbb{E}[\max(\beta,W_Y^2)]}\leq \frac{\beta^2+ 3\mathbb{E}[Y^2]}{\beta}.
\end{align}
By combining  \eqref{thm_1_obj}, \eqref{eq_betaasy}, and \eqref{eq_eq_betaasy}, \eqref{eq_opt_limit_1} follows in the case of $f_{\max}\rightarrow 0$.

If $\alpha\rightarrow  0$, then $Y \rightarrow 0$ and $W_Y \rightarrow 0$ with probability one. Hence, $\mathbb{E}[\max(\beta,W_Y^2)] \rightarrow \beta$ and $\mathbb{E}[\max(\beta^2,W_Y^4)] \rightarrow \beta^2$. Substituting these
 into \eqref{eq_thm1} and \eqref{eq_eq_betaasy}, yields
\begin{align}
\lim_{\alpha\rightarrow  0 }\beta = \frac{1}{f_{\max}},~ \lim_{\alpha\rightarrow  0 }\left\{\frac{\mathbb{E}[\max(\beta^2,W_Y^4)]}{6\mathbb{E}[\max(\beta,W_Y^2)]} + \mathbb{E}[Y]\right\} = \frac{1}{6f_{\max}}.\nonumber
\end{align}
By this, \eqref{eq_betaasy} and \eqref{eq_opt_limit_1} are proven in the case of $\alpha\rightarrow 0$.
This completes the proof.

\section{Proof of \eqref{eq_coro_1}}\label{app_scale}
If $f_{\max}\rightarrow\infty$, the sampling rate constraint in \eqref{eq_DPExpected} can be removed. By \eqref{eq_thm1}, the optimal $\beta$ is determined by  \eqref{eq_coro_1}.

If $\alpha\rightarrow\infty$, let us consider the equation
\begin{align} \label{eq_a_equation}
\mathbb{E}[\max(\beta,W_Y^2)] \!=\! \frac{\mathbb{E}[\max(\beta^2,W_Y^4)]}{2\beta}.
\end{align}
If $Y$ grows by $\alpha$ times, then $\beta$ and $\mathbb{E}[\max(\beta,W_Y^2)]$ in \eqref{eq_a_equation} both should grow by $\alpha$ times, and $\mathbb{E}[\max(\beta^2,W_Y^4)]$ in \eqref{eq_a_equation} should grow by $\alpha^2$ times. Hence, if $\alpha\rightarrow\infty$, it holds in \eqref{eq_thm1} that 
\begin{align}
\frac{1}{f_{\max}}\leq\frac{\mathbb{E}[\max(\beta^2,W_Y^4)]}{2\beta}
\end{align} and 
the solution to \eqref{eq_thm1} is given by \eqref{eq_coro_1}. This completes the proof.

\section{Proofs of Theorems \ref{lem_zero_wait1} and \ref{lem_zero_wait2}}\label{app_zerowait}
\begin{proof}[Proof of Theorem \ref{lem_zero_wait1}]
The zero-wait policy can be expressed as \eqref{eq_opt_solution} with $\beta =0$. Because $Y$ is independent of the Wiener process, using the law of iterated expectations and the Gaussian distribution of the Wiener process, we can obtain $\mathbb{E}[W_Y^4]=3\mathbb{E}[Y^2]$. 
According to \eqref{eq_coro_1}, $\beta=0$  if and only if $\mathbb{E}[W_Y^4]=3\mathbb{E}[Y^2]=0$ which is equivalent to $Y=0$ with probability one. This completes the proof.
\end{proof}
\begin{proof}[Proof of Theorem \ref{lem_zero_wait2}]
In the one direction, the zero-wait policy can be expressed as \eqref{eq_thm2_1} with $\beta \leq \text{ess}\inf Y$. If the zero-wait policy is optimal, then  the solution to \eqref{eq_coro_2} must satisfy $\beta \leq \text{ess}\inf Y$, which further implies $\beta\leq Y$ with probability one. From this, we can get 
\begin{align}
 2  \text{ess}\inf Y \mathbb{E}[Y]\geq 2\beta \mathbb{E}[Y] = {\mathbb{E}[Y^2]},
\end{align}
By this, \eqref{eq_zero_wait2} follows.

In the other direction, if \eqref{eq_zero_wait2} holds, we will show that the zero-wait policy is age-optimal by considering the following two cases. 

\emph{Case 1:} $\mathbb{E}[Y]> 0$. By choosing \begin{align}\label{eq_coro_5}
\beta = \frac{\mathbb{E}[Y^2]}{2\mathbb{E}[Y]},
\end{align}we can get $\beta \leq \text{ess}\inf Y$ from \eqref{eq_zero_wait2} 
and hence
\begin{align}\label{eq_coro_6}
\beta \leq  Y
\end{align}
with probability one.
According to \eqref{eq_coro_5} and \eqref{eq_coro_6}, such a $\beta$ is the solution to \eqref{eq_coro_2}. Hence, the zero-wait policy expressed by \eqref{eq_thm2_1} with $\beta \leq \text{ess}\inf Y$ is the age-optimal policy. 

\emph{Case 2:} $\mathbb{E}[Y]= 0$ and hence $Y=0$ with probability one. In this case, $\beta =0$ is the solution to \eqref{eq_coro_2}. Hence, the zero-wait policy expressed by \eqref{eq_thm2_1} with $\beta =0$ is the age-optimal policy. 

Combining these two cases, the proof is completed.
\end{proof}

\ignore{
Lemma \ref{lem_estimation} is proven in two steps:

\emph{Step 1:} We first approximate the Wiener process $W_t$ by a discrete-time random walk, and solve the corresponding discrete-time optimal estimation problem. 
For any positive integer $n$ and real constant $T>0$, let us define 
\begin{align}
T_{n} =\lfloor 2^n T\rfloor +1, \nonumber
\end{align}
where $ \lfloor x\rfloor $ is the largest integer no greater than $x$. 
In addition, define
\begin{align}
S_{i,n} &=(\lfloor 2^n S_i\rfloor +1) 2^{-n}, \nonumber\\
D_{i,n} &=(\lfloor 2^n D_i\rfloor +1) 2^{-n},\nonumber\\
V_{m,n} &= W_{m 2^{-n}}.\nonumber
\end{align}
Hence, $S_{i,n} = (m+1) 2^{-n}$ if $m 2^{-n} \leq S_i <(m+1) 2^{-n}$. It is easy to see that for all $i$ and $n$, $S_{i,n}$ is  a stopping-time with respect to the filtration $\{\mathcal{N}_t^+,t\geq0\}$. For any sampling policy $\pi\in\Pi$, its corresponding  discrete-time sampling policy is given by $\pi_n = (S_{0,n},S_{1,n},\ldots)$.  Hence, $\pi_n\in \Pi$. 

Let $\hat V_{m,n}$ denote the estimate of $V_{m,n}$ in the discrete-time system and  $\mathcal{W}_{m,n}$ denote the $\sigma$-field representing the information that is available to the discrete-time estimator by time $m 2^{-n}$. Thus, for $m\geq 0$
\begin{align}
\mathcal{W}_{m,n} = \sigma(S_{i,n}, W_{S_{i,n}}, D_{i,n}: D_{i,n} \leq m 2^{-n}). \nonumber
\end{align}
Hence, 
\begin{align}\label{eq_discrete_estimation_policy}
\hat V_{m,n}\in \mathcal{W}_{m,n}.
\end{align}
Given any discete-time sampling policy $\pi_n\in \Pi$, we solve the following discrete-time estimation problem:
\begin{align}\label{eq_orthogonality}
\inf_{\hat V_{m,n} \in  \mathcal{W}_{m,n} } \frac{1}{T_n 2^{n}}\mathbb{E}\left\{\sum_{m=0}^{T_n} (V_{m,n}-\hat V_{m,n})^2 \Bigg| \pi_n \right\}. 
\end{align}
According to the classic MMSE estimation theory, e.g., \cite[Section IV.B]{Poor:1994}, 
if $m$ satisfies $D_{i,n}\leq m 2^{-n}<D_{i+1,n}$, the optimal estimation of $V_{m,n}$ is
\begin{align}
\hat{V}_{m,n} = & \mathbb{E}[{V}_{m,n} | \mathcal{W}_{m,n} ] \nonumber\\
= & \mathbb{E}[W_{m 2^{-n}} | S_{j,n}, W_{S_{j,n}}, D_{j,n}: D_{j,n} \leq m 2^{-n} ] \nonumber\\
= & \mathbb{E}[W_{m 2^{-n}} | S_{j,n}, W_{S_{j,n}}, D_{j,n}, j=0,\ldots, i ]. \nonumber
\end{align}
Note that under the  sampling policy $\pi_n$, $(S_{j,n}, W_{S_{j,n}},D_{j,n}, $ $  j \leq i)$ are determined by $(W_t,t\in[0, S_{i,n}])$ and $(Y_1,\ldots,Y_i)$. Mathematically speaking, $(S_{j,n}, W_{S_{j,n}},D_{j,n},$ $  j\leq i)$ is measurable in the $\sigma$-field $\sigma(\mathcal{N}_{S_{i,n}}^+, Y_1,\ldots,Y_i)$.
Because of the strong Markov property of the Wiener process \cite[Theorem 2.16]{BMbook10} and the $Y_i$'s are independent of the Wiener process $W_t$, $\{W_t - W_{S_{i,n}}, t\geq S_{i,n}\}$ is a  Wiener process independent of $\sigma(\mathcal{N}_{S_{i,n}}^+, Y_1,\ldots,Y_i)$. Hence, $\{W_t - W_{S_{i,n}}, t\geq S_{i,n}\}$ is independent of $(S_{j,n}, W_{S_{j,n}}, D_{j,n}, j\leq i)$. By this, if $D_{i,n}\leq m 2^{-n}<D_{i+1,n}$, we have
\begin{align}
\hat{V}_{m,n} = & \mathbb{E}[W_{m 2^{-n}}  \!-\! W_{S_{i,n}} | S_{j,n}, W_{S_{j,n}}, D_{j,n}, j\leq i]  + W_{S_{i,n}}  \nonumber\\
                = & \mathbb{E}[W_{m 2^{-n}}  - W_{S_{i,n}}]+W_{S_{i,n}}  \nonumber\\
                =  & W_{S_{i,n}},\nonumber
\end{align}
where the last equality follows from Wald's identity \cite[Theorem 2.44]{BMbook10}. Hence, the optimal discrete-time estimator 
that solves \eqref{eq_orthogonality} is given by 
\begin{align}
\hat{V}^*_{m,n} =  W_{S_{i,n}}, \text{if}~D_{i,n}\leq m 2^{-n}<D_{i+1,n}, m=0,1,\ldots \nonumber
\end{align}
This can be equivalently expressed as
\begin{align}\label{eq_discrete_estimation_policy2}
&\frac{1}{T_n 2^{n}}\mathbb{E}\left\{\sum_{m=0}^{T_n} (V_{m,n}-\hat V_{m,n}^*)^2 \Bigg| \pi_n \right\}\nonumber\\
 \leq& \frac{1}{T_n 2^{n}}\mathbb{E}\left\{\sum_{m=0}^{T_n} (V_{m,n} -\hat V_{m,n})^2 \Bigg| \pi_n \right\}
\end{align}
for all $n>0$, all discrete-time sampling policy $\pi_n\in\Pi$, and all discrete-time estimation policies $(\hat V_{m,n}, m =0,1,\ldots)$ satisfying \eqref{eq_discrete_estimation_policy}. 

\emph{Step 2:} We now prove Lemma \ref{lem_estimation}. According to \cite[Corollary 1.20]{BMbook10}, almost surely, $W_t$ is uniformly continuous on $[0,T+1]$. As $n\rightarrow \infty$, we have $S_{i,n}\rightarrow S_i$, $D_{i,n}\rightarrow D_i$, $T_n\rightarrow T$. Further, by the right-continuity of $\mathcal{N}_{t}^+$, $\mathcal{N}_{S_{i,n}}^+ \rightarrow \mathcal{N}_{S_{i}}^+$ as $n\rightarrow \infty$. 
Using these, we can obtain
\begin{align}
&\lim_{n\rightarrow \infty}\frac{1}{T_n 2^{n}}\mathbb{E}\left\{\sum_{m=0}^{T_n} (V_{m,n}-\hat V_{m,n}^*)^2 \Bigg| \pi_n \right\}\nonumber\\
=& \frac{1}{T}\mathbb{E}\left\{\int_{0}^{T} (W_t -\hat W^*_t)^2 \Bigg| \pi \right\}\nonumber
\end{align}
and 
\begin{align}
&\lim_{n\rightarrow \infty}\frac{1}{T_n 2^{n}}\mathbb{E}\left\{\sum_{m=0}^{T_n} (V_{m,n}-\hat V_{m,n})^2 \Bigg| \pi_n \right\}\nonumber\\
=& \frac{1}{T}\mathbb{E}\left\{\int_{0}^{T} (W_t -\hat W_t)^2 \Bigg| \pi \right\},\nonumber
\end{align}
where $\hat W^*_t$ is given by the MMSE estimator in \eqref{eq_esti} and $\hat W_t$ is given by any estimation policy in $\Xi$. By \eqref{eq_discrete_estimation_policy2}, 
\begin{align}\label{eq_discrete_estimation_policy1}
\frac{1}{T}\mathbb{E}\left\{\int_{0}^{T} (W_t -\hat W^*_t)^2 \Bigg| \pi \right\} \leq \frac{1}{T}\mathbb{E}\left\{\int_{0}^{T} (W_t -\hat W_t)^2 \Bigg| \pi \right\}
\end{align}
holds for all sampling policy $\pi\in\Pi$ and all estimation policies $(\hat W_t, t \in[0,\infty))$ satisfying \eqref{eq_estimation_policy}. 
Finally, because $T>0$ is arbitrarily chosen, we can take the $\limsup$ as $T\rightarrow \infty$ on both sides of \eqref{eq_discrete_estimation_policy1}. By this, Lemma \ref{lem_estimation} is proven.}

\section{Proof of Lemma \ref{lem_zeroqueue}}\label{app_zeroqueue}


Recall that $S_i$ and and $G_i$ are the sampling time and the service starting time of sample $i$, respectively.
Suppose that in the sampling policy $\pi$, sample $i$ is generated when the server is busy sending another sample, and hence sample $i$  
needs to wait for some time before being submitted to the server, i.e., $S_i<G_i$. 
Let us consider a \emph{virtual} sampling policy $\pi' = \{S_0,\ldots, S_{i-1},G_i, S_{i+1},\ldots\}$ such that the generation time of sample $i$ is postponed from $S_i$ to $G_i$. We call policy $\pi'$ a virtual policy because it may happen that  $G_i > S_{i+1}$. 
However, this will not affect our  proof below. We will show that the MSE of the sampling policy $\pi'$ is smaller than that of the sampling policy $\pi =\{S_0,\ldots, S_{i-1},S_i, S_{i+1},\ldots\}$. 

Note that the Wiener process $\{W_t: t\in[0, \infty)\}$ does not change according to the sampling policy, and the sample delivery times $\{D_0,D_1, D_2,\ldots\}$ remain the same in policy $\pi$ and policy $\pi'$. 
Hence, the only difference between policies $\pi$ and $\pi'$ is that \emph{the generation time of sample $i$ is postponed from $S_i$ to $G_i$}. 
The MMSE estimator under policy $\pi$ is given by \eqref{eq_esti} and the  MMSE estimator under policy $\pi'$ is given by
\begin{align}\label{eq_esti_pi1}
\hat W_t = &\mathbb{E}[W_{t}|(S_j, W_{S_j}, D_j)_{j\leq i-1}, (G_i, W_{G_i}, D_i)]\nonumber\\
 = &\left\{\begin{array}{l l} 0,& t\in[0,D_1);\\
W_{G_i},& t\in[D_i,D_{i+1}); \\
W_{S_j},& t\in[D_j,D_{j+1}),~j\neq i, j\geq 1. \\
\end{array}\right.
\end{align}

Next, we consider a third virtual sampling policy $\pi''$ in which the samples $(W_{G_i}, G_i)$ and  $(W_{S_i}, S_i)$ are both delivered to the estimator at time $D_i$. Clearly, the estimator under policy $\pi''$ has more information than those under policies $\pi$ and $\pi'$. By following the arguments in Appendix \ref{app_estimation}, one can show that the MMSE estimator under policy $\pi''$ is
\begin{align}\label{eq_esti_pi2}
\hat W_t =& \mathbb{E}[W_{t}|(S_j, W_{S_j}, D_j)_{j\leq i}, (G_i, W_{G_i}, D_i)] \nonumber\\
=&\left\{\begin{array}{l l} 0,& t\in[0,D_1);\\
W_{G_i},& t\in[D_i,D_{i+1}); \\
W_{S_j},& t\in[D_j,D_{j+1}),~j\neq i, j\geq 1. \\
\end{array}\right.
\end{align}
Notice that, because of the strong Markov property of Wiener process, the estimator under policy $\pi''$ uses the fresher sample $W_{G_i}$, instead of the stale sample $W_{S_i}$, to construct $\hat W_t$ during $[D_i,D_{i+1})$. Because the estimator under policy $\pi''$ has more information than that under policy $\pi$, one can imagine that policy $\pi''$ has a smaller estimation error than policy $\pi$, i.e., 
for any $T>0$
\begin{align}\label{eq_small_error}
&\mathbb{E}\left\{\int_{D_i\wedge T} ^{D_{i+1}\wedge T} (W_t-W_{S_i})^2dt\right\}\nonumber\\
\geq &\mathbb{E}\left\{\int_{D_i\wedge T} ^{D_{i+1}\wedge T} (W_t-W_{G_i})^2  dt\right\}.
\end{align}
To prove \eqref{eq_small_error}, we invoke the orthogonality principle of the MMSE estimator \cite[Prop. V.C.2]{Poor:1994} under policy $\pi''$ and obtain 
\begin{align}\label{eq_orthogonality}
\mathbb{E}\left\{\int_{D_i\wedge T} ^{D_{i+1}\wedge T}2 (W_t-W_{G_i}) (W_{G_i}- W_{S_i}) dt\right\}=0,
\end{align}
where we have used the fact that  $W_{G_i}$ and $W_{S_i}$ are available by the MMSE estimator under policy $\pi''$.
Next, from \eqref{eq_orthogonality}, we can get
\begin{align}
&\mathbb{E}\left\{\int_{D_i\wedge T} ^{D_{i+1}\wedge T} (W_t-W_{S_i})^2dt\right\} \nonumber\\
= & \mathbb{E}\left\{\int_{D_i\wedge T} ^{D_{i+1}\wedge T} (W_t-W_{G_i})^2 +(W_{G_i}- W_{S_i})^2 dt\right\} \nonumber\\
& + \mathbb{E}\left\{\int_{D_i\wedge T} ^{D_{i+1}\wedge T}2 (W_t-W_{G_i}) (W_{G_i}- W_{S_i}) dt\right\} \nonumber \\
= & \mathbb{E}\left\{\int_{D_i\wedge T} ^{D_{i+1}\wedge T} (W_t-W_{G_i})^2 +(W_{G_i}- W_{S_i})^2 dt\right\} \nonumber\\
\geq &\mathbb{E}\left\{\int_{D_i\wedge T} ^{D_{i+1}\wedge T} (W_t-W_{G_i})^2  dt\right\}. 
\end{align}
In other words, the estimation error of policy $\pi''$ is no greater than that of policy $\pi$. Furthermore, by comparing \eqref{eq_esti_pi1} and \eqref{eq_esti_pi2}, we can see  that the MMSE estimators under policies $\pi''$ and $\pi'$ are exact the same. Therefore, the estimation error of policy $\pi'$ is no greater than that of policy $\pi$.

By repeating the above arguments for all samples $i$ satisfying $S_i<G_i$, one can show that the sampling policy $\pi_1 = \{S_0,G_1,\ldots, G_{i-1},G_i, G_{i+1},\ldots\}$ is better than the sampling policy $\pi =\{S_0,S_1,\ldots, S_{i-1},S_i, S_{i+1},\ldots\}$.
This completes the proof.

\section{Proof of Lemma  \ref{lem_ratio_to_minus}}\label{app_ratio_to_minus}
Part (a) is proven in  two steps:

\emph{Step 1:} We will prove that {$\mathsf{mse}_{\text{opt}} \leq c $ if and only if $p(c)\leq 0$}.

If $\mathsf{mse}_{\text{opt}} \leq c $, then there exists a policy $\pi= (Z_0,Z_1,\ldots)\in\Pi_1$ that is feasible for both \eqref{eq_Simple}
and \eqref{eq_SD}, which satisfies  
\begin{align}\label{eq_step1}
\lim_{n\rightarrow \infty}\frac{\sum_{i=0}^{n-1}\mathbb{E}\left[\int_{D_{i}}^{D_{i+1}} (W_t-W_{S_{i}})^2dt\right]}{ \sum_{i=0}^{n-1} \mathbb{E}\left[Y_i\!+\!Z_i\right]}\leq c.
\end{align}
Hence,
\begin{align}\label{eq_step2}
\lim_{n\rightarrow \infty}\frac{\frac{1}{n}\sum_{i=0}^{n-1}\mathbb{E}\left[\int_{D_{i}}^{D_{i+1}} (W_t-W_{S_{i}})^2dt - c(Y_i+Z_i)\right]}{\frac{1}{n} \sum_{i=0}^{n-1} \mathbb{E}\left[Y_i\!+\!Z_i\right]}\leq0.
\end{align}
Because the inter-sampling times $T_i=Y_i+Z_i$ are regenerative, $\mathbb{E}[{k_{j+1}}-{k_j}]<\infty$ and $0<\mathbb{E}[(S_{k_{j+1}}-S_{k_j})^2]<\infty$ for all $j$, the renewal theory \cite{Ross1996} tells us that the limit 
$\lim_{n\rightarrow \infty}\frac{1}{n} \sum_{i=0}^{n-1} \mathbb{E}\left[Y_i\!+\!Z_i\right]$ exists and is positive.  By this, we get 
\begin{align}\label{eq_step3}
\lim_{n\rightarrow \infty}\frac{1}{n}\sum_{i=0}^{n-1}\mathbb{E}\left[\int_{D_{i}}^{D_{i+1}} (W_t-W_{S_{i}})^2dt - c(Y_i+Z_i)\right] \leq 0.
\end{align}
Therefore, $p(c)\leq 0$.

On the reverse direction, if $p(c)\leq 0$, then there exists a policy $\pi= (Z_0,Z_1,\ldots)\in\Pi_1$ that is feasible for both \eqref{eq_Simple}
and \eqref{eq_SD}, which satisfies \eqref{eq_step3}. Because the limit 
$\lim_{n\rightarrow \infty}\frac{1}{n} \sum_{i=0}^{n-1} \mathbb{E}\left[Y_i\!+\!Z_i\right]$ exists and is positive, from \eqref{eq_step3}, we can derive \eqref{eq_step2} and \eqref{eq_step1}. Hence, $\mathsf{mse}_{\text{opt}} \leq c $. By this, we have proven that {$\mathsf{mse}_{\text{opt}} \leq c $ if and only if $p(c)\leq 0$}. 

\emph{Step 2:} We needs to prove that $\mathsf{mse}_{\text{opt}} < c $ if and only if $p(c)< 0$. This statement can be proven by using the arguments in \emph{Step 1}, in which ``$\leq$'' should be replaced by ``$<$''. Finally, from the statement of \emph{Step 1}, it immediately follows that $\mathsf{mse}_{\text{opt}} > c $ if and only if $p(c)> 0$. This completes the proof of part (a). 

Part (b): We first show that each optimal solution to \eqref{eq_Simple} is  an optimal solution to  \eqref{eq_SD}. 
By the claim of part (a), $p(c)= 0$ is equivalent to $\mathsf{mse}_{\text{opt}} = c $. Suppose that policy $\pi= (Z_0,Z_1,\ldots)\in\Pi_1$ is an optimal solution to  \eqref{eq_Simple}.
Then, $\mathsf{mse}_{\pi} = \mathsf{mse}_{\text{opt}} = c $. Applying this in the arguments of \eqref{eq_step1}-\eqref{eq_step3}, we can show that policy $\pi$ satisfies
\begin{align}
\lim_{n\rightarrow \infty}\frac{1}{n}\sum_{i=0}^{n-1}\mathbb{E}\left[\int_{D_{i}}^{D_{i+1}} (W_t-W_{S_{i}})^2dt - c(Y_i+Z_i)\right] = 0.\nonumber
\end{align}   
This and $p(c)= 0$ imply that  policy $\pi$ is an optimal solution to  \eqref{eq_SD}. 

Similarly, we can prove that each optimal solution to \eqref{eq_SD} is  an optimal solution to  \eqref{eq_Simple}. By this, part (b) is proven.



\section{Proof of Lemma \ref{lem_stop}}\label{applem_stop}
According to Theorem 2.51  of \cite{BMbook10}, 
$W_t^4 - 6\int_0^t W_s^2 ds$ 
is an martingale  of the Wiener process $\{W_t,t\in[0,\infty)\}$. 
Because the minimum of two stopping times is a stopping time and constant times are stopping times \cite{Durrettbook10}, it follows that $t\wedge \tau$ is a bounded stopping time for every $t\in[0,\infty)$, where $x\wedge y = \min\{x,y\}$. Then, it follows from Theorem 8.5.1 of \cite{Durrettbook10} that for every $t\in[0,\infty)$
\begin{align}
\!\!\!\!\mathbb{E}\!\left[\int_0^{t \wedge \tau }\!\!\! W_s^2 ds\right] &\!=\!  \frac{1}{6}\mathbb{E}\left[ W_{t \wedge \tau}^4 \right]. \label{eq_martigale1}
\end{align}
Notice that $\int_0^{t \wedge \tau} W_s^2 ds$ is positive and increasing with respect to $t$. By applying the monotone convergence theorem \cite[Theorem 1.5.5]{Durrettbook10}, we can obtain 
\begin{align}
\lim_{t\rightarrow \infty }\mathbb{E}\left[\int_0^{t \wedge \tau} W_s^2 ds\right] &= \mathbb{E}\left[\int_0^{\tau} W_s^2 ds\right].
\end{align}
Hence, the limit $\lim_{t\rightarrow \infty } \mathbb{E}\left[ W_{t \wedge \tau}^4 \right]$ exists. The remaining task is to show that 
\begin{align}
\lim_{t\rightarrow \infty } \mathbb{E}\left[ W_{t \wedge \tau}^4 \right] = \mathbb{E}\left[ W_{\tau}^4 \right]. \label{eq_goal}
\end{align}
Towards this goal, let us consider
\begin{align}
\!\!&\mathbb{E}\left[ W_{\tau}^4 \right]\nonumber\\
=& \mathbb{E}\left\{ [W_{t\wedge \tau}-(W_{\tau}-W_{t \wedge \tau})]^4 \right\}\nonumber\\
=& \mathbb{E}\left[W_{t\wedge \tau}^4\right]+4 \mathbb{E}\left[W_{t\wedge \tau}^3(W_{\tau}-W_{t \wedge \tau})\right]\nonumber\\
&+6 \mathbb{E}\left[W_{t\wedge \tau}^2(W_{\tau}-W_{t \wedge \tau})^2\right]\nonumber\\
&+4 \mathbb{E}\left[W_{t\wedge \tau}(W_{\tau}-W_{t \wedge \tau})^3\right]+\mathbb{E}\left[(W_{\tau}-W_{t \wedge \tau})^4\right]\nonumber\\
=& \mathbb{E}\left[W_{t\wedge \tau}^4\right]+4 \mathbb{E}\left[W_{t\wedge \tau}^3\right]\mathbb{E}\left[W_{\tau}-W_{t \wedge \tau}\right]\nonumber\\
&+6 \mathbb{E}\left[W_{t\wedge \tau}^2\right]\mathbb{E}\left[(W_{\tau}-W_{t \wedge \tau})^2\right]\nonumber\\
&+4 \mathbb{E}\left[W_{t\wedge \tau}\right]\mathbb{E}\left[(W_{\tau}-W_{t \wedge \tau})^3\right]\!+\!\mathbb{E}\left[(W_{\tau}-W_{t \wedge \tau})^4\right]\!,\!\!\nonumber
\end{align}
where in the last step we have used the strong Markov property of the Wiener process \cite[Theorem 2.16]{BMbook10}. By Wald's lemma for Wiener process \cite[Theorem 2.44 and Theorem 2.48]{BMbook10}, $\mathbb{E}\left[ W_{\tau}\right] = 0$ and $\mathbb{E}\left[ W_{\tau}^2\right] =  \mathbb{E}\left[\tau\right] $ for any stopping time $\tau$ with $\mathbb{E}\left[\tau\right] <\infty$. Hence,
\begin{align}
\mathbb{E}\left[W_{\tau}-W_{t \wedge \tau}\right]=0,\\
\mathbb{E}\left[W_{t\wedge \tau}\right]=0,
\end{align}
which implies 
\begin{align}
\mathbb{E}\left[ W_{\tau}^4 \right] =& \mathbb{E}\left[W_{t\wedge \tau}^4\right]+6 \mathbb{E}\left[W_{t\wedge \tau}^2\right]\mathbb{E}\left[(W_{\tau}-W_{t \wedge \tau})^2\right]\nonumber\\
&+\mathbb{E}\left[(W_{\tau}-W_{t \wedge \tau})^4\right]\nonumber\\
\geq&\mathbb{E}\left[W_{t\wedge \tau}^4\right],
\end{align}
and hence
\begin{align}\label{eq_fatou1}
\mathbb{E}\left[ W_{\tau}^4 \right] \geq \lim_{t\rightarrow \infty}\mathbb{E}\left[W_{t\wedge \tau}^4\right].
\end{align}
On the other hand, by Fatou's lemma \cite[Theorem 1.5.4]{Durrettbook10},
\begin{align}\label{eq_fatou}
 \mathbb{E}\left[ W_{\tau}^4 \right] = 
\mathbb{E}\left[ \liminf_{t\rightarrow \infty } W_{t \wedge \tau}^4 \right]
\leq \lim_{t\rightarrow \infty }  \mathbb{E}\left[ W_{t \wedge \tau}^4 \right] . 
\end{align} 
Combining \eqref{eq_fatou1} and \eqref{eq_fatou}, yields \eqref{eq_goal}. This completes the proof.

\ignore{
by using \eqref{eq_fatou}, we get
\begin{align}
 \mathbb{E}\left[ \sup_{t\in[0,\infty)} W_{t\wedge \tau}^4 \right]  
=& \mathbb{E}\left[ \sup_{t\in[0,\tau]} W_t^4 \right] \nonumber\\
\overset{(a)}{=}& \mathbb{E}\left[ \mathbb{E}\left[\sup_{t\in[0,\tau]} W_t^4 \bigg| \tau \right] \right] \nonumber\\
\overset{(b)}{\leq} & \mathbb{E}\left[ \left(\frac{4}{3}\right)^4 \mathbb{E}\left[ W_\tau^4 \bigg| \tau \right] \right] \nonumber\\
\overset{(c)}{=} & \left(\frac{4}{3}\right)^4 \mathbb{E}\left[ W_{\tau}^4  \right]  \nonumber\\
<& \infty, \nonumber
\end{align}

we combine \eqref{eq_martigale1} and \eqref{eq_martigale2}, and apply  Cauchy-Schwarz inequality to get
\begin{align}
& \mathbb{E}\left[ W_{t \wedge \tau}^4 \right]\nonumber \\
= & \mathbb{E}\left[ 6 (t \wedge \tau) W_{t \wedge \tau}^2 - 3 (t \wedge \tau)^2\right] \nonumber \\
\leq & 6 \sqrt{\mathbb{E}\left[ ( t \wedge \tau)^2\right] \mathbb{E}\left[ W_{t \wedge \tau}^4\right]}  - 3\mathbb{E}\left[ (t \wedge \tau)^2\right].\nonumber
\end{align} 
Let $x = \sqrt{\mathbb{E}\left[ W_{t \wedge \tau}^4 \right]/\mathbb{E}\left[ (t \wedge \tau)^2\right]}$, then $x^2 -6 x +3 \leq 0$.
By the roots and properties of quadratic functions, we obtain $3 - \sqrt{6}\leq x\leq 3 + \sqrt{6}$ and hence
\begin{align}
&  \mathbb{E}\left[ W_{t \wedge \tau}^4 \right]    \leq (3 + \sqrt{6})^2 \mathbb{E}  \left[ (t \wedge \tau)^2\right] \leq (3 + \sqrt{6})^2 \mathbb{E}\left[  \tau^2\right]< \infty. \nonumber 
\end{align} 
Then, we use Fatou's lemma \cite[Theorem 1.5.4]{Durrettbook10} to derive
\begin{align}\label{eq_fatou}
& \mathbb{E}\left[ W_{\tau}^4 \right] \nonumber \\
 = & 
\mathbb{E}\left[ \lim_{t\rightarrow \infty } W_{t \wedge \tau}^4 \right]\nonumber \\
\leq & \liminf_{t\rightarrow \infty }  \mathbb{E}\left[ W_{t \wedge \tau}^4 \right] \nonumber \\
\leq & (3 + \sqrt{6})^2 \mathbb{E}\left[  \tau^2\right]< \infty. 
\end{align} 
{\red  Change this.}

Further, by using \eqref{eq_fatou}, we get
\begin{align}
 \mathbb{E}\left[ \sup_{t\in[0,\infty)} W_{t\wedge \tau}^4 \right]  
=& \mathbb{E}\left[ \sup_{t\in[0,\tau]} W_t^4 \right] \nonumber\\
\overset{(a)}{=}& \mathbb{E}\left[ \mathbb{E}\left[\sup_{t\in[0,\tau]} W_t^4 \bigg| \tau \right] \right] \nonumber\\
\overset{(b)}{\leq} & \mathbb{E}\left[ \left(\frac{4}{3}\right)^4 \mathbb{E}\left[ W_\tau^4 \bigg| \tau \right] \right] \nonumber\\
\overset{(c)}{=} & \left(\frac{4}{3}\right)^4 \mathbb{E}\left[ W_{\tau}^4  \right]  \nonumber\\
<& \infty, \nonumber
\end{align}
where Step (a) and Step (c) are due to the  law of iterated expectations, and Step (b) is due to Doob's inequality  \cite[Theorem 12.30]{BMbook10} and \cite[Theorem 5.4.3]{Durrettbook10}. 
Because $W_{t\wedge \tau}^4 \leq \sup_{t\in[0,\infty)} W_{t\wedge \tau}^4$ and $\sup_{t\in[0,\infty)} W_{t\wedge \tau}^4$ is integrable, \eqref{eq_goal} follows from dominated convergence theorem \cite[Theorem 1.5.6]{Durrettbook10}. This completes the proof.}

\section{Proof of  \eqref{eq_integral}}\label{app_integral}
The following lemma is needed in the proof of  \eqref{eq_integral}:
\begin{lemma}\label{lem_decompose}
For any $\lambda \geq 0$, there exists an optimal solution $(Z_0,Z_1,\ldots)$ to \eqref{eq_primal} in which $Z_i$ is independent of $(W_t, t\in[0,{S_i}])$ for all $i=1,2,\ldots$
\end{lemma}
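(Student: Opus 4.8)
The plan is to exploit the strong Markov property of the Wiener process together with the additive per-sample structure of the Lagrangian. Write $\beta' = {\mathsf{mse}}_{\text{opt}} + \lambda$ and let the per-sample cost be $h_i = \int_{D_i}^{D_{i+1}}(W_t - W_{S_i})^2\,dt - \beta'(Y_i + Z_i)$, so that $L(\pi;\lambda) = \lim_{n\to\infty}\tfrac1n\sum_{i=0}^{n-1}\mathbb{E}[h_i] + \lambda/f_{\max}$. Starting from an arbitrary policy $\pi = (Z_0,Z_1,\dots)\in\Pi_1$, I would show that each $Z_i$ can be replaced by a waiting time that depends only on the \emph{fresh} data generated after time $S_i$, without increasing any $\mathbb{E}[h_i]$; this shows we may restrict the infimum in \eqref{eq_primal} to policies with the stated independence property.

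First I would rewrite $h_i$ in terms of post-$D_i$ quantities. Set $s = Y_i$, $b = W_{D_i} - W_{S_i}$, and let $\tilde B_v = W_{D_i + v} - W_{D_i}$ be the increment process after the delivery time $D_i$. Substituting $t = D_i + v$ gives $h_i = \int_0^{Z_i}(b + \tilde B_v)^2\,dv - \beta'(s + Z_i)$, which is a functional of $(s,b,\tilde B,Z_i)$ alone and never references the past trajectory $(W_t : t\in[0,S_i])$ except, possibly, through the choice of $Z_i$. By the strong Markov property of the Wiener process \cite{BMbook10} and the independence of $Y_i$ from $\{W_t\}$, the triple $(s,b,\tilde B)$ is independent of $\mathcal{F}_{S_i} = \sigma(W_t : t\in[0,S_i])$ and of $(Y_1,\dots,Y_{i-1})$; in particular, conditioned on $\mathcal{G}_i = \sigma(\mathcal{F}_{S_i}, Y_1,\dots,Y_i)$, the process $\tilde B$ is a standard Brownian motion whose law depends on the conditioning only through the scalar $s=Y_i$.

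Consequently, the conditional minimization $\inf_{Z_i}\mathbb{E}[h_i \mid \mathcal{G}_i]$ is, for each fixed value of $Y_i$, one and the same optimal stopping problem for the Brownian motion $\tilde B$ started from displacement $b$, and its optimal value is attained by a single measurable stopping rule $Z_i = \varphi(s,b,\tilde B)$ that does not reference $\mathcal{F}_{S_i}$. Replacing the original $Z_i$ by $\varphi(s,b,\tilde B)$ leaves the conditional optimum unchanged for almost every realization of $\mathcal{G}_i$, hence does not increase $\mathbb{E}[h_i]$; carrying this out for every $i$ with the \emph{same} functional $\varphi$ produces a policy whose cost $L(\pi;\lambda)$ is no larger than that of $\pi$. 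Since $\varphi(s,b,\tilde B)$ is a function of $(Y_i,\, W_{D_i}-W_{S_i},\, \tilde B)$, all of which are independent of $(W_t : t\in[0,S_i])$, the new $Z_i$ has the claimed independence; moreover, using one functional for all $i$ renders $\{Z_i\}$ i.i.d.\ (by the strong Markov property and the i.i.d.\ $Y_i$), so the modified policy is regenerative, lies in $\Pi_1$, and leaves the time-average limit well defined.

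The main obstacle is the conditional-optimization and measurable-selection step: I must justify that the family of conditional stopping problems indexed by $\mathcal{G}_i$ genuinely coincides---this is precisely where the strong Markov property is indispensable, as it makes $\tilde B$ independent of the past---and that an optimal, or if necessary $\varepsilon$-optimal, stopping rule can be taken as a single measurable map $\varphi(s,b,\tilde B)$ common to all realizations. One must also verify that performing the replacement simultaneously for all $i$ does not disturb the limit of Cesàro averages defining $L(\pi;\lambda)$; this follows because each $\mathbb{E}[h_i]$ is individually non-increased and the resulting $Z_i$ form a stationary sequence, so the limit persists by the same renewal/regenerative argument already used in the reduction to \eqref{eq_Simple}.
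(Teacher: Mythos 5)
Your overall strategy coincides with the paper's: isolate the per-sample term of the Lagrangian, use the strong Markov property of the Wiener process together with the i.i.d.\ service times to argue that this term is governed by ``fresh'' data generated after $S_i$, and conclude that the infimum in \eqref{eq_primal} may be restricted to rules depending only on that data. The paper compresses your conditional-optimization and measurable-selection step into a citation of sufficient-statistics results (\cite[p.~252]{Bertsekas2005bookDPVol1}, \cite[Chapter 6]{Kumar1986}); the obstacle you honestly flag at the end is precisely the step those references are invoked to close, so your argument is completable along the same lines.

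There is, however, a concrete error in your rewriting of $h_i$. Since $D_{i+1} = D_i + Z_i + Y_{i+1}$, the substitution $t = D_i + v$ gives
\begin{align}
h_i = \int_0^{Z_i + Y_{i+1}} \bigl(b + \tilde B_v\bigr)^2\, dv - \beta'(s + Z_i),\nonumber
\end{align}
not $\int_0^{Z_i}(b+\tilde B_v)^2\,dv - \beta'(s+Z_i)$: you dropped the estimation error accrued while sample $i+1$ is in service, during which the estimator still uses $W_{S_i}$. For the present lemma the slip is harmless, because $Y_{i+1}$ and the path increments on $[Z_i, Z_i+Y_{i+1}]$ are also independent of $\mathcal{F}_{S_i}$ and of $(Y_1,\ldots,Y_{i-1})$, so the corrected $h_i$ remains a functional of fresh data only and your independence argument goes through verbatim (this is why the paper's per-sample term \eqref{eq_decomposed_term} carries the upper limit $S_i+Y_i+Z_i+Y_{i+1}$). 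But it is not harmless downstream: the omitted piece is exactly what produces the term $\mathbb{E}[Y_i+Z_i]\,\mathbb{E}[Y]$ in \eqref{eq_integral}, and hence the $-\mathbb{E}[Y]$ correction in $\beta = 3(\mathsf{mse}_{\text{opt}} + \lambda - \mathbb{E}[Y])$; with your truncated $h_i$ the per-sample stopping problem, and therefore the optimal threshold, would come out wrong. Note also that $Z_i$ must be chosen without observing $Y_{i+1}$, so $\varphi$ cannot take $Y_{i+1}$ as an argument; one should first average the corrected $h_i$ over $Y_{i+1}$ (using its independence), which is exactly how the paper arrives at the reduced form \eqref{eq_decomposed_term1}.
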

\begin{proof}
Because the $Y_i$'s are \emph{i.i.d.}, $Z_i$ is independent of $Y_{i+1}, Y_{i+2},\ldots$, and the strong Markov property of the Wiener process \cite[Theorem 2.16]{BMbook10}, in the Lagrangian $L(\pi;\lambda)$ the term related to $Z_i$ is
\begin{align}\label{eq_decomposed_term}
&\mathbb{E}\left[\int_{S_{i}+Y_i}^{S_{i}+Y_i+Z_i +Y_{i+1}}\!\!\!\! (W_t-W_{S_{i}})^2dt\!-\! ({\mathsf{mse}}_{\text{opt}}+\lambda)(Y_i+Z_i)\right]\!\!,
\end{align}
which is determined by 
the control decision $Z_i$ and the recent information of the system $\mathcal{I}_i = (Y_i, (W_{S_{i}+t}-W_{S_{i}},$ $t \geq 0))$. According to \cite[p. 252]{Bertsekas2005bookDPVol1} and \cite[Chapter 6]{Kumar1986}, $\mathcal{I}_i$ is a \emph{sufficient statistics} for determining  $Z_i$ in \eqref{eq_primal}. Therefore, there exists an optimal policy $(Z_0,Z_1,\ldots)$ in which $Z_i$ is determined based on only $\mathcal{I}_i$, which is independent of $(W_t: t\in[0,S_i])$. 
This completes the proof.
\end{proof}

\begin{proof}[Proof of \eqref{eq_integral}]
By using \eqref{eq_policyspace} and Lemma \ref{lem_decompose}, we obtain that for
given $Y_i$ and $Y_{i+1}$, $Y_i$ and $Y_i+Z_i+Y_{i+1}$ are stopping times of the time-shifted Wiener process $\{W_{S_{i}+t}-W_{S_{i}},t\geq0\}$. Hence,

\begin{align}\label{eq_Q_i}
{=}& \mathbb{E}\left\{\int_{D_{i}}^{D_{i+1}} (W_t-W_{S_i})^2dt\right\} \nonumber\\
\overset{}{=} & \mathbb{E}\left\{\int_{Y_i}^{Y_i+Z_i+Y_{i+1}} (W_{S_{i}+t}-W_{S_{i}})^2dt\right\} \nonumber\\
\overset{(a)}{=} & \mathbb{E}\left\{\mathbb{E}\left\{\int_{Y_i}^{Y_i+Z_i+Y_{i+1}} (W_{S_{i}+t}-W_{S_{i}})^2dt\Bigg| Y_i, Y_{i+1}\right\}\right\} \nonumber\\
\overset{(b)}{=} &  \frac{1}{6}\mathbb{E}\left\{\mathbb{E}\left\{(W_{S_{i}+Y_i+Z_i+Y_{i+1}}-W_{S_{i}})^4\Bigg| Y_i, Y_{i+1}\right\}\!\!\right\} \nonumber\\
&-\frac{1}{6} \mathbb{E}\left\{\mathbb{E}\left\{(W_{S_{i}+Y_i}-W_{S_{i}})^4\Bigg| Y_i, Y_{i+1}\right\}\!\!\right\} \nonumber\\
\overset{(c)}{=} & \frac{1}{6}\mathbb{E}\left[ (W_{S_{i}+Y_i+Z_i+Y_{i+1}}\!-\!W_{S_{i}})^4 \right] - \frac{1}{6}\mathbb{E}\left[ (W_{S_{i}+Y_i}\!-\!W_{S_{i}})^4 \right]\!\!, 
\end{align}
where   Step (a) and Step (c) are due to the law of iterated expectations, and Step (b) is due to Lemma \ref{lem_stop}. 
Because $S_{i+1} = {S_{i}+Y_i+Z_i}$, we have
\begin{align}
&\mathbb{E}\left[ (W_{S_{i}+Y_i+Z_i+Y_{i+1}}-W_{S_{i}})^4 \right] \nonumber\\
=&\mathbb{E}\left\{[ (W_{S_{i}+Y_i+Z_i}-W_{S_i}) +(W_{S_{i+1}+Y_{i+1}}-W_{S_{i+1}})]^4 \right\} \nonumber\\
=&\mathbb{E}\left[(W_{S_{i}+Y_i+Z_i}-W_{S_i})^4\right] \nonumber\\
&+ 4\mathbb{E}\left[(W_{S_{i}+Y_i+Z_i}-W_{S_i})^3(W_{S_{i+1}+Y_{i+1}}-W_{S_{i+1}})\right]  \nonumber\\
&+6\mathbb{E}\left[ (W_{S_{i}+Y_i+Z_i}-W_{S_i})^2(W_{S_{i+1}+Y_{i+1}}-W_{S_{i+1}})^2\right]   \nonumber\\
&+ 4\mathbb{E}\left[(W_{S_{i}+Y_i+Z_i}-W_{S_i})(W_{S_{i+1}+Y_{i+1}}-W_{S_{i+1}})^3\right] \nonumber\\
&+ \mathbb{E}\left[(W_{S_{i+1}+Y_{i+1}}-W_{S_{i+1}})^4\right] \nonumber\\
=&\mathbb{E}\left[(W_{S_{i}+Y_i+Z_i}-W_{S_i})^4\right] \nonumber\\
&+ 4\mathbb{E}\left[(W_{S_{i}+Y_i+Z_i}-W_{S_i})^3\right]\mathbb{E}\left[(W_{S_{i+1}+Y_{i+1}}-W_{S_{i+1}})\right]  \nonumber\\
&+6\mathbb{E}\left[ (W_{S_{i}+Y_i+Z_i}-W_{S_i})^2\right] \mathbb{E}\left[(W_{S_{i+1}+Y_{i+1}}-W_{S_{i+1}})^2\right]   \nonumber\\
&+ 4\mathbb{E}\left[(W_{S_{i}+Y_i+Z_i}-W_{S_i})\right]\mathbb{E}\left[(W_{S_{i+1}+Y_{i+1}}-W_{S_{i+1}})^3\right] \nonumber\\
&+ \mathbb{E}\left[(W_{S_{i+1}+Y_{i+1}}-W_{S_{i+1}})^4\right], 
\end{align}
where in the last equation we have used the fact that $Y_{i+1}$ is independent of $Y_i$ and $Z_i$, and the strong Markov property of the Wiener process \cite[Theorem 2.16]{BMbook10}. 
By Wald's lemma for Wiener process \cite[Theorem 2.44 and Theorem 2.48]{BMbook10}, $\mathbb{E}\left[ W_{\tau}\right] = 0$ and $\mathbb{E}\left[ W_{\tau}^2\right] =  \mathbb{E}\left[\tau\right] $ for any stopping time $\tau$ with $\mathbb{E}\left[\tau\right] <\infty$. Hence,
\begin{align}
&\mathbb{E}\left[(W_{S_{i+1}+Y_{i+1}}-W_{S_{i+1}})\right] =0,\\
&\mathbb{E}\left[(W_{S_{i}+Y_i+Z_i}-W_{S_i})\right] =0,\\
&\mathbb{E}\left[ (W_{S_{i}+Y_i+Z_i}-W_{S_i})^2\right] \mathbb{E}\left[(W_{S_{i+1}+Y_{i+1}}-W_{S_{i+1}})^2\right] \nonumber\\
&= \mathbb{E}[Y_i+Z_i]\mathbb{E}[Y_{i+1}].
 \end{align}
Therefore, we have
\begin{align}
&\mathbb{E}\left[ (W_{S_{i}+Y_i+Z_i+Y_{i+1}}-W_{S_{i}})^4 \right] \nonumber\\
=&\mathbb{E}\left[(W_{S_{i}+Y_i+Z_i}-W_{S_i})^4\right]\! +\!6\mathbb{E}\left[Y_i\!+\!Z_i\right] \mathbb{E}\left[Y_{i+1}\right]\!\nonumber\\
&+ \!\mathbb{E}\left[(W_{S_{i+1}+Y_{i+1}}-W_{S_{i+1}})^4\right]\!.  
\end{align}
Finally, because $(W_{S_{i}+t}-W_{S_{i}})$ and $(W_{S_{i+1}+t}-W_{S_{i+1}})$ are both Wiener processes, and the $Y_i$'s are \emph{i.i.d.}, 
\begin{align}\label{eq_Q_i1}
\mathbb{E}\left[(W_{S_{i}+Y_{i}}-W_{S_{i}})^4\right] = \mathbb{E}\left[(W_{S_{i+1}+Y_{i+1}}\!-\!W_{S_{i+1}})^4\right].
\end{align}
Combining  \eqref{eq_Q_i}-\eqref{eq_Q_i1}, yields \eqref{eq_integral}.
\end{proof}

\section{Proof of Theorem  \ref{thm_solution_form}}\label{app_solution_form}


By \eqref{eq_integral}, \eqref{eq_decomposed_term}  can be rewritten as 

\begin{align}\label{eq_decomposed_term1}
&\mathbb{E}\left[\int_{S_{i}+Y_i}^{S_{i}+Y_i+Z_i +Y_{i+1}} \!\!\!\!\!\!(W_t-W_{S_{i}})^2dt\!-\! ({\mathsf{mse}}_{\text{opt}}+\lambda)(Y_i+Z_i)\right]\nonumber\\
=&\mathbb{E}\left[\frac{1}{6}(W_{S_{i}+Y_i+Z_i}-W_{S_i})^4\!-\! ({\mathsf{mse}}_{\text{opt}}+\lambda - \mathbb{E}[Y])(Y_i\!+\!Z_i)\right] \nonumber\\
=&\mathbb{E}\left[\frac{1}{6}(W_{S_{i}+Y_i+Z_i}-W_{S_i})^4\!-\! \frac{\beta}{3}(Y_i\!+\!Z_i)\right] \nonumber\\
=&\mathbb{E}\left[\frac{1}{6}[(W_{S_{i}+Y_i}\!-\!W_{S_{i}}) \!+\! (W_{S_{i}+Y_i+Z_i}\!-\!W_{S_{i}+Y_i})]^4\right.\nonumber\\
&- \frac{\beta}{3}(Y_i \!+\! Z_i)\bigg].
\end{align}
Because the $Y_i$'s are \emph{i.i.d.}  and the strong Markov property of the Wiener process \cite[Theorem 2.16]{BMbook10}, the expectation in \eqref{eq_decomposed_term1} is determined by the control decision $Z_i$ and the information $\mathcal{I}_i' = (W_{S_{i}+Y_i}-W_{S_{i}}, Y_i, (W_{S_{i}+Y_i+t}-W_{S_{i}+Y_i},$ $t \geq 0))$. According to \cite[p. 252]{Bertsekas2005bookDPVol1} and \cite[Chapter 6]{Kumar1986}, $\mathcal{I}_i'$ is a \emph{sufficient statistics} for determining the waiting time $Z_i$ in \eqref{eq_primal}. Therefore, there exists an optimal policy $(Z_0,Z_1,\ldots)$ in which $Z_i$ is determined based on only $\mathcal{I}_i'$. 
By this, \eqref{eq_primal} is decomposed into a sequence of per-sample control problems \eqref{eq_opt_stopping}. Combining \eqref{eq_Simple}, \eqref{eq_integral}, and Lemma \ref{lem_ratio_to_minus}, yields ${\mathsf{mse}}_{\text{opt}}\geq \mathbb{E}[Y]$. Hence, $\beta\geq 0$. 

We note that, because the $Y_i$'s are \emph{i.i.d.} and the strong Markov property of the Wiener process, the $Z_i$'s in this optimal policy are  \emph{i.i.d.} Similarly, the $(W_{S_{i}+Y_i+Z_i}-W_{S_i})$'s in this optimal policy are  \emph{i.i.d.}

\section{Proof of Lemma  \ref{lem1_stop}}\label{app_optimal_stopping}
Case 1: If $b^2 \geq \beta$, then \eqref{eq_opt_stop_solution} tells us that 
\begin{align}\label{eq_zero}
\tau^* = 0
\end{align}
 and 
\begin{align}
u(x) = \mathbb{E}[g(X_0)|X_0=x] = g(x) = \beta s  - \frac{1}{2}b^4.
\end{align}
Case 2: If $b^2 < \beta$, then $\tau^* > 0$ and $(b+ W_{\tau^*})^2 = \beta$. Invoking Theorem 8.5.5 in \cite{Durrettbook10}, yields 
\begin{align}\label{eq_expectation}
\mathbb{E}_x \tau^* = - (\sqrt{\beta} - b)(-\sqrt{\beta} -b) = \beta - b^2. 
\end{align}
Using this, we can obtain
\begin{align}
u(x) &= \mathbb{E}_x g(X_{\tau^*}) \nonumber\\
    &= \beta (s + \mathbb{E}_x \tau^*) - \frac{1}{2} \mathbb{E}_x \left[(b+ W_{\tau^*})^4\right] \nonumber\\
& = \beta (s + \beta - b^2)  - \frac{1}{2} \beta^2\nonumber\\
& = \beta s + \frac{1}{2}\beta^2 - b^2\beta.
\end{align}
Hence, in Case 2,
\begin{align}
u(x)-g(x)=\frac{1}{2}\beta^2 - b^2\beta + \frac{1}{2}b^4 =\frac{1}{2} (b^2-\beta)^2\geq 0.
\end{align}
By combining these two cases, Lemma \ref{lem1_stop} is proven. 

\section{Proof of Lemma  \ref{lem2_stop}}\label{app_optimal_stopping1}

The function $u(s,b)$ is continuous differentiable in $(s,b)$. In addition, $\frac{\partial^2}{\partial^2b}u(s,b)$ is continuous everywhere but at $b = \pm\sqrt{\beta}$. 
By the It\^{o}-Tanaka-Meyer formula \cite[Theorem 7.14 and Corollary 7.35]{BMbook10}, we obtain that almost surely 
\begin{align}\label{eq_ito}
&u(s+t, b+W_t) - u(s, b) \nonumber\\
=& \int_0^t \frac{\partial}{\partial b} u(s+r,b+W_{r}) dW_{r}  \nonumber\\
  &+ \int_0^t \frac{\partial}{\partial s} u(s+r,b+W_{r}) d r \nonumber\\
  &+ \frac{1}{2}\int_{-\infty}^\infty L^{a}(t)\frac{\partial^2}{\partial b^2} u(s+r,b+a)da , 
\end{align}
where $L^a(t)$ is the local time that the Wiener process spends at the level $a$, i.e.,
\begin{align}
L^a(t) = \lim_{\epsilon \downarrow 0} \frac{1}{2\epsilon}\int_0^t 1_{\{| W_s -a|\leq \epsilon\}} ds,
\end{align}
and $1_A$ is the indicator function of event $A$.
By the property of local times of the Wiener process \cite[Theorem 6.18]{BMbook10}, we obtain that almost surely
\begin{align}\label{eq_ito}
&u(s+t, b+W_t) - u(s, b) \nonumber\\
=& \int_0^t \frac{\partial}{\partial b} u(s+r,b+W_{r}) dW_{r}  \nonumber\\
  &+ \int_0^t \frac{\partial}{\partial s} u(s+r,b+W_{r}) d r \nonumber\\
  &+ \frac{1}{2}\int_0^t \frac{\partial^2 }{\partial b^2} u(s+r,b+W_{r}) d r. 
 \end{align}
 
Because
 \begin{align}
 \frac{\partial}{\partial b} u(s,b)  = \left\{\begin{array}{l l} 
-2b^3, &\text{if}~ b^2 \geq \beta;
\vspace{0.5em}\\
- 2\beta b, &\text{if}~ b^2 < \beta,
\end{array}\right.
\end{align}
we can obtain that for all $t\geq 0$ and all $x = (s,b)\in \mathbb{R}^2$ 
\begin{align}
&\mathbb{E}_x\left\{\int_0^t \left[\frac{\partial}{\partial b} u(s+r,b+W_{r})\right]^2 dr \right\} <\infty.
\end{align}
This and Theorem 7.11 of \cite{BMbook10} imply that $\int_0^t \frac{\partial}{\partial b} u(s+r,b+W_{r}) dW_{r}$ is a martingale and 
\begin{align}\label{eq_ito_integral}
\mathbb{E}_x\left[\int_0^t \frac{\partial}{\partial b} u(s+r,b+W_{r}) dW_{r}\right] =0,~ \forall~ t\geq0.
 \end{align}   
By combining \eqref{eq_Markov}, \eqref{eq_ito}, and \eqref{eq_ito_integral}, we get
\begin{align} \label{eq_difference}
\mathbb{E}_x \left[u(X_t)\right] \!-\! u(x) = \mathbb{E}_x \left\{\!\int_0^t \!\left[\frac{\partial}{\partial s} u(X_r) \!+\!\frac{1}{2} \frac{\partial^2 }{\partial b^2} u(X_r)\right]\!dr\right\}.
\end{align} 
It is easy to compute that if $b^2> \beta$, 
\begin{align} 
\frac{\partial}{\partial s} u(s,b) + \frac{1}{2}\frac{\partial^2 }{\partial b^2} u(s,b) =\beta -  3b^2 \leq 0; 
 \end{align}
and if $b^2< \beta$, 
\begin{align} 
\frac{\partial}{\partial s} u(s,b) + \frac{1}{2}\frac{\partial^2 }{\partial b^2} u(s,b) =\beta -  \beta = 0.
 \end{align} 
 Hence, 
 \begin{align} \label{eq_gradient}
 \frac{\partial}{\partial s} u(s,b) + \frac{1}{2}\frac{\partial^2 }{\partial b^2} u(s,b) \leq 0
  \end{align} 
 for all $(s,b)\in \mathbb{R}^2$ except for $b =  \pm\sqrt{\beta}$. Since the Lebesgue measure of those $r$ for which $b+W_{r} = \pm\sqrt{\beta}$ is zero,  we get from \eqref{eq_difference} and \eqref{eq_gradient} that
$\mathbb{E}_x \left[u(X_t)\right] \leq u(x)$ for all $x\in \mathbb{R}^2$ and $t\geq 0$. This completes the proof.

\section{Proof of Corollary  \ref{coro_stop}}\label{app_coro_stop}

Because \eqref{eq_opt_stop_solution} is the optimal solution to \eqref{eq_stop_prob}, by choosing $s=Y_i$, $b=W_{S_{i}+Y_i}-W_{S_i}$, and using $W_{S_{i}+Y_i+t}-W_{S_i}$ to replace $W_t$, it is immediate that \eqref{eq_opt_stop_solution1} is the optimal solution to \eqref{eq_opt_stopping}. 

The remaining task is to prove \eqref{eq_expression0} and \eqref{eq_expression}.
According to \eqref{eq_opt_stop_solution1} with $\beta\geq 0$, we have
\begin{align}
&W_{S_{i}+Y_i+Z_i}-W_{S_{i}} \nonumber\\
=&\left\{\begin{array}{l l}
W_{S_{i}+Y_i}-W_{S_{i}},&\text{if}~ |W_{S_{i}+Y_i}-W_{S_{i}}| \geq \sqrt{\beta};\\
\sqrt{\beta},&\text{if}~ |W_{S_{i}+Y_i}-W_{S_{i}}| < \sqrt{\beta}.
\end{array}\right. 
\end{align}
Hence, 
\begin{align}\label{eq_max_1}
\mathbb{E}[(W_{S_{i}+Y_i+Z_i}-W_{S_{i}})^4] = \mathbb{E}[\max(\beta^2,(W_{S_{i}+Y_i}-W_{S_{i}})^4)].
\end{align}

In addition, from \eqref{eq_zero} and \eqref{eq_expectation} we know that if  $|W_{S_{i}+Y_i}-W_{S_{i}}| \geq \sqrt{\beta}$, then \eqref{eq_zero} implies
\begin{align}
\mathbb{E}[Z_i| Y_i] = 0;
\end{align}
otherwise, if  $|W_{S_{i}+Y_i}-W_{S_{i}}| < \sqrt{\beta}$, then  \eqref{eq_expectation} implies
\begin{align}
\mathbb{E}[Z_i| Y_i] = \beta - (W_{S_{i}+Y_i}-W_{S_{i}})^2.
\end{align}
By combining these two cases, we get
\begin{align}
\mathbb{E}[Z_i| Y_i] = \max[\beta - (W_{S_{i}+Y_i}-W_{S_{i}})^2,0].
\end{align}
Using the law of iterated expectations, the strong Markov property of the Wiener process, and  Wald's identity $\mathbb{E}[(W_{S_{i}+Y_i}-W_{S_{i}})^2]=\mathbb{E}[Y_i]$, yields
\begin{align}\label{eq_max_2}
&\mathbb{E}[Z_i+ Y_i] \nonumber\\
=&\mathbb{E}[ \mathbb{E}[Z_i| Y_i] +Y_i]\nonumber\\
=&\mathbb{E}[ \max(\beta - (W_{S_{i}+Y_i}-W_{S_{i}})^2,0) +Y_i] \nonumber\\
=&\mathbb{E}[ \max(\beta - (W_{S_{i}+Y_i}-W_{S_{i}})^2,0) +(W_{S_{i}+Y_i}-W_{S_{i}})^2] \nonumber\\
=&\mathbb{E}[ \max(\beta,(W_{S_{i}+Y_i}-W_{S_{i}})^2)]. 
\end{align}
Finally, because $W_t$ and $W_{S_{i}+t}-W_{S_{i}}$ are of the same distribution,  \eqref{eq_expression0} and \eqref{eq_expression} follow from \eqref{eq_max_2} and \eqref{eq_max_1}, respectively. 
This completes the proof.

\section{Proof of Theorem  \ref{thm_zero_gap}}\label{app_zero_gap}

According to \cite[Prop. 6.2.5]{Bertsekas2003}, if we can find $\pi^{\star} = (Z_0,Z_1,\ldots)$ and $\lambda^{\star}$ satisfying
 the following conditions:
\begin{align}
&\pi^{\star}\in\Pi_1, \lim_{n\rightarrow \infty} \frac{1}{n} \sum_{i=0}^{n-1} \mathbb{E}\left[Y_i+Z_i\right] - \frac{1}{f_{\max}} \geq  0,\label{eq_mix_or_not0}\\
&\lambda^{\star}\geq 0,\label{eq_mix_or_not1}\\
&L(\pi^{\star};\lambda^{\star}) = \inf_{\pi\in\Pi_1}  L(\pi;\lambda^{\star}),\label{eq_mix_or_not}\\
&\lambda^\star \left\{\lim_{n\rightarrow \infty} \frac{1}{n} \sum_{i=0}^{n-1} \mathbb{E}\left[Y_i+Z_i\right] - \frac{1}{f_{\max}}\right\} = 0,\label{eq_KKT_last}
\end{align}
then $\pi^{\star}$ is an optimal solution to the primal problem \eqref{eq_SD} and $\lambda^\star$ is a geometric multiplier \cite{Bertsekas2003} for the primal problem \eqref{eq_SD}. 
Further, if we can find such $\pi^{\star}$ and $\lambda^{\star}$, then the duality gap between \eqref{eq_SD} and \eqref{eq_dual} must be zero, because otherwise there is no geometric multiplier \cite[Prop. 6.2.3(b)]{Bertsekas2003}. 
We note that \eqref{eq_mix_or_not0}-\eqref{eq_KKT_last} are different from the Karush-Kuhn-Tucker (KKT) conditions because of \eqref{eq_mix_or_not}.

The remaining task is to find $\pi^{\star}$ and $\lambda^{\star}$ that satisfies \eqref{eq_mix_or_not0}-\eqref{eq_KKT_last}.
According to Theorem \ref{thm_solution_form} and Corollary \ref{coro_stop}, the  solution $\pi^{\star}$ to
\eqref{eq_mix_or_not} is given by \eqref{eq_opt_stop_solution1} where $\beta = 3({\mathsf{mse}}_{\text{opt}} + \lambda^\star - \mathbb{E}\left[Y \right])$. 
In addition, as shown in the proof of Theorem \ref{thm_solution_form}, the $Z_i$'s in policy $\pi^{\star}$ are  \emph{i.i.d.}
Using \eqref{eq_mix_or_not0}, \eqref{eq_mix_or_not1}, and \eqref{eq_KKT_last}, the value of $\lambda^\star$ can be obtained  by considering two cases: 
If $\lambda^\star >0$, because the $Z_i$'s are \emph{i.i.d.}, we have 
\begin{align}\label{eq_KKT_2}
\lim_{n\rightarrow \infty} \frac{1}{n} 
\sum_{i=0}^{n-1} \mathbb{E}\left[Y_i+Z_i\right] = \mathbb{E}\left[Y_i+Z_i\right]= \frac{1}{f_{\max}}.
\end{align}
If $\lambda^\star =0$, then 
\begin{align}\label{eq_KKT_3}
\lim_{n\rightarrow \infty} \frac{1}{n}\sum_{i=0}^{n-1} \mathbb{E}\left[Y_i+Z_i\right] = \mathbb{E}\left[Y_i+Z_i\right] \geq \frac{1}{f_{\max}}.
\end{align}

Next, we use \eqref{eq_KKT_2}, \eqref{eq_KKT_3}, and $\beta = 3({\mathsf{mse}}_{\text{opt}} + \lambda^\star - \mathbb{E}\left[Y \right])$ to determine $\lambda^\star$. 
To compute ${\mathsf{mse}}_{\text{opt}}$, we substitute policy $\pi^{\star}$ and \eqref{eq_integral} into \eqref{eq_Simple}, which yields 
\begin{align}\label{eq_KKT_1}
&\mathsf{mse}_{\text{opt}}\nonumber\\
=&\lim_{n\rightarrow \infty}\!\!\frac{\sum_{i=0}^{n-1}\mathbb{E}\left[(W_{S_{i}+Y_i+Z_i}\!-\!W_{S_i})^4 \!+\! (Y_i+Z_i) \mathbb{E}[Y] \right]}{ 6\sum_{i=0}^{n-1} \mathbb{E}\left[Y_i\!+\!Z_i\right]} \nonumber\\
=&\frac{\mathbb{E}\left[(W_{S_{i}+Y_i+Z_i}\!-\!W_{S_i})^4\right]}{ 6\mathbb{E}\left[Y_i\!+\!Z_i\right]} + \mathbb{E}[Y],
\end{align}
where in the last equation we have used that the $Z_i$'s are \emph{i.i.d.} and the $(W_{S_{i}+Y_i+Z_i}-W_{S_i})$'s are \emph{i.i.d.}, which were shown  in the proof of Theorem \ref{thm_solution_form}.  
Hence, the value of $\beta = 3({\mathsf{mse}}_{\text{opt}} + \lambda^\star - \mathbb{E}\left[Y \right])$ can be obtained by considering the following two cases:

\emph{Case 1}: If $\lambda^\star >0$, then \eqref{eq_KKT_1} and \eqref{eq_KKT_2} imply that
\begin{align}\label{eq_KKT_4}
&\mathbb{E}\left[Y_i+Z_i\right]= \frac{1}{f_{\max}}, \\
&\beta > 3({\mathsf{mse}}_{\text{opt}} -\mathbb{E}[Y])  = \frac{\mathbb{E}\left[(W_{S_{i}+Y_i+Z_i}-W_{S_i})^4\right]}{ 2\mathbb{E}\left[Y_i\!+\!Z_i\right]}.
\end{align}

\emph{Case 2}: If $\lambda^\star =0$, then \eqref{eq_KKT_1} and \eqref{eq_KKT_3} imply that
\begin{align}
&\mathbb{E}\left[Y_i+Z_i\right]\geq \frac{1}{f_{\max}}, \\
&\beta = 3({\mathsf{mse}}_{\text{opt}} -\mathbb{E}[Y]) = \frac{\mathbb{E}\left[(W_{S_{i}+Y_i+Z_i}-W_{S_i})^4\right]}{ 2\mathbb{E}\left[Y_i\!+\!Z_i\right]}.\label{eq_KKT_5}
\end{align}
Combining \eqref{eq_KKT_4}-\eqref{eq_KKT_5}, yields that $\beta$ is the root of 
\begin{align}\label{eq_equation1111}
\mathbb{E}[Y_i+Z_i] \!=\! \max\left(\frac{1}{f_{\max}}, \frac{\mathbb{E}[(W_{S_{i}+Y_i+Z_i}-W_{S_i})^4]}{2\beta}\right).\!
\end{align}
Substituting \eqref{eq_expression0} and \eqref{eq_expression} into \eqref{eq_equation1111}, we obtain that $\beta$ is the root of \eqref{eq_thm1}. 
Further, \eqref{eq_opt_stop_solution1} can be rewritten as \eqref{eq_opt_solution}. Hence, if we choose $\pi^{\star}$ as the sampling policy in \eqref{eq_opt_solution} and choose $\lambda^{\star} = \beta/3 -{\mathsf{mse}}_{\text{opt}} + \mathbb{E}\left[Y \right]$ where $\beta$ is the root of  \eqref{eq_thm1}, then $\pi^{\star}$ and $\lambda^{\star}$ satisfies \eqref{eq_mix_or_not0}-\eqref{eq_KKT_last}. By using the properties of geometric multiplier mentioned above, \eqref{eq_opt_solution}  and \eqref{eq_thm1} is an optimal solution to the primal problem \eqref{eq_SD}. 

Because the problems \eqref{eq_DPExpected},  \eqref{eq_Simple}, and \eqref{eq_SD} are equivalent, \eqref{eq_opt_solution}  and \eqref{eq_thm1} is also an optimal solution to \eqref{eq_DPExpected} and \eqref{eq_Simple}. 

The optimal objective value $\mathsf{mse}_{\text{opt}}$ is given by \eqref{eq_KKT_1}. Substituting \eqref{eq_expression0} and \eqref{eq_expression} into \eqref{eq_KKT_1}, \eqref{thm_1_obj} follows. This completes the proof.

\ignore{
According to the proofs of Theorem \ref{thm_solution_form} and Corollary \ref{coro_stop}, we can obtain
\begin{align}
\pi^{\star} = \arg\inf_{\pi\in\Pi_{1}}  L(\pi;\lambda^{\star},{\mathsf{mse}}_{\text{opt}}).\nonumber
\end{align}
In addition, by the definition of $\Pi_{1,\text{mix}}$, for all $\lambda\geq 0$
\begin{align}\label{eq_mix_or_not2}
 &\inf_{\pi\in\Pi_{1,\text{mix}}}  L(\pi;\lambda,{\mathsf{mse}}_{\text{opt}}) \nonumber\\
= & \inf_{p_i: p_i\geq 0, \sum_i p_i = 1}\inf_{\pi_i\in\Pi_{1}}  L(\pi_i;\lambda,{\mathsf{mse}}_{\text{opt}}) \nonumber\\
= & \inf_{\pi\in\Pi_{1}}  L(\pi;\lambda,{\mathsf{mse}}_{\text{opt}}).
\end{align}
Hence, \eqref{eq_mix_or_not} is proven. In addition, \eqref{eq_mix_or_not1} and \eqref{eq_KKT_last} follow from \eqref{eq_KKT_2} and \eqref{eq_KKT_3} in the 
proof of Theorem \ref{thm_solution_form}. Combining \eqref{eq_mix_or_not2} with \eqref{eq_primal}, \eqref{eq_dual}, \eqref{eq_primal_mix}, and  \eqref{eq_dual_mix}, yields
\begin{align}\label{eq_result_of_step2}
d_{\text{mix}}= d({\mathsf{mse}}_{\text{opt}}).
\end{align}

\emph{Step 3:} We will show that \emph{$\pi^{\star}$ is an common optimal solution to \eqref{eq_SD} and \eqref{eq_SD_mix} with} $c={\mathsf{mse}}_{\text{opt}}$.

By strong duality and the KKT conditions, we can obtain 
\begin{align}\label{eq_solution_value}
p_{\text{mix}} &= d_{\text{mix}} \nonumber\\
&= g_{\text{mix}}(\lambda^\star,{\mathsf{mse}}_{\text{opt}})\nonumber\\
& = \inf_{\pi\in\Pi_{1,\text{mix}}}  L(\pi;\lambda^\star,{\mathsf{mse}}_{\text{opt}})\nonumber\\
&\overset{(a)}{ =} \lim_{n\rightarrow \infty}\!\frac{1}{n}\!\sum_{i=0}^{n-1}\!\mathbb{E}\!\!\left[\int_{D_{i}}^{D_{i+1}} \!\!\!\!\!\!\!\!(W_t\!-\!W_{S_{i}^\star})^2dt\!-\! {\mathsf{mse}}_{\text{opt}}(Y_i\!+\!Z_i)\!\right],
\end{align}
where Step (a) follows from \eqref{eq_Lagrangian} and \eqref{eq_KKT_last}. By \eqref{eq_solution_value}, $\pi^{\star}$ achieves the optimal value $p_{\text{mix}}$ of \eqref{eq_SD_mix}. 
Finally, because $\pi^{\star}$ is a \emph{pure} policy in $\Pi_1$ and the first inequality of \eqref{eq_mix_or_not1}, $\pi^{\star}$  is feasible for Problem \eqref{eq_SD}. Hence, $\pi^{\star}$ also achieves the optimal value of \eqref{eq_SD} with $c={\mathsf{mse}}_{\text{opt}}$  and
\begin{align}\label{eq_result_of_step21}
p_{\text{mix}} = p({\mathsf{mse}}_{\text{opt}}).
\end{align}
Combining \eqref{eq_result_of_step1}, \eqref{eq_result_of_step2}, and \eqref{eq_result_of_step21}, \eqref{eq_thm_zero_gap} is proven.

\emph{Step 2: We show that the optimal value of $\beta$ satisfies $\beta\geq0$.} According to Lemma \ref{lem_ratio_to_minus} and Lemma \ref{thm_zero_gap}, the optimal $\beta$ satisfies
\begin{align}
\beta = 3({\mathsf{mse}}_{\text{opt}} + \lambda - \mathbb{E}\left[Y \right]),\nonumber
\end{align}
where ${\mathsf{mse}}_{\text{opt}}$ satisfies ${\mathsf{mse}}_{\text{opt}}\geq0$ and $p({\mathsf{mse}}_{\text{opt}})=0$, $\lambda$ and $\pi_{\text{opt}}=(Z_0,Z_1,\ldots)$ satisfy the Karush-Kuhn-Tucker (KKT) conditions of Problem \eqref{eq_SD}. By \eqref{eq_SD} and \eqref{eq_integral}, we have
\begin{align}\label{eq_one_step}
p({\mathsf{mse}}_{\text{opt}}) \geq (\mathbb{E}\left[Y \right] - {\mathsf{mse}}_{\text{opt}})  \mathbb{E}\left[Y_i+Z_i\right],
\end{align}
where we have used that $\mathbb{E}\left[Y_i \right] = \mathbb{E}\left[Y \right]$ and  the $Z_i$'s are  \emph{i.i.d.} Suppose that 
${\mathsf{mse}}_{\text{opt}}<  \mathbb{E}\left[Y \right]$, then \eqref{eq_one_step} leads to $p({\mathsf{mse}}_{\text{opt}})>0$, which contradicts with  $p({\mathsf{mse}}_{\text{opt}})=0$. Hence, it must hold that ${\mathsf{mse}}_{\text{opt}}\geq  \mathbb{E}\left[Y \right]$. This and $\lambda \geq 0$ imply  $\beta \geq 0$.


 \section{Proofs of Corollary \ref{coro_stop}}\label{app_expression}

First, \eqref{eq_opt_stop_solution1} follows directly from Theorem \ref{thm_optimal_stopping}.

The remaining task is to prove \eqref{eq_expression0} and \eqref{eq_expression}. According to \eqref{eq_opt_stop_solution1}, we have
\begin{align}
&W_{S_{i}+Y_i+Z_i}-W_{S_{i}} \nonumber\\
=&\left\{\begin{array}{l l}
W_{S_{i}+Y_i}-W_{S_{i}},&\text{if}~ |W_{S_{i}+Y_i}-W_{S_{i}}| \geq \sqrt{\beta};\\
\sqrt{\beta},&\text{if}~ |W_{S_{i}+Y_i}-W_{S_{i}}| < \sqrt{\beta}.
\end{array}\right. \nonumber
\end{align}
Hence, 
\begin{align}\label{eq_max_1}
\mathbb{E}[(W_{S_{i}+Y_i+Z_i}-W_{S_{i}})^4] = \mathbb{E}[\max(\beta^2,(W_{S_{i}+Y_i}-W_{S_{i}})^4)].
\end{align}
In addition, from \eqref{eq_zero} and \eqref{eq_expectation} we know that if  $|W_{S_{i}+Y_i}-W_{S_{i}}| \geq \sqrt{\beta}$
\begin{align}
\mathbb{E}[Z_i| Y_i] = 0;\nonumber
\end{align}
otherwise,
\begin{align}
\mathbb{E}[Z_i| Y_i] = \beta - (W_{S_{i}+Y_i}-W_{S_{i}})^2.\nonumber
\end{align}
Hence,
\begin{align}
\mathbb{E}[Z_i| Y_i] = \max[\beta - (W_{S_{i}+Y_i}-W_{S_{i}})^2,0].\nonumber
\end{align}
Using the law of iterated expectations, the strong Markov property of the Wiener process, and  Wald's identity $\mathbb{E}[(W_{S_{i}+Y_i}-W_{S_{i}})^2]=\mathbb{E}[Y_i]$, yields
\begin{align}\label{eq_max_2}
&\mathbb{E}[Z_i+ Y_i] \nonumber\\
=&\mathbb{E}[ \mathbb{E}[Z_i| Y_i] +Y_i]\nonumber\\
=&\mathbb{E}[ \max(\beta - (W_{S_{i}+Y_i}-W_{S_{i}})^2,0) +Y_i] \nonumber\\
=&\mathbb{E}[ \max(\beta - (W_{S_{i}+Y_i}-W_{S_{i}})^2,0) +(W_{S_{i}+Y_i}-W_{S_{i}})^2] \nonumber\\
=&\mathbb{E}[ \max(\beta,(W_{S_{i}+Y_i}-W_{S_{i}})^2)]. 
\end{align}
Finally, because $W_t$ and $W_{S_{i}+t}-W_{S_{i}}$ are of the same distribution,  \eqref{eq_expression0} and \eqref{eq_expression} follow from \eqref{eq_max_2} and \eqref{eq_max_1}, respectively. }

\bibliographystyle{IEEEtran}
\bibliography{ref,sueh,AgeofRealtimeMeasurements}

\begin{IEEEbiographynophoto}{Yin Sun} (S'08-M'11) is an assistant professor in the Department of Electrical and Computer Engineering at Auburn University, Alabama. He received his B.Eng. and Ph.D. degrees in Electronic Engineering from Tsinghua University, in 2006 and 2011, respectively. He was a postdoctoral scholar and research associate at the Ohio State University during 2011-2017. His research interests include wireless communications, communication networks, information freshness, information theory, and machine learning. He is the founding co-chair of the first and second Age of Information Workshops, in conjunction with the IEEE INFOCOM 2018 and 2019. The papers he co-authored received the best student paper award at IEEE/IFIP WiOpt 2013 and the best paper award at IEEE/IFIP WiOpt 2019.
\end{IEEEbiographynophoto}

\begin{IEEEbiographynophoto}{Yury Polyanskiy} (S'08-M'10-SM'14) is an Associate Professor of Electrical
Engineering and Computer Science and a member of the Laboratory for
Information and Decision Systems at the Massachusetts Institute of Technology,
Cambridge, MA, USA. He received the B.S. and M.S. degrees
in applied mathematics and physics from the Moscow Institute of Physics
and Technology, Moscow, Russia, in 2003 and 2005, respectively, and the
Ph.D. degree in electrical engineering from Princeton University, Princeton,
NJ, USA, in 2010. Currently, his research focuses on basic questions in
information theory, error-correcting codes, wireless communication, and fault-tolerant
computation. Dr. Polyanskiy won the 2013 NSF
CAREER award and the 2011 IEEE Information Theory Society Paper Award.
\end{IEEEbiographynophoto}

\begin{IEEEbiographynophoto}{Elif Uysal} (S'95-M'03-SM'13) is a Professor in the Department of Electrical and  
Electronics Engineering at the Middle East Technical University  
(METU), in Ankara, Turkey. She received the Ph.D. degree in EE from  
Stanford University in 2003, the S.M. degree in EECS from the  
Massachusetts Institute of Technology (MIT) in 1999 and the B.S.  
degree from METU in 1997. From 2003-05 she was a lecturer at MIT, and  
from 2005-06 she was an Assistant Professor at the Ohio State  
University (OSU). Since 2006, she has been with METU, and held  
visiting positions at OSU and MIT during 2014-2016. Her research  
interests are at the junction of communication and networking  
theories, with particular application to energy-efficient wireless  
networking. Dr. Uysal is a recipient of the 2014 Young Scientist Award  
from the Science Academy of Turkey, an IBM Faculty Award (2010), the  
Turkish National Science Foundation Career Award (2006), an NSF  
Foundations on Communication research grant (2006-2010), the MIT  
Vinton Hayes Fellowship, and the Stanford Graduate Fellowship. She is  
an editor for the IEEE/ACM Transactions on Networking, and served as  
associate editor for the IEEE Transactions on Wireless Communication  
(2014-2018) and track 3 co-chair for IEEE PIMRC 2019. Her current  
research is supported by TUBITAK, Huawei, and Turk Telekom.
\end{IEEEbiographynophoto}


\end{document}